\newcommand{\naturals}{\mathbb{N}}
\newcommand{\rationals}{\mathbb{Q}}
\newcommand{\reals}{\mathbb{R}}
\newcommand{\Astruct}{\mathbb{A}}
\newcommand{\Bstruct}{\mathbb{B}}
\newcommand{\Cstruct}{\mathbb{C}}
\newcommand{\tp}{\mathrm{tp}}
\newcommand{\Dom}{\mathrm{Dom}}
\newcommand{\innprod}[2]{\langle #1,#2 \rangle}
\newcommand{\bits}{\mathbb{B}}
\newcommand{\vectors}{\mathrm{vec}}
\newcommand{\disjointunion}{\mathbin{\dot{\cup}}}
\newcommand{\Sphere}[2]{S(#1,#2)}
\newcommand{\infnorm}[1]{||#1||_{\infty}}
\newcommand{\Iprod}[2]{\langle{#1},{#2}\rangle}
\newcommand{\norminf}[1]{\lVert #1 \rVert_{\infty}}
\newcommand{\normone}[1]{\lVert #1 \rVert_{1}}
\newcommand{\normtwo}[1]{\lVert #1 \rVert_{2}}
\newcommand{\identitymatrix}{\mathrm{I}}
\newcommand{\allonesmatrix}{\mathrm{J}}
\newcommand{\POP}{\mathrm{POP}}
\newcommand{\Sol}{\mathrm{SOL}}
\newcommand{\SDP}{\mathrm{SDP}}
\newcommand{\LP}{\mathrm{LP}}
\newcommand{\opt}{\mathrm{opt}}
\newcommand{\ISO}{\mathrm{ISO}}
\newcommand{\CC}{\mathrm{CC}}
\newcommand{\CCC}{\mathrm{CC'}}
\newcommand{\SA}{\mathrm{SA}}
\newcommand{\PC}{\mathrm{PC}}
\newcommand{\SOS}{\mathrm{SOS}}
\newcommand{\Cinf}[1]{\mathrm{C}^{#1}_{\infty\omega}}
\newcommand{\FPC}{\mathrm{FPC}}
\newcommand{\FOC}{\mathrm{FOC}}
\newtheorem{theorem}{Theorem}[section]
\newtheorem{lemma}[theorem]{Lemma}
\newtheorem{fact}[theorem]{Fact}
\newtheorem{corollary}[theorem]{Corollary}
\newtheorem{claim}[theorem]{Claim}
\newtheorem{proposition}[theorem]{Proposition}
\numberwithin{equation}{section}
\begin{document}

\title{\bf Definable Ellipsoid Method, Sums-of-Squares Proofs, and the
  Graph Isomorphism Problem}

\author{Albert Atserias \\
Universitat Polit\`ecnica de Catalunya \\
 Barcelona, Catalonia, Spain
\and
Joanna Fijalkow \\
University of Bordeaux,
 CNRS, Bordeaux INP, LaBRI, UMR 5800 \\
  F-33400 Talence, France, and \\ Institute of Informatics, University of Warsaw \\
  Warsaw, Poland \\
}

\date{}

\maketitle

\begin{abstract}
The ellipsoid method is an algorithm that solves the (weak)
feasibility and linear optimization problems for convex sets by making
oracle calls to their (weak) separation problem. We observe that the
previously known method for showing that this reduction can be done in
fixed-point logic with counting (FPC) for linear and semidefinite
programs applies to any family of explicitly-bounded convex sets.
We
further
show that the exact feasibility problem for
semidefinite programs is expressible in the infinitary version of
FPC. As a corollary we get that, for the graph
isomorphism problem, the
Lasserre/Sums-of-Squares semidefinite programming hierarchy of
relaxations collapses to the Sherali-Adams linear programming
hierarchy, up to a small loss in the degree.
\end{abstract}

\section{Introduction} \label{sec:introduction}

Besides being the first algorithm to be discovered that could solve
linear programs (LPs) in polynomial time, the ellipsoid method has at
least two other 
features that make it an important tool for the
computer science theoretician.  The first is that 
it is
able to handle 
implicit LPs
given by exponentially many, or even infinitely many, linear
inequalities. These include some of the most fundamental problems of
combinatorial optimization and mathematical programming, such as the
weighted matching problem on general graphs, the submodular function
minimization problem, or approximately solving semidefinite programs.
The second important feature of the ellipsoid method is that, for LPs,
its running time is polynomial in the bit length of its input and 
is provably robust against issues of numerical instability (see,
e.g.,~\cite{GroetschelLovaszSchrijverBook}).

There is a third emerging 
feature of
the ellipsoid method that is of particular significance for the
logician and the descriptive complexity theorist. The starting point
is the important breakthrough result of Anderson, Dawar and Holm~\cite{Anderson:2015} who developed a method called \emph{folding}
to deal with symmetries in an LP.  They used this method 
to show that, for the special case of LPs, the ellipsoid method can be
implemented in fixed-point logic with counting (FPC), and hence in
polynomial time, but \emph{choicelessly}, i.e., in a way that the
symmetries of
the input are respected all along the computation, 
and in the output. As the main application of their result, they
proved that the class of graphs that have a perfect matching could be
defined in~FPC, thus solving one of the 
open problems
raised by Blass, Gurevich and Shelah in their work on Choiceless Polynomial
Time~\cite{Blass2002}. The method of folding was extended further
by Dawar and Wang 
to deal with explicitly-bounded and
full-dimensional semidefinite programs~(SDPs)~\cite{DawarW17}.

Our first contribution is the observation that the
method of folding
can be used to capture the power of the ellipsoid method in its
full strength. We observe that the fully general polynomial-time
reduction that solves the weak feasibility problem given a weak
separation oracle for an explicitly-bounded convex set can be
implemented, choicelessly, in FPC. As in the earlier works that employed
the folding method, our implementation
uses the reduction
algorithm as described in~\cite{GroetschelLovaszSchrijverBook} as a
black-box. The black-box is made into a choiceless procedure through a
sequence of runs of the algorithm along a refining sequence of
suitable quotients of the given convex set. It should be pointed out
that while all the main ideas for doing this were already implicit in
the earlier works by Anderson, Dawar and Holm, and by Dawar and Wang,
working out the details requires a certain degree of care. 
For example, when we started this work it was not clear whether the earlier
methods would be able to deal with separation oracles for families of
convex sets that are \emph{not} closed under the folding-quotient
operations.  We observe that such closure conditions, which happen to
hold for LPs and SDPs, are 
not required. The details of this can be found in
Section~\ref{sec:definableellipsoid}.

Aided by this new understanding, we develop three applications
of folding.

\subsection{The SDP exact feasibility problem}

The first application concerns the semidefinite programming exact
feasibility problem. A semidefinite set, also known as a
spectrahedron, is a subset of Euclidean space that is defined as the
intersection of the cone of positive semidefinite matrices with an
affine subspace. Thus, semidefinite sets are the feasible regions of
SDPs. The SDP exact feasibility problem asks, for an SDP given as
input, whether its feasible region is non-empty.
While the approximate and explicitly-bounded version of this problem
is solvable in polynomial-time by the ellipsoid method, the
computational complexity of \emph{exact} feasibility is a
well-known open problem in mathematical programming: it is decidable
in polynomial space, by reduction to the existential theory of the
reals, but its precise position in the complexity hierarchy is
unknown. 
It has been shown that the problem is at least as hard as PosSLP, the
positivity problem for integers represented as arithmetic
circuits~\cite{TarasovVyalyi2008}, and hence at least as hard as the
famous square-root sum problem, but the exact complexity of these two
problems is also largely unknown (see~\cite{Allender:2009}). 

Our result on the SDP exact feasibility problem is that, when its
input is represented suitably as a finite structure, it is definable
in the logic~$\Cinf{\omega}$, i.e., bounded-variable infinitary logic
with counting (see Section~\ref{sec:preliminaries} for definitions and
references for all logics appearing in this paper). In more recent
terminology, we say that the SDP exact feasibility problem has
\emph{bounded~counting~width}: there is a fixed bound~$k$ so that the
set of YES (and NO) instances of the problem is closed under
indistinguishability by formulas of~$k$-variable counting logic.
Let us briefly discuss the new idea that goes into proving this.

First we show that the~$\FPC$-definability of the ellipsoid method can
be combined with the techniques in~\cite{DawarW17} to give
an~FPC-formula~$\phi$ that solves the weak feasibility problem for
explicitly-bounded SDPs. This deviates from the result
in~\cite{DawarW17} in that it removes one of their two assumptions:
the full-dimensionality requirement is now dropped.
Then we show how to reduce the exact feasibility problem for
arbitrary SDPs to the weak feasibility problem for
explicitly-bounded~SDPs, and do so in bounded-variable infinitary
logic. What drives this reduction is the observation that an
arbitrary~SDP is feasible if, and only if, there exists a large
radius~$R{>}0$ such that, for every small tolerance~$\epsilon{>}0$,
an~$\epsilon$-perturbation of the constraints of the original~SDP
restricted to solutions of magnitude at most~$R$ is non-empty.
Verifying this last condition when an~$(R,\epsilon)$-pair is given in
the input can be done in~FPC through the formula~$\phi$: indeed, the
resulting~SDP is explicitly-bounded thanks to~$R$, and it is enough to
decide its weak feasibility thanks to~$\epsilon$.  Hence, the
reduction boils down to handling the~$\exists R{>}0\ \forall
\epsilon{>}0$ quantification in bounded-variable infinitary logic.  To
achieve this, the key observation is that the part of the input that
corresponds to an~$(R,\epsilon)$-pair is independent of the
original~SDP.  This allows us to construct a \emph{Booleanized}
version~$\phi_{R,\epsilon}$ of the FPC-formula~$\phi$ that works only
for the fixed~$(R,\epsilon)$-pair.  Finally, by replacing the~$\exists
R{>}0\ \forall \epsilon{>}0$ quantification by an infinite disjunction
and conjunction, respectively, we obtain the~$\Cinf{\omega}$
formula~$\bigvee_{R>0} \bigwedge_{\epsilon>0} \phi_{R,\epsilon}$.  We
analyze the exact form of~$\phi$ and show that it allows for the
operation of fixing~$R$ and~$\epsilon$ while retaining the same number of
variables. This is the subject of Section~\ref{sec:sdp}.

\subsection{The SOS proof-existence problem}

A Sums-of-Squares (SOS) proof that an $n$-variable polynomial
inequality~$p_0 \geq 0$ holds on a 
set
defined by the polynomial constraints~$p_1 \geq 0,\ldots,p_m \geq 0$
is an
identity of the form~$\sum_{j=1}^m p_j s_j + s_0 =
p_0$, where each polynomial~$s_j$
is a sum of squares
of polynomials. The sums-of-squares methodology for solving polynomial
optimization problems advocated by Lasserre~\cite{Lasserre2001} and
Parrilo~\cite{Parrilo2000} motivates the question of computing such
proofs, when they exist. It is well-known that, in many settings,
including in the case 
of polynomial inequalities over Boolean variables,
the search-space of SOS proofs
with polynomials bounded by degree~$d$ can be formulated as the
feasible region of an~SDP with~$m\cdot n^{O(d)}$ variables and
constraints, and small coefficients. This leads to a computational
approach to finding low-degree SOS proofs by reduction to the~SDP
exact feasibility problem. It should be noted that a naive application
of this method does \emph{not}
in general
yield algorithms that are polynomial
in~$n$ and~$m$ even for~$d = O(1)$ due to results
in~\cite{DBLP:conf/innovations/ODonnell17} proving that SOS proofs
suffer from blow-up phenomena
in the coefficients of their
polynomials.
Remarkably, it was later shown
in~\cite{DBLP:conf/icalp/RaghavendraW17} that in certain special cases
of the problem the blow-up 
phenomena do not appear.

We recall the~SDP that describes the search-space of low-degree SOS
proofs and note that its representation
as a finite relational structure
is computable in the logic~FPC from a natural representation of
the input polynomials~$p_0,p_1,\ldots,p_m$.
Together with the definability of the SDP exact feasibility
problem, this
implies that the SOS proof system over Boolean
variables is \emph{weakly degree-automatable} in the
logic~$\Cinf{\omega}$
in the following sense:
there is a constant~$c$ such that, for each
degree~$d$, there is a formula~$\phi_d$ of the logic~$\Cinf{cd}$ that
tells whether a given polynomial inequality~$p_0 \geq 0$ has a
degree-$d$ SOS proof from a given system~$p_1\geq 0,\ldots,p_m\geq 0$
of polynomial constraints over Boolean variables. The qualification
\emph{weakly} in degree-automatable distinguishes the problem from its
search version in which an actual degree-$d$ proof is sought. For
refutations, 
where~$p_0$ is 
the constant $-1$ polynomial,
we note that the proof-existence problem can also be
reduced, in FPC, to the 
weak feasibility problem for arbitrary SDPs (not necessarily 
explicitly-bounded).
While less demanding, this weaker form of the
problem is not known to be solvable in polynomial time, let alone
FPC definable.
All this
can be found in Section~\ref{sec:sos}.

While interesting in its own right for its potential applications to
proof complexity lower bounds along the lines
of~\cite{DBLP:journals/lmcs/GradelGPP19}, we think of
the weak degree-automatability result for SOS proofs as the required
tool to develop the third and main application.

\subsection{Hierarchies for the graph isomorphism problem}

A variety of mathematical programming relaxations of the graph
isomorphism problem have been proposed in the literature: the
fractional isomorphism relaxation of Tinhofer~\cite{Tinhofer1986}, its
strengthening via the Sherali-Adams hierarchy of~LP
relaxations~\cite{Atseriasdoi10,MALKIN201473}, its further
strengthening via the Lasserre hierarchy of~SDP
relaxations~\cite{O'Donnell:2014}, its relaxation via Groebner basis
computations~\cite{GroheBerkholz15}, and a few others.
While it is known that no fixed level of any of these hierarchies
of~LP,~SDP or Groebner-based relaxations solves the graph isomorphism
problem~\cite{Atseriasdoi10,MALKIN201473,O'Donnell:2014,GroheBerkholz15},
their relative strength was not fully understood before our work.
Since~SDP is a proper generalization of~LP, one may be tempted to
guess that the Lasserre SDP hierarchy could perhaps distinguish more
graphs than its Sherali-Adams~LP sibling.
Our main contribution
is to prove that this is not the case: for
the graph isomorphism problem, the strength of the Lasserre hierarchy
collapses to that of the Sherali-Adams hierarchy.

Concretely, we prove in Section~\ref{sec:isomorphism} that there
exists a constant~$c$ such that if two graphs are distinguishable at
level~$d$ of the Lasserre hierarchy, then they are also
distinguishable at level~$cd$ of the Sherali-Adams hierarchy.  The
constant~$c$ loss comes from the number of variables
that are needed to express
the~SDP exact feasibility
problem in bounded-variable infinitary logic with counting.  
This collapse may sound surprising because it implies that, for
distinguishing graphs, the spectral methods that underlie the Lasserre
hierarchy are already available in low levels of the Sherali-Adams
hierarchy. However, it 
agrees nicely
with the
fact that indistinguishability by~3-variable counting logic captures
graph spectra~\cite{DawarSeveriniZapata} and the correspondence
between~$k$-variable counting logic and level~$k$ of the Sherali-Adams
hierarchy from~\cite{Atseriasdoi10}.  It also aligns well with the
results in~\cite{ODonnellSchramm2021} where it is shown that certain
spectral methods for approximating the number of constraints that can
be satisfied in a constraint satisfaction problem can be implemented
directly in the Sherali-Adams hierarchy.

To get the collapse result we
consider the
standard~$0$-$1$ quadratic programming formulation~$P(G,H)$ of the
graph isomorphism problem for graphs~$G$ and~$H$ on disjoint sets of
vertices.
Assuming
that the level-$d$
Lasserre relaxation of~$P(G,H)$ distinguishes~$G$ and~$H$,
our new insights 
on the expressibility of the~SDP exact
feasibility problem 
imply
that~$G$ and~$H$ can be distinguished by 
a~$\Cinf{cd}$-sentence,
where~$c$ is a constant independent of~$d$.  Hence, by the main result
in~\cite{Atseriasdoi10} relating the levels of the Sherali-Adams
hierarchy with indistinguishability in bounded-variable counting
logic,
the graphs~$G$ and~$H$ can be distinguished by level-$cd$
Sherali-Adams relaxation, thus proving the collapse. It should be
noted that, remarkably, this holds for any two graphs and
\emph{any}~$d$, even if~$d = d(n)$ is an arbitrary function of the
number~$n$ of vertices of~$G$ and~$H$. The details of this can be
found in Section~\ref{sec:isomorphism}.

When stated in the language of proofs, the collapse has another
interesting consequence. By combining the results
in~\cite{Atseriasdoi10} and~\cite{GroheBerkholz15}, it was already
known that if there is a degree-$d$ Sherali-Adams (SA)
proof that~$G$ and~$H$ are not
isomorphic, then there is also a degree-$d$ monomial~Polynomial
Calculus (mon-PC) proof over the reals, hence also a degree-$d$
Polynomial Calculus (PC) proof over the reals, which implies that
there is a degree-$2d$ SOS proof
by~\cite{DBLP:journals/eccc/Berkholz17}. Thus, for the graph
isomorphism problem, our collapse result completes a full cycle of
simulations~$\text{SOS}_{d} \to \text{SA}_{cd} \to \text{mon-PC}_{cd}
\to \text{PC}_{cd} \to \text{SOS}_{2cd}$ to show that all these proof
systems are equally powerful up to a~$2c$-factor loss in the degree.
It also confirms the belief expressed in~\cite{GroheBerkholz15} that
the gap between PC and monomial PC is not large (a result obtained
independently in~\cite{DBLP:journals/lmcs/GradelGPP19}).

It is remarkable that we
proved these statements
about the relative strength of proof systems and hierarchies through
an excursion into the descriptive complexity of the ellipsoid method,
the SDP exact feasibility problem, and bounded-variable
infinitary~logics.  However, it should be noted that our proof is
indirect as it relies on the correspondence between~$k$-variable
counting logic and the~$k$-th level Sherali-Adams hierarchy
from~\cite{Atseriasdoi10}.
The question whether the collapses can be
shown to hold \emph{directly} by strengthening~LP-solutions to~SDP-ones
for the primals, or by relaxing~SDP-solutions to~LP-ones for the
duals, remains an interesting~one.

\section{Preliminaries} \label{sec:preliminaries}

We use~$[n]$ to denote the set~$\{1,\ldots,n\}$.

\paragraph{Vectors and matrices}

If~$I$ is a non-empty index set, then an~$I$-vector is an element
of~$\reals^I$. The components of~$u \in \reals^I$ are written~$u(i)$
or~$u_i$, for~$i \in I$. We identify~$\reals^n$
with~$\reals^{[n]}$. For~$I$-vectors~$u$ and~$v$, the \emph{inner
  product} of~$u$ and~$v$ is~$\Iprod{u}{v} = \sum_{i \in I} u_iv_i$.
We write~$\normone{u} = \sum_{i \in I} |u_i|$ for
the~$L_1$-norm,~$\normtwo{u} = \sqrt{\Iprod{u}{u}}$ for
the~$L_2$-norm, and~$\norminf{u} = \max\{|u_i| : i \in I\}$ for
the~$L_{\infty}$-norm.  For~$K \subseteq \reals^I$ and~$\delta > 0$,
we define the \emph{$\delta$-ball} around~$K$ by~$\Sphere{K}{\delta}
:= \{ x \in \reals^I : \normtwo{x-y} \leq \delta \text{ for some } y
\in K\}$.  For~$K=\{x\}$, we set~$\Sphere{x}{\delta} :=
\Sphere{\{x\}}{\delta}$.  We define also~$\Sphere{K}{-\delta} := \{ x
\in \reals^I : \Sphere{x}{\delta} \subseteq K \}$.
When we refer to the \emph{volume} of a subset $K$ of Euclidean 
space $\reals^I$, we assume that $K$ is Lebesgue measurable and that the 
volume is defined as its Lebesgue measure (see, e.g.,~\cite{10.5555/26851}).
In particular, the volume of a $1$-ball in the $n$-dimensional real vector
space is $V_n = \pi^{n /2} / \Gamma(n/2+1)$,
where $\Gamma$ is the gamma function, i.e., the standard
continuous extension of the factorial function.

If~$I$ and~$J$ are two non-empty index sets, then an~$I \times
J$-matrix is simply an~$I \times J$-vector; i.e., an element
of~$\reals^{I \times J}$. Accordingly, the components of~$X \in
\reals^{I \times J}$ are written~$X(i,j)$, or~$X_{i,j}$, or~$X_{ij}$.
The~$L_1$-,~$L_2$- and~$L_\infty$-norms of a matrix~$X \in \reals^{I
  \times J}$ are defined as the respective norms of~$X$ seen as an~$I
\times J$-vector, and the inner product of the matrices~$X,Y \in
\reals^{I \times J}$ is~$\Iprod{X}{Y} = \sum_{i \in I}\sum_{j \in J}
X_{ij}Y_{ij}$.
Matrix product is written by concatenation.
A square matrix~$X \in \reals^{I \times I}$ is \emph{positive definite},
denoted~$X \succ 0$, if it is symmetric and satisfies~$z^T X z > 0$,
for every non-zero~$z \in \reals^I$. If it is symmetric but satisfies
the weaker condition that~$z^T X z \geq 0$, for every~$z \in
\reals^I$, then it is \emph{positive semidefinite}, which we denote by~$X
\succeq 0$. Equivalently,~$X$ is positive semidefinite if and only
if~$X = Y^T Y$ for some matrix~$Y \in \reals^{J \times I}$ if and only
if all its eigenvalues are non-negative.  By~$\identitymatrix$ we
denote the square identity matrix of appropriate dimensions,
i.e.,~$\identitymatrix_{ij} = 1$ if~$i = j$ and~$\identitymatrix_{ij}
= 0$ if~$i \not= j$. By~$\allonesmatrix$ we denote the square all-ones
matrix of appropriate dimensions, i.e.,~$\allonesmatrix_{ij} = 1$ for
all~$i$ and~$j$. For~$\identitymatrix$ and~$\allonesmatrix$ we omit
the reference to the index set in the notation (particularly so if the
index set is called~$I$ or~$J$, for obvious reasons).

\paragraph{Vocabularies, structures and logics}

A many-sorted (relational) vocabulary~$L$ is a set of sort
symbols~$D_1,\ldots,D_s$ together with a set of relation
symbols~$R_1,\ldots,R_m$.  Each relation symbol~$R$ in the list has an
associated \emph{type} of the form~$D_{i_1} \times \cdots \times
D_{i_r}$, where~$r \geq 0$ is the \emph{arity} of the symbol,
and~$i_1,\ldots,i_r \in [s]$ are not necessarily distinct. A
structure~$\Astruct$ of vocabulary~$L$, or an~$L$-structure, is given
by~$s$ disjoint sets~$D_1,\ldots,D_s$ called \emph{domains}, one for
each sort symbol~$D_i \in L$, and one relation~$R \subseteq D_{i_1}
\times \cdots \times D_{i_r}$ for each relation symbol~$R \in L$ of
type~$D_{i_1} \times \cdots \times D_{i_r}$. We use~$D(\Astruct)$
or~$D$ to denote the domain associated to the sort symbol~$D$,
and~$R(\Astruct)$ or~$R$ to denote the relation associated to the
relation symbol~$R$. In practice, the overloading of the notation
should never be an issue.  The domain of a sort symbol is also called
a \emph{sort}.  If~$\Astruct$ is an~$L$-structure and~$L'$ is a
many-sorted vocabulary obtained from~$L$ by removing some sort and
relation symbols, then an \emph{$L'$-reduct} of~$\Astruct$,
denoted~$L'(\Astruct)$, is the~$L'$-structure obtained from~$\Astruct$
by omitting the domains and relations associated to the sort and
relation symbols which are not present in~$L'$.

A logic for a many-sorted vocabulary~$L$ has an underlying set of
\emph{individual variables} for each different sort in~$L$. When
interpreted on an~$L$-structure, the variables are supposed to range
over the domain of its sort; i.e., the variables are typed. Besides
the equalities~$x = y$ between variables of the same type, the
atomic~$L$-formulas are the formulas of the form~$R(x_1,\ldots,x_r)$,
where~$R$ is a relation symbol of arity~$r$ and~$x_1,\ldots,x_r$ are
variables of types that match the type of~$R$. The formulas of
first-order logic over~$L$ are built from the atomic formulas by
negations, disjunctions, conjunctions, and existential and universal
quantification of individual variables. For detailed background
on first-order logic
see, e.g.,~\cite{DBLP:books/daglib/0080659}.

The syntax of First-Order Logic with Counting~$\FOC$ is defined by
adjoining one more sort~$N$ to the underlying vocabulary, adding one
binary relation symbol~$\leq$ of type~$N \times N$ and two ternary
relation symbols~$+$ and~$\times$ of types~$N \times N \times N$, as
well as extending the syntax to allow quantification of the
form~$\exists^{\geq y} x (\varphi)$, where~$\varphi$ is a formula,~$x$
is a variable of any type and~$y$ is a variable of type~$N$. In the
semantics of~$\FOC$, each~$L$-structure~$\Astruct$ is expanded to
an~$L \cup \{N,\leq,+,\times\}$-structure with~$N(\Astruct) =
\{0,\ldots,n\}$, where~$n = \max\{|D_i(\Astruct)| : i = 1,\ldots,s
\}$, and~$\leq$,~$+$, and~$\times$ are interpreted by the standard
arithmetic relations on~$\{0,\ldots,n\}$. The meaning
of~$\exists^{\geq y} x (\varphi)$, for a concrete assignment~$y
\mapsto i \in \{0,\ldots,n\}$, is that there exist at least~$i$ many
witnesses~$a$ for the variable~$x$ within its sort such that the
assignment~$x \mapsto a$ satisfies the formula~$\varphi$.  Numbers up
to~$n^c$, where~$c > 1$ is an integer, are represented by~$c$-tuples
of numbers in~$\{0,\ldots,n-1\}$. The arithmetic relations on such
numbers, and the quantifiers counting up to such numbers, are both
definable in~$\FPC$, the logic that we introduce next.

The syntax of Fixed-Point Logic with Counting~$\FPC$ extends the
syntax of~$\FOC$ by allowing the formation of \emph{inflationary
  fixed-point} formulas~$\mathrm{ifp}_{x,X} \varphi(x,X)$.  On a
structure~$\Astruct$ of the appropriate vocabulary, such formulas are
interpreted as defining the least fixed-point of the monotone
operator~$A \mapsto A \cup \{ a \in D_{i_1} \times \cdots \times
D_{i_r} : \Astruct \models \varphi(a,A) \}$, where~$D_{i_1} \times
\cdots \times D_{i_r}$ is the type of the relation symbol~$X$
in~$\varphi(x,X)$.  

The syntax of Infinitary Logic with
Counting~$\Cinf{}$ extends the syntax of first-order logic by allowing
quantifiers of the form~$\exists^{\geq i} x(\varphi)$ which say that
there are at least~$i$ many witnesses for the variable~$x$, where~$i$
is a (concrete) natural number, as well as infinite disjunctions and
conjunctions; i.e., formulas of the form~$\bigvee_{i \in I} \phi_i$
and~$\bigwedge_{i \in I} \phi_i$ where~$I$ is a possibly infinite
index set, and~$\{ \phi_i : i \in I \}$ is an indexed set of formulas.
The fragment of~$\Cinf{}$ with~$k$ variables, denoted~$\Cinf{k}$, is
the set of formulas that use at most~$k$ variables of any type. In the
formulas of~$\Cinf{k}$ the variables can be reused and hence there is
no finite bound on the quantification depth of the formulas.
We write~$\Cinf{\omega}$ for the union of the~$\Cinf{k}$ over all
natural numbers~$k$. It is well-known that for every natural
number~$k$, every many-sorted vocabulary~$L$, and
every~$L$-formula~$\varphi$ of~$\FPC$ that uses~$k$ variables, there
exists an~$L$-formula~$\psi$ of~$\Cinf{2k}$ such that~$\varphi$
and~$\psi$ define the same relations over all
finite~$L$-structures. 
While all the published proofs that we are
aware of give the statement for single-sorted vocabularies,
it is clear that the case of
many-sorted vocabularies is analogous.
For the proof and more on~$\FPC$ and~$\Cinf{\omega}$, we
refer to~\cite{OttoBook}.

\paragraph{Interpretations and reductions}

Let~$L$ and~$K$ be two many-sorted vocabularies, and let~$\Theta$ be a
class of~$K$-formulas. A~$\Theta$-interpretation of~$L$ in~$K$ is
given by: two~$\Theta$-formulas~$\delta_D(x)$ and~$\epsilon_D(x,y)$
for each sort symbol~$D$ of~$L$, and
one~$\Theta$-formula~$\psi_R(x_1,\ldots,x_r)$ for each relation
symbol~$R \in L$ of arity~$r$. In all these formulas, the
displayed~$x$'s and~$y$'s are tuples of distinct variables of the same
length~$m$, called the arity of the interpretation. We say that the
interpretation takes a~$K$-structure~$\Astruct$ as input and produces
an~$L$-structure~$\Bstruct$ as output if for each sort symbol~$D$
in~$L$ there exists a surjective partial map~$f_D : A^m \rightarrow
D(\Bstruct)$, where~$A$ is the domain of~$\Astruct$, such
that~$f_D^{-1}(D(\Bstruct)) = \{ a \in A^m : \Astruct \models
\delta_D(a) \}$,~$f_D^{-1}(\{(b,b) : b \in D(\Bstruct)\}) = \{ (a,b)
\in (A^{m})^2 : \Astruct \models \epsilon_D(a,b) \}$,
and~$f_R^{-1}(R(\Bstruct)) = \{ (a_1,\ldots,a_r) \in (A^m)^r :
\Astruct \models \psi_R(a_1,\ldots,a_r) \}$ where~$f_R = f_{D_1}
\times \ldots \times f_{D_r}$ and~$D_1 \times \cdots \times D_r$ is
the type of~$R$.
The composition of two interpretations, one of~$L$ in~$K$, and another
one of~$K$ in~$J$, is an interpretation of~$L$ in~$J$ defined in the
obvious way.  Similarly, the composition of an interpretation of~$L$
in~$K$ with an~$L$-formula is a~$K$-formula defined in the obvious
way. In all these compositions, the number of variables in the
resulting formulas \emph{multiply}. For example, the composition of
a~$\Cinf{k}$-interpretation with a~$\Cinf{\ell}$-formula is
a~$\Cinf{k\ell}$-formula. A reduction from a computational problem to
another is a pair of maps~$f$ and~$g$, where~$f$ takes an input~$x$
for the first problem and produces an input~$y = f(x)$ for the second
problem, and~$g$ takes~$x$ and a solution~$y'$ for~$y$ in the second
problem and produces a solution~$x' = g(x,y')$ for~$x$ in the first
problem.  The reduction is called a~$\Theta$-reduction if the maps can
be produced by~$\Theta$-interpretations when their inputs are
represented as structures of appropriate vocabularies.
For more on interpretations and logical reductions
see, e.g.,~\cite{DBLP:books/daglib/0082516}.

\paragraph{Numbers, vectors and matrices as structures}

Since we are interested in definability in logics, we represent
mathematical objects which serve as inputs and outputs of
algorithms as finite relational structures. The details of the chosen
representation are not essential, but we provide them for concreteness.

A natural number~$n \in \naturals$ is
represented by a structure, with a domain~$\{0,\ldots,N-1\}$ of
\emph{bit positions} where~$N \geq \lfloor{\log_2(n+1)}\rfloor$, of a
vocabulary~$L_{\naturals}$ that contains a binary relation
symbol~$\leq$ for the natural \emph{linear order} on the bit
positions, and a unary relation symbol~$P$ for the \emph{actual bits},
i.e., the bit positions~$i$ that carry a~$1$-bit in the unique binary
representation of~$n$ of length~$N$. Single bits~$b \in \{0,1\}$ are
represented as natural numbers with at least one bit position. Thus
the vocabulary~$L_{\bits}$ for representing single bits is really the
same as~$L_{\naturals}$, but we still give it a separate name.
A rational~$q = (-1)^b n/d$, where~$b \in \{0,1\}$ and~$n,d \in
\naturals$, is represented by a structure with
domain~$\{0,\ldots,N-1\}$ of bit positions, where~$N$ is large enough
to encode both the numerator~$n$ and the denominator~$d$ in
binary. The vocabulary~$L_{\rationals}$ of this structure has one
binary relation symbol~$\leq$ for the natural linear order on the bit
positions, and three unary relation symbols~$P_s$,~$P_n$ and~$P_d$
that are used to encode the sign and the bits of the numerator and the
denominator of~$q$. We use zero denominator to represent~$\pm \infty$.

An~$I$-vector~$u \in \rationals^I$ is represented by a two-sorted
structure, where the first sort~$\bar{I}$ is the index set~$I$ and the
second sort~$\bar{B}$ is a domain~$\{0,\ldots,N-1\}$ of bit positions,
where~$N$ is large enough to encode all the numerators and
denominators in the entries of~$u$ in binary.  The
vocabulary~$L_{\vectors}$ of this structure has one unary relation
symbol~$I$ for~$\bar{I}$, one binary relation symbol~$\leq$ for the
natural linear order on~$\bar{B}$, and three binary relation
symbols~$P_s$,~$P_n$ and~$P_d$, each of type~$\bar{I} \times \bar{B}$,
that are used to encode the entries of~$u$ in the expected
way:~$P_s(i,0)$ if and only if~$u(i)$ is positive,~$P_n(i,j)$ if and
only if the~$j$-th bit of the numerator of~$u(i)$ is~$1$,
and~$P_d(i,j)$ if and only if the~$j$-th bit of the denominator
of~$u(i)$ is~$1$.

More generally, if~$I_1,\ldots,I_d$ denote index sets that are not
necessarily pairwise distinct, then the corresponding tensors~$u \in
\rationals^{I_1 \times \cdots \times I_d}$ are represented by
many-sorted structures, with one sort~$\bar{I}$ for each index set~$I$
for as many different index sets as there are in the
list~$I_1,\ldots,I_d$, plus one sort~$\bar{B}$ for the bit
positions. The vocabulary~$L_{\vectors,d}$ of these structures has one
unary relation symbol~$I$ for each index sort~$\bar{I}$, one binary
relation symbol~$\leq$ for the natural linear order on the bit
positions~$\bar{B}$, and three~$d+1$-ary relation symbols~$P_s$,~$P_n$
and~$P_d$, each of type~$\bar{I}_1 \times \cdots \times \bar{I}_d
\times \bar{B}$, for encoding the signs and the bits of the numerators
and the denominators of the entries of the tensor.  Matrices~$A \in
\rationals^{I \times J}$ and square matrices~$A \in \rationals^{I
  \times I}$ are special cases of these, and so are indexed sets of
vectors~$\{ u_i : i \in K \} \subseteq \rationals^I$ and indexed sets
of matrices~$\{ A_i : i \in K \} \subseteq \rationals^{I \times J}$.

\section{The Definable Ellipsoid Method} \label{sec:definableellipsoid}

In this section we show that the ellipsoid method can be implemented
in~$\FPC$ for any family of explicitly-bounded convex sets. We begin
by defining the problems involved.

\subsection{Geometric problems and the ellipsoid method}

Let~$\mathscr{C}$ be a class of convex sets, each of the form~$K
\subseteq \mathbb{R}^I$ for some non-empty index set~$I$. 
We will consider elements of~$\mathscr{C}$ as inputs
of computational problems, and therefore the
class~$\mathscr{C}$ comes with an associated encoding scheme.
Most usual encoding schemes encode instances of a problem as
finite binary strings.
In our case, since we want to refer to definability in a logic, the
encoding scheme for~$\mathscr{C}$ will encode each set~$K$ through a
finite relational structure. The details are discussed in
Subsection~\ref{subsec:ellipsoid} below.

We assume that the encoding of a set~$K \subseteq \mathbb{R}^I$ carries
within it enough information to determine the set~$I$. If the encoding
also carries information about a rational~$R$ satisfying~$K \subseteq
S(0^I,R)$, then we say that~$K$ is \emph{circumscribed}, and we
write~$(K; I, R)$ to refer to it. We write~$(K; n, R)$ whenever~$I =
[n]$.

The \emph{exact feasibility problem} for~$\mathscr{C}$ takes as input
the encoding of a set~$K \subseteq \reals^I$ in~$\mathscr{C}$ and asks
for a bit~$b \in \{0,1\}$ that is~$1$ if~$K$ is non-empty, and~$0$
if~$K$ is empty. The \emph{weak feasibility problem} for~$\mathscr{C}$
takes as input the encoding of a set~$K \subseteq \reals^I$
in~$\mathscr{C}$ and a rational~$\epsilon > 0$ and asks for a bit~$b
\in \{0,1\}$ and a vector~$x \in \rationals^I$ such that:
\begin{enumerate} \itemsep=0pt
\item $b = 1$ and $x \in \Sphere{K}{\epsilon}$, or
\item $b = 0$ and $\mathrm{vol}(K) \leq \epsilon$.
\end{enumerate}
The reason why the exact feasibility problem is formulated as a
decision problem and does not ask for a feasible point is that~$K$
could well be a single point with non-rational components. In the weak
feasibility problem this is not an issue because if~$K$ is non-empty,
then the ball~$\Sphere{K}{\epsilon}$ surely contains a rational point.
The \emph{not-so-weak separation problem} for~$\mathscr{C}$ takes as
input the encoding of a set~$K \subseteq \reals^I$ in~$\mathscr{C}$, a
vector~$y \in \rationals^I$, and a rational~$\delta > 0$ and asks as
output for a bit~$b \in \{0,1\}$ and a vector~$s \in \rationals^I$
such that~$\infnorm{s} = 1$ and:
\begin{enumerate}\itemsep=0pt
\item $b = 1$ and $y \in \Sphere{K}{\delta}$, or
\item $b = 0$ and $\innprod{s}{y} + \delta \geq \sup \{ \innprod{s}{x}
  : x \in K \}$.
\end{enumerate}
The problems carry the adjective \emph{weak} in their name to stress
on the fact that in both cases the more natural requirement of
membership in~$K$ is replaced by the looser requirement of membership
in~$S(K,\gamma)$ for a given~$\gamma > 0$. For the weak separation
problem, the additional qualification \emph{not-so-weak} serves the
purpose of distinguishing it from the \emph{weak(er)} version in which
condition~2 is replaced by the looser requirement that~$b=0$
and~$\innprod{s}{y} + \delta \geq \sup \{ \innprod{s}{x} : x \in
S(K,-\delta) \}$. It turns out that the main procedure of the
ellipsoid method, as stated in the
monograph~\cite{GroetschelLovaszSchrijverBook} and in
Theorem~\ref{thm:centralcut} below, requires the \emph{not-so-weak}
version. Recall that an ellipsoid in~$\reals^I$ is a set of the
form~$E(A,a) = \{ x \in \reals^I : (x-a)^T A (x-a) \leq 1 \}$,
where~$a \in \reals^I$ is the center, and~$A$ is an~$I \times I$
positive definite matrix.

\begin{theorem}[Theorem 3.2.1 in \cite{GroetschelLovaszSchrijverBook}] \label{thm:centralcut}
There is an oracle polynomial-time algorithm, the central-cut
ellipsoid method $\CC$, that solves the following problem: Given a
rational number~$\epsilon > 0$ and a circumscribed closed convex
set~$(K; n,R)$ given by an oracle that solves the not-so-weak
separation problem for~$K$, outputs one of the following: either a
vector~$x \in \Sphere{K}{\epsilon}$, or a positive definite matrix~$A
\in \rationals^{n \times n}$ and a vector~$a \in \rationals^n$ such
that~$K \subseteq E(A,a)$ and~$\mathrm{vol}(E(A,a)) \leq \epsilon$.
\end{theorem}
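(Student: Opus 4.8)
The plan is to follow the classical development of the central-cut ellipsoid method, exactly as in~\cite{GroetschelLovaszSchrijverBook}; since this is Theorem~3.2.1 of that monograph, I would only sketch the algorithm and its correctness and then invoke the full numerical analysis as a black box. The algorithm maintains a shrinking sequence of ellipsoids $E_0 \supseteq E_1 \supseteq \cdots$, each of which contains $K$, and halts as soon as either a point of $\Sphere{K}{\epsilon}$ has been found or the current ellipsoid has volume at most~$\epsilon$. We start with $E_0 = \Sphere{0^n}{R}$, which contains $K$ because $(K;n,R)$ is circumscribed, and whose volume is at most $(2R)^n$.

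At a generic step~$t$ we have $E_t = E(A_t,a_t)$ with $K \subseteq E_t$, and we call the not-so-weak separation oracle on the center~$a_t$ with a separation parameter~$\delta$ chosen small as a function of~$n$, $R$ and~$\epsilon$ (exactly how small is the delicate point discussed below). If the oracle answers $b=1$, then $a_t \in \Sphere{K}{\delta} \subseteq \Sphere{K}{\epsilon}$ and we output $x = a_t$. Otherwise it returns a vector~$s$ with $\norminf{s}=1$ such that $\innprod{s}{a_t}+\delta \geq \sup\{\innprod{s}{x} : x \in K\}$; in particular every $x \in K$ satisfies $\innprod{s}{x} \leq \innprod{s}{a_t}+\delta$, so all of~$K$ lies in the halfspace $H = \{x \in \reals^n : \innprod{s}{x} \leq \innprod{s}{a_t}+\delta\}$, hence in the half-ellipsoid $E_t \cap H$. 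We then take $E_{t+1}$ to be a slightly enlarged rational approximation of the L\"owner--John ellipsoid of $E_t \cap H$, i.e.\ of the unique ellipsoid of minimum volume containing that set. The enlargement is arranged so that, on the one hand, $K \subseteq E_t \cap H \subseteq E_{t+1}$, which preserves the invariant by induction, and, on the other hand, $\mathrm{vol}(E_{t+1}) \leq q \cdot \mathrm{vol}(E_t)$ for a fixed factor $q = q(n) < 1$ (for the idealized central cut one has $q = e^{-1/(2(n+1))}$, and the rounding costs only a little more). Iterating, after $N = O(n^2\log((R+1)/\epsilon))$ steps we reach $\mathrm{vol}(E_N) \leq \epsilon$; if we have not already halted with a point of $\Sphere{K}{\epsilon}$, we output the pair $(A_N,a_N)$, and the invariant gives $K \subseteq E(A_N,a_N)$ with $\mathrm{vol}(E(A_N,a_N)) \leq \epsilon$, as required.

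What remains, and what constitutes essentially all of the work, is to verify that the procedure runs in oracle polynomial time. The number of oracle calls is just~$N$, which is manifestly polynomial in the input size; the real issue is the bit-complexity of the numbers generated. The matrices~$A_t$ and the centers~$a_t$ must be kept rational with bit-length polynomial in~$n$, $\log R$ and $\log(1/\epsilon)$, which forces one to round the exact L\"owner--John update at every step. The enlargement factor must then be chosen large enough to absorb this rounding error together with the oracle slack~$\delta$, yet small enough that the shrinkage factor~$q$ stays below~$1$; and simultaneously~$\delta$ must be small enough that the containment $K \subseteq E_{t+1}$ survives the enlargement. Balancing these competing demands — rounding precision, enlargement factor, and the oracle parameter~$\delta$ — against one another is the main obstacle, and it is precisely the numerical analysis carried out in Chapter~3 of~\cite{GroetschelLovaszSchrijverBook}, which I would take here as given.
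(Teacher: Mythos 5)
Your sketch is correct and is the standard argument for the central-cut ellipsoid method; the paper itself does not prove this statement but imports it verbatim as Theorem~3.2.1 of~\cite{GroetschelLovaszSchrijverBook}, which is exactly the reference you defer to for the numerical analysis. So your treatment is consistent with (indeed slightly more detailed than) the paper's, which uses the theorem purely as a black box.
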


We plan to use the algorithm $\CC$ from Theorem~\ref{thm:centralcut}
almost as a black box, except for the four aspects of it listed
below. Although they are not stated in Theorem~3.2.1
in~\cite{GroetschelLovaszSchrijverBook}, inspection of the proof and
the definitions in the book shows that they hold:
\begin{enumerate} \itemsep=0pt
  \item the input to the algorithm is the triple given by~$\epsilon$,~$n$
    and~$R$,
  \item the rational numbers~$\epsilon$ and~$R$ are represented in binary,
  \item the natural number~$n$ is represented in unary 
  (i.e., $2^n$ is given in binary),
  \item the algorithm makes at least one oracle query, and the output
    is determined by the answer to the last oracle call in the
    following way: if this last call was~$(y,\delta)$ and the answer
    was the pair~$(b,s)$, then~$\delta \leq \epsilon$ and the output
    vector~$x$ of $\CC$ is~$y$ itself whenever~$b = 1$, and there exists
    a positive definite matrix~$A$ and a vector~$a$ so that~$K
    \subseteq E(A,a)$ and~$\mathrm{vol}(E(A,a)) \leq \epsilon$
    whenever~$b = 0$.
  \end{enumerate}
The last point implies, in particular, that $\CC$ solves the weak
feasibility problem for the given~$K$. However, note also that the
theorem states a notably stronger claim than the existence of a
polynomial-time oracle reduction from the weak feasibility problem for
a class~$\mathscr{C}$ of sets to the not-so-weak separation problem
for the same class~$\mathscr{C}$ of sets: indeed,~$\CC$ solves the
feasibility problem for~$K$ by making oracle calls to the separation
problem for \emph{the same}~$K$.

\subsection{Definability of ellipsoid}\label{subsec:ellipsoid}

We encode sets in~$\mathscr{C}$
as finite relational structures 
in an isomorphism-invariant way.
Such encodings we call
\emph{representations}. We define this formally.

Let us first specify
what it means for two sets~$P \subseteq \reals^I$
and~$Q \subseteq \reals^J$ to be isomorphic, where~$I$ and~$J$ are two
non-empty index sets. 
For a function~$\sigma : I \rightarrow J$ and
a~$J$-vector~$v$, we denote by~$[v]^{-\sigma}$
the~$I$-vector defined by~$[v]^{-\sigma}(i) = v(\sigma(i))$ for
every~$i \in I$.
For sets of~$J$-vectors, such as~$Q$, we
define~$[Q]^{-\sigma} = \{[v]^{-\sigma} : v \in Q \}$. We say that~$P$
and~$Q$ are \emph{isomorphic}, denoted~$P \cong Q$, if there is a
bijection~$\sigma : I \rightarrow J$ such that~$P = [Q]^{-\sigma}$.
Now we can define representations of classes of sets.  A
\emph{representation} of the class~$\mathscr{C}$ of sets is a
surjective partial map~$r$ from the class of finite~$L$-structures
onto~$\mathscr{C}$, where~$L$ is a finite vocabulary with at
least one unary relation symbol~$I$, that satisfies the following
conditions:
\begin{enumerate} \itemsep=0pt
\item for every two $\mathbb{A},\mathbb{B} \in \Dom(r)$, if
  $\mathbb{A} \cong \mathbb{B}$, then $r(\mathbb{A}) \cong
  r(\mathbb{B})$,
\item for every $\mathbb{A} \in \Dom(r)$ it holds that
  $r(\mathbb{A}) \subseteq \reals^{I}$ where $I = I(\mathbb{A})$.
\end{enumerate}
A \emph{circumscribed representation} of~$\mathscr{C}$ is a surjective
partial map~$r$ from the class of finite~$L$-structures
onto~$\mathscr{C}$, where~$L$ is a finite vocabulary containing at
least one unary relation symbol~$I$ as well as a copy of the
vocabulary~$L_{\rationals}$, that satisfies the following conditions:
\begin{enumerate} \itemsep=0pt
\item for every two $\mathbb{A},\mathbb{B} \in \Dom(r)$, if
  $\mathbb{A} \cong \mathbb{B}$, then $r(\mathbb{A}) \cong
  r(\mathbb{B})$,
\item for every $\mathbb{A} \in \Dom(r)$ it holds that
  $r(\mathbb{A}) \subseteq \reals^{I}$ where $I = I(\mathbb{A})$,
\item for every~$\Astruct \in \Dom(r)$ it holds that~$r(\mathbb{A}) \subseteq \Sphere{0^I}{R}$ where~$R$ is the rational number represented by the~$L_{\rationals}$-reduct of~$\Astruct$.
\end{enumerate}
Note that a circumscribed representation of~$\mathscr{C}$ exists only
if every~$K$ in~$\mathscr{C}$ is bounded. For a given
representation~$r$ of~$\mathscr{C}$, any of the existing
preimages~$\Astruct \in r^{-1}(K)$ of a set~$K \in \mathscr{C}$ is
called a \emph{representation} of~$K$.  If~$L$ is the vocabulary of
the representation, then we say that~$\mathscr{C}$ is represented in
vocabulary~$L$. If~$\mathscr{C}$ has a representation in some
vocabulary~$L$, then we say that~$\mathscr{C}$ is a \emph{represented
  class of sets}, and if it has a circumscribed representation, then
we say that it is a \emph{represented class of circumscribed sets}.

If~$\mathscr{C}$ is a represented class of convex sets,~$L$ is the
vocabulary of the representation, and~$\Phi$ is a class of logical
formulas, then we say that the weak feasibility problem
for~$\mathscr{C}$ is~$\Phi$-definable if there exists
a~$\Phi$-interpretation that, given as input a representation of a
set~$K$ in~$\mathscr{C}$ and a rational~$\epsilon > 0$ as a structure
over~$L \disjointunion L_{\rationals}$, produces a structure
over~$L_{\bits} \disjointunion L_{\vectors}$ representing a valid
output.  It is required in addition that the represented~$K \subseteq
\reals^I$ from the input and the vector~$x \in \rationals^I$ from the
output share the same sort~$\bar{I}$ with the same relation symbol~$I$
interpreted by the same set.  Similarly, for the not-so-weak
separation problem, the input is a structure over~$L
\disjointunion L_{\rationals} \disjointunion L_{\vectors}$ and the
output is a structure over~$L_{\bits} \disjointunion
L_{\vectors}$. Again, the represented~$K \subseteq \reals^I$ and the
vector~$y \in \rationals^I$ from the input, and the vector~$s \in
\rationals^I$ from the output, share the same sort~$\bar{I}$ with the
same relation symbol~$I$ interpreted by the same set.

The following is the main result of this section.

\begin{theorem} \label{thm:definableellipsoid}
  Let~$\mathscr{C}$ be a represented class of circumscribed closed
  convex sets. If the not-so-weak separation problem for~$\mathscr{C}$
  is~$\FPC$-definable, then the weak feasibility problem
  for~$\mathscr{C}$ is also~$\FPC$-definable.
\end{theorem}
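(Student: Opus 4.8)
The plan is to take the black-box central-cut ellipsoid algorithm CC from Theorem~\ref{thm:centralcut} and run it "choicelessly" by executing it along a refining sequence of quotients of the represented convex set, following the folding method of Anderson--Dawar--Holm and Dawar--Wang. The core difficulty is that CC is a sequential, choice-making algorithm: at each step it produces a specific rational center~$a$, queries the separation oracle at~$a$, and branches on the returned hyperplane~$s$; none of these objects is canonical, so a naive simulation of CC cannot be carried out in $\FPC$. The folding idea is to run CC not on~$K \subseteq \reals^I$ itself but on a quotient $K/\!\sim$ living in $\reals^{I/\sim}$ for an equivalence relation~$\sim$ on the index set~$I$ that is coarse enough that the quotient has few (polynomially many, up to the ordering induced by a canonical form) "types" of indices — so that CC, run on an ordered presentation of the quotient, makes choices that are $\FPC$-definable from the quotient — yet fine enough that a feasible point of the quotient lifts back to an approximately feasible point of~$K$.

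\textbf{Key steps, in order.} First I would make precise the notion of a folding quotient: given a representation $\Astruct$ of $(K;I,R)$ and an equivalence relation $\sim$ on~$I$ that is a congruence for the automorphism group acting on~$\Astruct$ (in practice, an equivalence relation definable in $\Cinf{\omega}$, such as "$i \sim j$ iff $i$ and $j$ have the same $\Cinf{k}$-type"), define the subspace $U_\sim \subseteq \reals^I$ of vectors constant on $\sim$-classes, and let $K_\sim := K \cap U_\sim$, identified with a convex subset of $\reals^{I/\sim}$. Second, I would observe the two crucial geometric facts: (a) if $x$ is feasible for~$K$ then its "averaging projection" onto $U_\sim$ — which, by convexity and the symmetry of~$K$ under the group generated by the $\sim$-classes, is again in $K$ — is feasible for $K_\sim$, so $K \neq \emptyset$ implies $K_\sim \neq \emptyset$ for every such~$\sim$; conversely $K_\sim \subseteq K$ trivially; and (b) a not-so-weak separation oracle for $K$ yields one for $K_\sim$ by symmetrizing the returned hyperplane (average~$s$ over the group orbit and renormalize the $L_\infty$-norm; one checks the separation inequality is preserved because $K_\sim$ lies in the fixed subspace), so $\FPC$-definability of the oracle for $\mathscr{C}$ transfers to the quotients. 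Third — the heart of the argument — I would run CC on the ordered structure obtained by picking, via the canonical ordering that comes essentially for free once the quotient has only boundedly many index-types, a linear order on $I/\sim$; since CC is deterministic and its input is now an ordered structure $(\epsilon, |I/\sim|, R)$, every intermediate object it computes (the current ellipsoid $E(A^{(t)}, a^{(t)})$, the query point, the oracle answer) is a function of the ordered quotient, and this entire run is implementable in $\FPC$ by a fixed-point iteration whose length is the polynomial time bound of CC; crucially, by point~3 in the list after Theorem~\ref{thm:centralcut}, the final answer is read off the last oracle call with $\delta \le \epsilon$. Fourth, I would iterate over a refining sequence of quotients $\sim_1 \succeq \sim_2 \succeq \cdots$ (e.g.\ indexed by the counting-logic rounds), running CC on each; if some $K_{\sim_t}$ comes back weakly feasible, the point lifts to a weakly feasible point of~$K$ (after possibly relaxing~$\epsilon$ by a controlled amount to absorb the lifting error), and we output $b=1$ with that point; if $K_{\sim_t}$ comes back with a small-volume ellipsoid for \emph{every} $t$ up to the point where $\sim_t$ is the identity relation — which happens after finitely many rounds, bounded by $|I|$ — then in particular CC on $K$ itself (the finest quotient) certifies $\mathrm{vol}(K) \le \epsilon$ and we output $b=0$. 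The whole loop is a single $\FPC$ fixed-point, since "has the refinement stabilized" and "did round~$t$ return feasible" are $\FPC$-checkable.

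\textbf{Where the main obstacle lies.} The genuinely delicate point is making rigorous that CC, when handed an ordered presentation of the quotient $K_\sim$, really can be simulated in $\FPC$ — this requires that CC's input be $(\epsilon, n, R)$ with $n$ in unary (point~1--2 after Theorem~\ref{thm:centralcut}), that all rational arithmetic on the polynomially-many-bit numbers it manipulates be $\FPC$-definable (standard, via the arithmetic on the numeric sort and definable rational arithmetic), and that the polynomial iteration bound be expressible — all of which is available but must be assembled carefully. A secondary subtlety, and the one the introduction flags as needing "a certain degree of care," is that the class $\mathscr{C}$ need \emph{not} be closed under the quotient operation $K \mapsto K_\sim$: we are feeding CC a separation oracle for $K_\sim$ which may lie outside $\mathscr{C}$, but this is harmless because CC only ever queries the oracle, never inspects the membership of $K_\sim$ in any class, and we have shown the oracle for $K_\sim$ is $\FPC$-definable from the representation of $K$ together with $\sim$. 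Finally one must verify the error bookkeeping in the lifting step — that taking the symmetrization/averaging of an $\epsilon'$-feasible point of $K_\sim$ yields an $\epsilon$-feasible point of $K$ for a suitable $\epsilon' = \epsilon'(\epsilon, |I|)$, and that the volume bound $\mathrm{vol}(K_{\mathrm{id}}) = \mathrm{vol}(K) \le \epsilon$ obtained at the identity-quotient is exactly what the weak feasibility problem asks for — but these are routine once the framework is in place.
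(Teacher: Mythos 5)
Your high-level architecture matches the paper's (run CC as a black box on ordered quotients, restart along a refining sequence, exploit that the output is read off the last oracle call, and note that closure of $\mathscr{C}$ under quotients is not needed), but the three mechanisms that make it work are either wrong or missing. First, your quotient is the \emph{intersection} $K_\sim = K \cap U_\sim$ with the subspace of $\sim$-constant vectors, and you justify ``$K \neq \emptyset \Rightarrow K_\sim \neq \emptyset$'' by appeal to ``the symmetry of $K$ under the group generated by the $\sim$-classes.'' But $K$ is only invariant under automorphisms of its representing structure, not under arbitrary permutations that preserve the partition; for a rigid representation of the single point $K=\{(0,1)\}\subseteq\reals^{\{1,2\}}$ with the trivial (one-class) partition, $K\cap U_\sim=\emptyset$ while $K\neq\emptyset$, so your step (a) fails at the very first round. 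The paper instead folds by taking the \emph{image} $[K]^\sigma$ of $K$ under class-wise averaging (Lemma~\ref{lem:props}), which is automatically nonempty, convex, and contained in $S(0^J,R)$ with no symmetry hypothesis; the same issue invalidates your step (b), where you symmetrize the returned hyperplane over a group orbit that is not available.

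Second, your refinement sequence is fixed in advance (``indexed by the counting-logic rounds''), which neither reaches the discrete partition in general nor produces ordered classes when it stabilizes; in particular your fallback of ``running CC on $K$ itself at the identity quotient'' cannot be executed canonically, since $I$ is unordered. The paper's refinement is instead \emph{driven by the oracle}: one queries the separation oracle at the unfolded point $[y]^{-\sigma}$, and only when the returned vector $s$ fails to be respected by $\sigma$ does one refine, splitting and ordering each class by the numerical values of $s$ (the map $\sigma^s$); this is simultaneously what keeps every step canonical, what certifies via Lemma~\ref{lem:props}(4)--(5) that respected answers are valid answers for $[K]^\sigma$, and what bounds the number of restarts by $|I|$. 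Third, you have no argument for the $b=0$ case at a non-discrete folding: a small-volume ellipsoid containing $[K]^\sigma$ does not directly bound $\mathrm{vol}(K)$. The paper needs Lemma~\ref{lem:volume} --- pull the ellipsoid back through the folding map, intersect with $S(0^I,R)$, and control the volume via the largest eigenvalue of the pulled-back form --- together with the calibrated accuracy $\gamma=\min\{(\epsilon/(2^nR^{n-1}nk))^k,\epsilon\}$ fed to CC. Without these three ingredients the proposal does not go through.
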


Although all the main ideas of the proof in
Subsection~\ref{subsec:proof} below were already present in
the works~\cite{Anderson:2015} and~\cite{DawarW17}, we present a
detailed proof for completeness as then the key new insights become
clearer.

At an intuitive level, the main difficulty for simulating the
ellipsoid method within a logic is that one needs to make sure that
the execution of the algorithm stays \emph{canonical}; i.e., invariant
under the isomorphisms of the input structure. The principal device to
achieve this is the following clever idea from~\cite{Anderson:2015}:
instead of running the ellipsoid method directly over the given set~$K
\subseteq \reals^I$, the algorithm is run over certain \emph{folded}
versions~$[K]^\sigma \subseteq \reals^{\sigma(I)}$ of~$K$,
where~$\sigma(I)$ is an \emph{ordered subset} of~$I$. If the execution
of the ellipsoid algorithm does not detect the difference between~$K$ and the
folded~$[K]^\sigma$, then an appropriately defined \emph{unfolding} of
the solution for~$[K]^\sigma$ will give the right solution
for~$K$. If, on the contrary, the ellipsoid detects the difference in
the form of a vector~$u \in \rationals^I$ whose folding~$[u]^\sigma$
does not unfold appropriately, then the knowledge of~$u$ is exploited
to \emph{refine} the current folding into a strictly larger
ordered~$\sigma'(I) \subseteq I$, and the execution is rebooted with
the new~$[K]^{\sigma'} \subseteq \reals^{\sigma'(I)}$. After no more
than~$|I|$ refinements the folding will be indistinguishable
from~$K$, and the execution will be correct.

The crux of the argument that makes this procedure~$\FPC$-definable is
that the ellipsoid algorithm is always operating over an
\emph{ordered} set~$\sigma(I)$. In particular, the algorithm stays
canonical, and the polynomially many steps of its execution are
expressible in fixed-point logic FP by the Immerman-Vardi
Theorem~\cite{10.1145/800070.802187, 10.1145/800070.802186}.
Indeed, the counting ability of~$\FPC$ is required only
during the folding/unfolding/refining steps.

Our formalization of these ideas will bring in two key insights that
were not present in earlier works.  The first one is the observation
that it is possible to simulate the oracle queries to the
folded~$[K]^\sigma$ by oracle queries to the original~$K$, and that
this works for any closed convex set~$K$. Furthermore,
it is possible to transfer an appropriately chosen termination
condition on the volume of the folded~$[K]^\sigma$ to the required
termination condition on the volume of the original~$K$. The
observation that the volume condition commutes with the folding
operations on arbitrary~$K$'s did not appear in the earlier work on
LPs~\cite{Anderson:2015}, nor on SDPs~\cite{DawarW17}. For LPs, the
ellipsoid method on~$[K]^\sigma$ is typically combined with a rounding
procedure to actually solve the exact feasibility problem. Therefore,
the termination condition in that case is just exact feasibility, or
plain emptiness. For SDPs,
the volume-based termination condition is not analyzed 
in~\cite{DawarW17}, since the results there
hold under the additional assumption of full-dimensionality
and the ellipsoid method always outputs a vector
in~$\Sphere{K}{\epsilon}$.
The claim that the
folding operations do have a mild effect on the volume of
arbitrary~$K$'s is the subject of Lemma~\ref{lem:volume} below.

The second key insight that our formalization brings in is the
observation that the two \emph{precision} parameters of an input for
an arbitrary~$K$, i.e., the \emph{large} radius~$R>0$ in the
circumscribing assumption, and the \emph{small} margin
guarantee~$\epsilon>0$ in the weak feasibility goal, do not interfere
with the requirement that the algorithm behaves in an
isomorphism-invariant way. Again, this observation was not clearly
analyzed in the previous work on LPs, nor on SDPs. 
As explained in the introduction, it will be
crucial for us to be able to handle arbitrarily large~$R>0$, and
arbitrarily small~$\epsilon>0$, to get the results of
Section~\ref{sec:sdp}.

Before we move on to the actual proof of
Theorem~\ref{thm:definableellipsoid}, we discuss the required material
for the method of foldings.

\subsection{Folding operations} 

Let~$I$ and~$J$ be non-empty index sets. Let~$\sigma : I \rightarrow
J$ be an onto map.  The \emph{almost-folding}~$(u)^{\sigma}$ and the
\emph{normalized almost-folding}~$(u)^{\sigma}_{\mathrm{n}}$ of
an~$I$-vector~$u$ are the~$J$-vectors defined by
\begin{equation}
(u)^{\sigma}(j) := \sum_{i \in \sigma^{-1}(j)} u(i)
\;\;\;\;\;\text{ and }\;\;\;\;\;
(u)^{\sigma}_{\mathrm{n}} := \frac{(u)^{\sigma}}{\norminf{(u)^{\sigma}}}
\end{equation}
for every~$j \in J$, with the understanding that
if~$\norminf{(u)^\sigma} = 0$, then~$(u)^\sigma_{\mathrm{n}}$ is
defined as the zero vector.  The \emph{folding}~$[u]^{\sigma}$ of
an~$I$-vector~$u$ and the \emph{unfolding}~$[v]^{-\sigma}$ of
a~$J$-vector~$v$ are the vectors defined by
\begin{equation}
[u]^{\sigma}(j) := \frac{1}{|\sigma^{-1}(j)|} \sum_{i \in \sigma^{-1}(j)} u(i)
\;\;\;\;\;\text{ and }\;\;\;\;\;
[v]^{-\sigma}(i) := v(\sigma(i))
\end{equation}
for every~$j \in J$ and every~$i \in I$, respectively.  For sets~$K
\subseteq \reals^I$ and~$L \subseteq \reals^J$, define~$[K]^{\sigma}
:= \{ [u]^{\sigma} : u \in K \}$ and~$[L]^{-\sigma} := \{
          [v]^{-\sigma} : v \in L \}$. Observe that the
          notation~$[v]^{-\sigma}$ and~$[L]^{-\sigma}$ agrees with the
          one we introduced earlier when we defined
          representations. The map~$\sigma$ is said to \emph{respect}
          a vector~$u \in \reals^I$ if~$u_i = u_{i'}$
          whenever~$\sigma(i) = \sigma(i')$ for every~$i,i' \in
          I$. The following lemma collects a few important properties
          of foldings. See Propositions~17 and~18 in~\cite{DawarW17}
          in which properties~4 and~7 from the lemma are also proved
          for all sets but stated only for convex sets. A small
          difference is that our statement of~7 is written in terms of
          the normalized almost folding operation defined above which
          is what is actually needed in the uses of the lemma.

\begin{lemma} \label{lem:props}
Let~$\sigma : I \rightarrow J$ be an onto map, let~$u$ and~$v$
be~$I$-vectors, and let~$K$ be a set of~$I$-vectors.  Then the
following hold: 
\begin{enumerate}\itemsep=0pt
\item $[au+bv]^\sigma = a[u]^\sigma + b[v]^\sigma$ for
every~$a,b \in \reals$,
\item $\normtwo{[u]^\sigma} \leq \normtwo{u}$,
\item $K \subseteq \Sphere{0^I}{R}$ implies~$[K]^\sigma \subseteq
\Sphere{0^J}{R}$,
\item $u \in \Sphere{K}{\delta}$ implies~$[u]^\sigma
\in \Sphere{[K]^\sigma}{\delta}$,
\item if~$K$ is convex, then~$[K]^\sigma$ is convex, and 
\item if~$K$ is
bounded and closed, then~$[K]^\sigma$ is bounded and closed,
\item if~$\delta > 0$
and~$\sigma$ respects~$u$, and~$\norminf{u}=1$ and~$\innprod{u}{v} +
\delta \geq \sup\{\innprod{u}{x} : x \in K\}$,
then~$\norminf{(u)^\sigma_{\mathrm{n}}}=1$
and~$\innprod{(u)^\sigma_{\mathrm{n}}}{[v]^\sigma} + \delta \geq
\sup\{\innprod{(u)^\sigma_{\mathrm{n}}}{x} : x \in [K]^\sigma\}$.
\end{enumerate}
\end{lemma}

\begin{proof}
Property 1 is straightforward by definition. 
Property 2 follows
from the inequality~$(x_1 + \cdots + x_d)^2 \leq (x_1^2 + \cdots +
x_d^2) d$, which is the special case of the Cauchy-Schwartz
inequality~$| \langle x,y \rangle | \leq \normtwo{x}\normtwo{y}$
where~$y$ is the~$d$-dimensional all-ones vector. 
Property 3 is an
immediate consequence of 2. 
Property 4 follows from 1 and 2:
if~$\normtwo{u-x} \leq \delta$, then~$\normtwo{[u]^\sigma -
  [x]^\sigma} = \normtwo{[u-x]^\sigma} \leq \normtwo{u-x} \leq
\delta$. 
Property 5 follows from
the fact that the map~$u \mapsto [u]^\sigma$ is linear.
Property 6
follows from the fact that a continuous image of a compact set is
compact: indeed the map~$u \mapsto [u]^\sigma$ is continuous, and a
subset of Euclidean space is compact if and only if it is closed and
bounded.
Property 7 follows from the straightforward fact that
whenever~$\sigma$ respects~$u$, we
have~$\innprod{(u)^\sigma}{[y]^\sigma} = \innprod{u}{y}$
and~$\norminf{(u)^\sigma} \geq
\norminf{u}$. Indeed,~$\sup\{\innprod{(u)^\sigma_{\mathrm{n}}}{x} : x
\in [K]^\sigma\} = \sup\{\innprod{(u)^\sigma_{\mathrm{n}}}{[x]^\sigma}
: x \in K \}$, and for every~$x \in K$ we
have~$\innprod{(u)^\sigma_{\mathrm{n}}}{[x]^\sigma} -
\innprod{(u)^\sigma_{\mathrm{n}}}{[v]^\sigma} =
\innprod{(u)^\sigma_{\mathrm{n}}}{[x-v]^\sigma}$.  Now
either~$\innprod{(u)^\sigma_{\mathrm{n}}}{[x-v]^\sigma}\leq 0 <
\delta$, or~$\innprod{(u)^\sigma_{\mathrm{n}}}{[x-v]^\sigma} > 0$ and
since~$\norminf{(u)^\sigma} \geq \norminf{u} = 1$, we
have~$\innprod{(u)^\sigma_{\mathrm{n}}}{[x-v]^\sigma} =
\innprod{(u)^\sigma}{[x-v]^\sigma} / \norminf{(u)^\sigma} \leq
\innprod{(u)^\sigma}{[x-v]^\sigma} = \innprod{u}{x-v} =
\innprod{u}{x}-\innprod{u}{v} \leq \delta$.  
\end{proof}

There is one further important property of foldings that we need for
correctness of the FPC-interpretation that we are about to define.
Let us extend the definition of the
set~$E(A,a) = \{ x \in \reals^J : (x-a)^T A (x-a) \leq 1 \}$ to
arbitrary positive semidefinite matrices~$A$. It should be noted that
if~$A$ is positive semidefinite but not positive definite, then at
least one of the semi-axes of~$E(A,a)$ is infinite and hence the set
is unbounded.  In this case we call~$E(A,a)$ an \emph{unbounded
  ellipsoid}.  If the simulation of the run of $\CC$ is executed until
the end over a folded~$[K]^\sigma$ and the output bit is~$0$, then the
algorithm certifies that~$[K]^\sigma$ is contained in an ellipsoid of
a small volume (see point~3 immediately following the statement of
Theorem~\ref{thm:centralcut}). To ensure that the volume of~$K$ itself
is small we use the following lemma.

\begin{lemma} \label{lem:volume} Let $K \subseteq \reals^I$ be a set,
  let~$\sigma : I \rightarrow J$ be an onto map, and let~$R \in
  \reals^{J \times I}$ and~$L \in \reals^{I \times J}$ be the matrices
  that define the linear maps~$u \mapsto [u]^\sigma$ and~$v \mapsto
  [v]^{-\sigma}$, respectively. If there is a positive definite
  matrix~$A \in \reals^{J \times J}$ and a vector~$a \in \reals^J$
  such that~$[K]^\sigma \subseteq E(A,a)$, then~$K \subseteq E(R^T A
  R, L a)$. Moreover, for every~$\epsilon > 0$ and~$r > 0$,
  if~$\mathrm{vol}(E(A,a)) \leq \epsilon$, then~$\mathrm{vol}(E(R^T A
  R, La) \cap \Sphere{0^I}{r})
\leq 2^n r^{n-1} n k \epsilon^{1/k}$, where $n = |I|$ and $k = |J|$.
\end{lemma}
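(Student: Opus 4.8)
The first assertion is a direct computation. Note that the two linear maps are related by $\sigma$ through $R_{ji} = \mathbf{1}[\sigma(i)=j]/|\sigma^{-1}(j)|$ and $L_{ij} = \mathbf{1}[\sigma(i)=j]$, so that $[u]^\sigma = Ru$ and $[v]^{-\sigma} = Lv$. For the containment $K \subseteq E(R^T A R, La)$, I would take an arbitrary $u \in K$ and check the defining inequality directly: since $[u]^\sigma = Ru \in E(A,a)$ by hypothesis, we have $(Ru - a)^T A (Ru - a) \leq 1$. The plan is to massage $(u - La)^T R^T A R (u - La)$ into $(Ru - RLa)^T A (Ru - RLa)$ and then observe that $RLa = a$. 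The identity $RL = \identitymatrix_J$ holds because $(RL)_{jj'} = \sum_{i} R_{ji} L_{ij'} = \frac{1}{|\sigma^{-1}(j)|}\sum_{i \in \sigma^{-1}(j)} \mathbf{1}[\sigma(i) = j'] = \mathbf{1}[j = j']$. This finishes the first part; note it does not even use positive definiteness of $A$, only that $A$ is a matrix, though we do want $E(A,a)$ and $E(R^TAR, La)$ to be genuine (possibly unbounded) ellipsoids, which is why $A \succeq 0$ is the relevant regime.

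The volume bound is the substantive part. The matrix $R^T A R$ is in general only positive semidefinite, not positive definite, since $R$ maps $\reals^I$ onto $\reals^J$ with a nontrivial kernel of dimension $n - k$; so $E(R^T A R, La)$ is an unbounded ellipsoid — a cylinder whose cross-section is a genuine $k$-dimensional ellipsoid of volume controlled by $\mathrm{vol}(E(A,a))$ and whose axis directions span the $(n-k)$-dimensional kernel of $R$. Intersecting this cylinder with the ball $\Sphere{0^I}{r}$ of radius $r$ confines the unbounded directions to a region of diameter $\leq 2r$, so one expects a bound of the shape $\mathrm{vol}(E(A,a)) \cdot (2r)^{n-k}$ up to a normalization factor relating the $k$-dimensional cross-sectional volume to the original ellipsoid volume in the $J$-coordinates. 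The plan is: (i) diagonalize, writing $A = U^T \Lambda U$ with $U$ orthogonal and $\Lambda = \mathrm{diag}(\lambda_1,\ldots,\lambda_k)$ all $\lambda_\ell > 0$, so that $\mathrm{vol}(E(A,a)) = \omega_k / \sqrt{\det A} = \omega_k / \sqrt{\lambda_1 \cdots \lambda_k}$ where $\omega_k$ is the volume of the unit $k$-ball; (ii) decompose $\reals^I = (\ker R)^\perp \oplus \ker R$, with $(\ker R)^\perp$ of dimension $k$; on $(\ker R)^\perp$ the quadratic form $R^T A R$ restricts to a positive definite form whose ellipsoid has volume bounded in terms of $\mathrm{vol}(E(A,a))$ times the norm of $R$ restricted to that subspace, and by Lemma~\ref{lem:props}(2) this operator norm is at most $1$, so that cross-sectional volume is at most $\mathrm{vol}(E(A,a))$; (iii) on $\ker R$, the constraint coming from the ball $\Sphere{0^I}{r}$ gives a set of diameter at most $2r$ in each of the $n - k$ remaining coordinates, contributing a factor at most $(2r)^{n-k} \leq 2^{n} r^{n-1}$ (using $n - k \leq n$ and $n-k \leq n-1$ since $k \geq 1$); (iv) multiply, using Fubini over the orthogonal decomposition.

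The main obstacle, and the place where the stated bound's precise shape — in particular the $n k \epsilon^{1/k}$ factor rather than simply $\epsilon$ — comes from, is step (ii): one cannot simply say the cross-sectional $k$-volume is at most $\epsilon$, because $\mathrm{vol}(E(A,a)) \leq \epsilon$ alone does not control each individual semi-axis of $E(A,a)$; a very thin ellipsoid of small volume can still be very long in one direction, and pulling it back through $R^T$ and then truncating the unbounded cylinder by the ball of radius $r$ can inflate the volume. The honest way to handle this is to bound the semi-axes of $E(A,a)$ individually: the $k$-th largest semi-axis has length $\lambda_k^{-1/2}$, and since all $k$ semi-axes together give volume $\omega_k \prod \lambda_\ell^{-1/2} \leq \epsilon$, the \emph{smallest} semi-axis is at most $(\epsilon/\omega_k)^{1/k}$ in the worst case only when all are equal — but in general some semi-axis is at most $(\epsilon/\omega_k)^{1/k}$, which after pulling back accounts for one factor of $\epsilon^{1/k}$, while the other $k - 1$ pulled-back directions must be bounded using the ball constraint (hence radius $\leq 2r$, giving powers of $r$) and the crude estimate $\omega_k \leq 1$ and counting-of-dimensions factors $n$, $k$ absorb the constants. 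So the plan for step (ii) is: sort the semi-axes, peel off the one shortest axis to harvest $\epsilon^{1/k}$, bound the remaining $k-1$ stretched directions plus the $n-k$ kernel directions together by the ball of radius $r$, giving at most $(2r)^{n-1}$, and collect all the dimension-counting constants into the factor $2^n n k$. I expect the bookkeeping of constants to be the only delicate point; the geometric picture (truncated elliptical cylinder, one short axis) is clear.
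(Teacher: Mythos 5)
Your corrected plan (the final paragraph, not the abandoned Fubini decomposition) is essentially the paper's proof: the first part is the same computation via $RL=\identitymatrix$, and the volume bound is obtained exactly as you describe, by extracting one short semi-axis of $E(A,a)$ from the volume hypothesis, transferring it to $E(R^TAR,La)$, and enclosing the remaining $n-1$ directions in $[-r,r]$ using $\Sphere{0}{r}\subseteq[-r,r]^n$ in the eigenbasis of $R^TAR$. The one point you should not file under ``bookkeeping of constants'' is the transfer step: the assertion that the short semi-axis of $E(A,a)$ ``after pulling back accounts for one factor of $\epsilon^{1/k}$'' is the only nontrivial computation in the lemma, since a priori the map $R^T$ could stretch that particular direction. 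The paper proves it as the eigenvalue inequality $\lambda_{\max}(R^TAR)\geq\mu_{\max}(A)/n$ by a Rayleigh-quotient test on the vector $x=Ly$, where $y$ is a top eigenvector of $A$: one has $x^TR^TARx=y^TAy=\mu_{\max}(A)\,y^Ty$ because $RL=\identitymatrix$, and $x^Tx\leq n\,y^Ty$ because $L$ merely duplicates coordinates. This is where the factor $n$ in the stated bound comes from, while the factor $k$ comes from $V_k>k^{-k}$ in your step (i). With that inequality supplied, your outline closes up into the paper's argument.
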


\begin{proof}
Assume that~$[K]^\sigma \subseteq E(A,a)$, where~$A = B^TB$ is
positive definite. Take a point~$x \in K$.  We want to show that~$x$
is in~$E((BR)^T (BR), La)$. We have:
\begin{align}
\normtwo{BR(x-La)}^2 = \normtwo{B(Rx-RLa)}^2 = \normtwo{B(Rx-a)}^2 \leq 1,
\end{align}
with the first equality following from the linearity of~$R$, the
second equality following from the easily verified fact
that~$[[a]^{-\sigma}]^\sigma = a$, and the inequality following from
the fact that~$x \in K$ and hence~$Rx = [x]^\sigma$ belongs
to~$[K]^\sigma \subseteq E(A,a) = E(B^TB,a)$.

For the second part of the proof, observe that the matrix~$R^T A R =
(BR)^T (BR)$ is positive semidefinite. Let~$\lambda_1 \geq \cdots \geq
\lambda_n \geq 0$ be the eigenvalues of~$R^T A R$, let~$V =\{
u_1,\ldots,u_n \}$ be an orthonormal basis of corresponding
eigenvectors, and let~$(b_1, \ldots, b_n)$ be the coordinates of~$La$
with respect to the basis~$V$. The axes of symmetry of the (possibly
unbounded) ellipsoid~$E(R^T A R, La)$ correspond to the vectors
in~$V$. As we show below,~$\lambda_1 > 0$ and therefore the shortest
axis of~$E(R^T A R, La)$ has a finite length~$ 2 (1 /
\lambda_1)^{1/2}$.  It follows that~$E(R^T A R, La)$ is contained in
the set of points whose coordinates, with respect to the basis~$V$,
are given by~$[b_1 - (1 / \lambda_1)^{1/2}, b_1 + (1 /
  \lambda_1)^{1/2}] \times \reals^{n-1}$. Since
the~$r$-ball~$\Sphere{0}{r}$ is inscribed in the~$n$-dimensional
hypercube~$[-r,r]^n$, where the coordinates are again given with
respect to the basis~$V$, this implies that~$E(R^T A R, La) \cap
\Sphere{0^I}{r}$ is contained in~$[b_1 - (1 / \lambda_1)^{1/2}, b_1 +
  (1 / \lambda_1)^{1/2}] \times [-r,r]^{n-1}$.  Hence,
\begin{equation}
\mathrm{vol}(E(R^T A R, La) \cap \Sphere{0^I}{r}) \leq
2 (1 / \lambda_1)^{1/2} (2r)^{n-1} = 2^{n} r^{n-1}(1 /
\lambda_1)^{1/2}.
\end{equation}
We will finish the proof by showing that
$\mathrm{vol}(E(A,a)) \leq \epsilon$ implies $(1 / \lambda_1)^{1/2}
\leq n k \epsilon^{1/k}$,
and in particular $\lambda_1 > 0$.

Let~$\mu_1 \geq \cdots \geq \mu_{k} > 0$ be the eigenvalues of the
matrix~$A$. We have
\begin{equation}
\mathrm{vol}(E(A,a)) = V_{k} (1 / \mu_1)^{1/2} \cdots (1 /
\mu_{k})^{1/2} \geq V_{k} (1 / \mu_1)^{k/2},
\end{equation}
where~$V_k$ denotes the
volume of a~$1$-ball in the~$k$-dimensional real vector space (for the
volume of an ellipsoid see,
e.g.,~\cite{GroetschelLovaszSchrijverBook}). Therefore,
if~$\mathrm{vol}(E(A,a)) \leq \epsilon$, then~$\mu_1 \geq (V_{k} /
\epsilon)^{2/k} > k^{-2} (1/\epsilon)^{2/k}$, where the last
inequality follows from the fact that~$V_k > k^{-k}$. Now, let~$y \in
\reals^{J}$ be an eigenvector of~$A$ corresponding to the
eigenvalue~$\mu_1$, and let~$x = Ly$. Note that~$x^Tx \leq n
y^Ty$. Hence,
\begin{equation}
x^T  R^T A R x = y^T A y = \mu_1 y^T y \geq (\mu_1/n) x^T x.
\end{equation}
Since $y \not= 0$ also $x \not= 0$, and the Rayleigh quotient
principle implies that $\lambda_1 \geq \mu_1/n > 0$. Hence $\lambda_1
\geq k^{-2} (1/\epsilon)^{2/k} / n$, which gives $(1 /
\lambda_1)^{1/2} \leq n^{1/2} k \epsilon^{1/k} \leq nk\epsilon^{1/k}$.
\end{proof}

From now on, all maps~$\sigma : I \rightarrow J$ will be onto and
have~$J = [k]$ for some positive integer~$k$. Such maps define a
preorder~$\leq_\sigma$ on~$I$ with exactly~$k$ equivalence classes
which is defined by~$i \leq_\sigma i'$ if and only if~$\sigma(i) \leq
\sigma(i')$. A second map~$\sigma' : I \rightarrow [k']$ is a
\emph{refinement} of~$\sigma$ if~$\sigma'(i) \leq \sigma'(i')$
implies~$\sigma(i) \leq \sigma(i')$.  The refinement is \emph{proper}
if there exist~$i,i' \in I$ such that~$\sigma'(i) < \sigma'(i')$
and~$\sigma(i) = \sigma(i')$.  Recall that~$\sigma : I \rightarrow
[k]$ \emph{respects} a vector~$v \in \reals^I$ if~$v(i) = v(i')$
whenever~$\sigma(i) = \sigma(i')$. Since any bijective map respects
any vector, observe that if~$\sigma$ does not respect~$v$, then there
exists at least one proper refinement of~$\sigma$ that does
respect~$v$. We aim for a canonical such refinement, that we
denote~$\sigma^v$, and that is definable in~$\FPC$. We define it as
follows.

Fix an onto map $\sigma : I \rightarrow [k]$ and a vector $v \in
\reals^I$. Define:
\medskip
\begin{center}
\begin{tabular}{lclll}
$n(j)$ & $:=$ & $|\{v(\ell) : \sigma(\ell) = j \}|$ & & for $j \in [k]$, \\
$m(i)$ & $:=$ & $|\{v(\ell) : \sigma(\ell) = \sigma(i),\; 
v(\ell) \leq v(i) \}|$ & & for $i \in I$, \\
$\sigma'(i)$ & $:=$ & $n(1) + \cdots + n(\sigma(i)-1) + m(i)$ & &
for $i \in I$, \\
$k'$ & $:=$ & $n(1) + \cdots + n(k)$. & &
\end{tabular}
\end{center}
\medskip
In words,~$n(j)$ is the number of distinct~$v$-values in the~$j$-th
equivalence class of~$\leq_\sigma$, and~$m(i)$ is the number of
distinct~$v$-values in the equivalence class of~$i$ that are no bigger
than the~$v$-value~$v(i)$ of~$i$. The map~$\sigma' : I \rightarrow
[k']$ is our~$\sigma^v$. Note that if~$\sigma$ respects~$v$,
then~$\sigma^v = \sigma$.  On the other hand:

\begin{fact}
If~$\sigma$ does not respect~$v$, then~$\sigma^v$ is onto and a proper
refinement of~$\sigma$ that respects~$v$.
\end{fact}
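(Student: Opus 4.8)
The plan is to verify the three claimed properties of the map $\sigma^v = \sigma'$ directly from its definition, under the hypothesis that $\sigma$ does not respect $v$. First I would note that $\sigma'$ is well-defined as a map $I \to [k']$: the values $m(i)$ satisfy $1 \leq m(i) \leq n(\sigma(i))$ since $m(i)$ counts a nonempty subset (it contains at least $v(i)$ itself) of the $n(\sigma(i))$ distinct $v$-values in $i$'s $\leq_\sigma$-class, and hence $\sigma'(i)$ lies between $n(1)+\cdots+n(\sigma(i)-1)+1$ and $n(1)+\cdots+n(\sigma(i))$, which is contained in $[k']$ since $k' = n(1)+\cdots+n(k)$ and each $n(j) \geq 1$.

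Next I would prove that $\sigma'$ is onto. Fix any target $t \in [k']$; there is a unique $j \in [k]$ such that $n(1)+\cdots+n(j-1) < t \leq n(1)+\cdots+n(j)$, so writing $t = n(1)+\cdots+n(j-1)+p$ we have $1 \leq p \leq n(j)$. Since $n(j)$ is the number of distinct $v$-values in the $j$-th $\leq_\sigma$-class, we may pick the $p$-th smallest such value, say $w$, and any $i$ in that class with $v(i) = w$; then by construction $m(i) = p$ and $\sigma(i) = j$, so $\sigma'(i) = t$. Thus every element of $[k']$ is hit.

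Then I would check that $\sigma'$ is a refinement of $\sigma$, i.e.\ that $\sigma'(i) \leq \sigma'(i')$ implies $\sigma(i) \leq \sigma(i')$. Arguing contrapositively, suppose $\sigma(i') < \sigma(i)$; then $\sigma'(i') \leq n(1)+\cdots+n(\sigma(i'))$ while $\sigma'(i) \geq n(1)+\cdots+n(\sigma(i)-1)+1 \geq n(1)+\cdots+n(\sigma(i')) + 1$, using $\sigma(i') \leq \sigma(i)-1$ and $n(j) \geq 1$; hence $\sigma'(i') < \sigma'(i)$, so $\sigma'(i) \leq \sigma'(i')$ fails. It remains to see that $\sigma'$ respects $v$ and that the refinement is proper. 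For the first, if $\sigma'(i) = \sigma'(i')$ then comparing the block decomposition forces $\sigma(i) = \sigma(i')$ (since the blocks partition $[k']$) and then $m(i) = m(i')$; within a single $\leq_\sigma$-class the function $i \mapsto m(i)$ is a strictly monotone function of $v(i)$ (it counts distinct values $\leq v(i)$), so $m(i) = m(i')$ forces $v(i) = v(i')$. For properness, since $\sigma$ does not respect $v$ there are $i_0, i_0'$ with $\sigma(i_0) = \sigma(i_0')$ but $v(i_0) \neq v(i_0')$; then as just observed $m(i_0) \neq m(i_0')$, so $\sigma'(i_0) \neq \sigma'(i_0')$ even though $\sigma(i_0) = \sigma(i_0')$, which is exactly the definition of a proper refinement.

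I do not anticipate a genuine obstacle here; the only point requiring care is the repeated use of the normalization $n(j) \geq 1$ (valid because each $\leq_\sigma$-class is nonempty) to make the block-boundary inequalities strict, and the observation that $i \mapsto m(i)$ restricted to one $\leq_\sigma$-class is an order isomorphism between the set of distinct $v$-values in that class and an initial segment of the integers — both are bookkeeping facts rather than real difficulties.
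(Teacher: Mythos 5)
Your proof is correct and complete: the well-definedness bound $1 \leq m(i) \leq n(\sigma(i))$, the block-decomposition argument for surjectivity and for the refinement property, the injectivity of $i \mapsto m(i)$ on each $\leq_\sigma$-class as a function of $v(i)$, and the derivation of properness from a witnessing pair $i_0, i_0'$ are all sound. The paper states this Fact without proof, and your direct verification from the definitions is exactly the routine bookkeeping argument the authors evidently intend the reader to supply.
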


\noindent Although not strictly needed, it is useful to note
that~$\sigma^v$ is a coarsest refinement of~$\sigma$ that
respects~$v$.  The final lemma before we proceed to the proof of
Theorem~\ref{thm:definableellipsoid} collects a few computation tasks
about foldings that are~$\FPC$-definable:

\begin{lemma} \label{lem:things}
The following operations have~$\FPC$-interpretations:
\begin{enumerate} \itemsep=0pt
\item given a set~$I$, output the~$0$ vector~$0^I$ and the
  constant~$1$ map~$\sigma : I \rightarrow [1]$,
\item given~$u \in \rationals^I$ and onto~$\sigma : I \rightarrow
  [k]$, output~$(u)^\sigma_{\mathrm{n}}$,
\item given~$u \in \rationals^k$ and onto~$\sigma : I \rightarrow
  [k]$, output~$[u]^{-\sigma}$,
\item given~$u \in \rationals^I$ and onto~$\sigma : I \rightarrow
  [k]$, output~$1$ if~$\sigma$ respects~$u$, else output $0$,
\item given~$u \in \rationals^I$ and~$\sigma : I \rightarrow [k]$,
  output~$\sigma^u : I \rightarrow [k']$.
\end{enumerate}
\end{lemma}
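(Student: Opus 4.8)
The plan is to verify each of the five items of Lemma~\ref{lem:things} separately, since they are all of the same flavour: each asks for an $\FPC$-interpretation computing a simple arithmetic function of the input structures (vectors over $\rationals$ and onto maps $\sigma$, themselves encoded as finite relational structures). The uniform idea is that $\FPC$ has counting quantifiers and access to the built-in numeric sort with $\leq,+,\times$, and that it can simulate integer and rational arithmetic on binary representations; combining this with the Immerman--Vardi style observation that any polynomial-time computable numerical operation on such encodings is $\FPC$-expressible gives all five. I would, however, keep the proof self-contained by spelling out the defining formulas in each case rather than invoking a generic transfer principle, since the index sets come in as unordered sorts and some care is needed.

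First I would handle items (1)--(4), which are the easy ones. For (1): given a structure with the index sort $\bar I$, the output $0$-vector $0^I$ has $P_s$, $P_n$, $P_d$ all empty except for a single $1$-bit in the denominator of each entry, which is trivially first-order definable; and the constant map $\sigma:I\to[1]$ is the one whose target sort $[1]$ is a singleton, also immediate. For (2), the folding $[u]^\sigma(j)=\frac{1}{|\sigma^{-1}(j)|}\sum_{i\in\sigma^{-1}(j)}u(i)$ is a sum of $|I|$ many rationals followed by a division by the count $|\sigma^{-1}(j)|$; iterated summation of a definable family of rationals is $\FPC$-definable (bring all summands to a common denominator using iterated products, which $\FPC$ can do via a fixed-point computing partial products, then add the numerators, again by a fixed-point accumulating the running sum), and division by a definable natural number is just multiplying the denominator. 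For (3), the unfolding $[v]^{-\sigma}(i)=v(\sigma(i))$ is a pure ``lookup'': $P_\bullet(i,b)$ for the output holds iff there is $j\in[k]$ with $\sigma(i)=j$ and $P_\bullet(j,b)$ in the input for $v$ — this is first-order. For (4), ``$\sigma$ respects $u$'' is the first-order sentence $\forall i\,\forall i'\,(\sigma(i)=\sigma(i')\to u(i)=u(i'))$, where $u(i)=u(i')$ is expressed by saying the sign, numerator-bits and denominator-bits agree (assuming the rationals are stored in a canonical, e.g.\ reduced, form, or else by comparing cross-multiplications, which $\FPC$ can compute).

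The main obstacle, and the one I would spend the most space on, is item (5): producing $\sigma^u:I\to[k']$ as an $\FPC$-interpretation. Here the difficulty is partly notational — the output sort $[k']$ is a fresh ordered sort whose size $k'=n(1)+\cdots+n(k)$ depends on the input — and partly that the construction is defined through the auxiliary counting functions $n(j)$ and $m(i)$ and a prefix-sum. The plan is: (a) define $n(j):=|\{u(\ell):\sigma(\ell)=j\}|$ for $j\in[k]$ by a counting quantifier that counts the number of \emph{distinct} $u$-values among $\ell$ with $\sigma(\ell)=j$ — distinctness is handled by counting, for each $\ell$, only those that are the ``$\leq$-least representative'' of their value, i.e.\ using the numeric order to pick canonical representatives; similarly define $m(i)$; (b) realise the output sort $[k']$ as a definable quotient of the numeric sort $N$ of the input structure cut down to $\{1,\ldots,k'\}$, which is fine because $k'\le |I|$ and the numeric sort has at least $|I|+1$ elements, equipped with its built-in order (this is where the counting and the ordered numeric sort of $\FPC$ are essential); (c) define the prefix sum $n(1)+\cdots+n(\sigma(i)-1)$ by an $\FPC$ term (an iterated sum over $[k]$, again a fixed-point), add $m(i)$, and declare $\sigma'(i)$ to be that numeric value. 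Correctness is then just the verification, already carried out in the text preceding the lemma, that this $\sigma'$ equals $\sigma^u$. I expect steps (a) and (b) — counting distinct values and setting up the variable-size ordered target sort inside the numeric sort — to be the crux; once those are in place, the prefix-sum and the final assembly are routine $\FPC$ bookkeeping.
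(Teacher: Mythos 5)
Your overall plan is the same as the paper's: the paper's entire proof is a one-sentence appeal to the facts that $\FPC$ can do basic rational arithmetic, can compute sums of definable families of rationals, and can compute cardinalities of definable sets, and your item-by-item treatment just unfolds those three primitives. Items (1), (3), (4) and your handling of (5) are fine; in particular your realisation of the target sort $[k']$ inside the numeric sort and the use of $\leq$-least representatives to count \emph{distinct} $u$-values are exactly the right moves, and the prefix sum in (5) is over the ordered set $[k]$, so an inductive/fixed-point accumulation is legitimate there.

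The one step that would fail as literally described is your mechanism for item (2): you propose to compute $\sum_{i \in \sigma^{-1}(j)} u(i)$ (and the common denominator $\prod_i d_i$) by ``a fixed-point accumulating the running sum'' (resp.\ ``partial products''). The index set $\sigma^{-1}(j) \subseteq I$ is an \emph{unordered} sort, so there is no definable order in which a fixed-point can visit the summands one at a time; a running accumulation is exactly the kind of choice-dependent computation that $\FPC$ cannot perform directly, and is the reason counting is needed at all. The standard repair, and the primitive the paper implicitly invokes, is the counting-based construction of iterated addition: for each bit position $b$ count how many summands have a $1$ in position $b$ (a counting quantifier over a definable set), obtain the total as a sum of $O(\log)$ many shifted counts, and resolve carries by ordinary (ordered) arithmetic on the numeric sort; iterated products of a definable family are handled similarly (this is worked out in Anderson--Dawar--Holm and in Holm's thesis). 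With that substitution your proof is complete and matches the paper's intent; without it, the fixed-point as you describe it is not well-defined on an unordered index set.
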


\begin{proof} 
All five cases are straightforward given the ability of~$\FPC$ to
perform the basic arithmetic of rational numbers, compute sums of sets
of rationals indexed by definable sets, and compute cardinalities of
definable sets.
\end{proof}

\subsection{Proof of Theorem~\ref{thm:definableellipsoid}}\label{subsec:proof}

Let~$\Psi$ be an~$\FPC$-interpretation that witnesses that the
not-so-weak separation problem for~$\mathscr{C}$ is~$\FPC$-definable.
We start by showing that there is an~$\FPC$-interpretation~$\Psi'$
that either simulates the not-so-weak separation oracle
for~$[K]^\sigma$ or outputs a vector not respected by~$\sigma$.  More
precisely,~$\Psi'$ takes as input a representation of a set~$K
\subseteq \reals^I$ in~$\mathscr{C}$, an onto mapping~$\sigma : I
\rightarrow [k]$ where~$k$ is an integer that satisfies~$1 \leq k \leq
|I|$, a vector~$y \in \rationals^k$, and a rational~$\delta > 0$ and
outputs an integer~$b \in \{-1,0,1\}$ and a vector~$s \in
\rationals^I$ such that~$\norminf{s} = 1$ and:
  \begin{enumerate} \itemsep=0pt
  \item~$b = 1$ and~$\sigma$ respects~$s$ and~$[y]^{-\sigma} \in
    S(K,\delta)$ and~$y \in S([K]^\sigma,\delta)$, or
  \item~$b = 0$ and~$\sigma$ respects~$s$
    and~$\innprod{(s)^\sigma_{\mathrm{n}}}{y} + \delta \geq \sup\{
    \innprod{(s)^\sigma_{\mathrm{n}}}{x} : x \in [K]^\sigma \}$, or
  \item~$b = -1$ and~$\sigma$ does not respect~$s$.
  \end{enumerate}
  Concretely, let~$\Psi'$ be the interpretation that does the
  following:
  \smallskip
  \begin{center}
  \begin{tabbing} 
  \;\; \= 01. \;\; \= given a representation of $K \subseteq \reals^I$
  in $\mathscr{C}$, $\sigma : I \rightarrow [k]$, $y \in \rationals^k$, 
  and $\delta \in \rationals$, \\
  \> 02. \> compute $y^- := [y]^{-\sigma}$ and $(b,s) := \Psi(K; y^-, \delta)$, \\
  \> 03. \> if $\sigma$ respects $s$, output the same $(b,s)$, \\
  \> 04. \> if $\sigma$ does not respect $s$, output $(-1,s)$.
  \end{tabbing}
  \end{center}
  \smallskip
  The claim that~$\Psi'$ is~$\FPC$-definable follows from
  points~3\ and~4\ in Lemma~\ref{lem:things}.  The claim
  that~$\Psi'$ satisfies the required conditions follows from the
  correctness of~$\Psi$, together with the fact
  that~$[[y]^{-\sigma}]^\sigma = y$, and Properties~4 and~7 in
  Lemma~\ref{lem:props}. For later use, let us note that if the
  given~$\sigma : I \rightarrow [k]$ is a bijection, then the third
  type of output~$b=-1$ cannot occur.

  Next we show how to use~$\Psi'$ in order to implement, in~$\FPC$,
  the algorithm $\CC$ from Theorem~\ref{thm:centralcut}.
  Consider the following variant $\CCC$ of $\CC$:
  \smallskip
  \begin{center}
  \begin{tabbing} 
    \;\; \= 01. \;\;\= given a rational $\epsilon > 0$ and a representation of a set
    $K \subseteq \reals^I$ in $\mathscr{C}$, \\
    \> 02. \> compute $R$ satisfying $K \subseteq \Sphere{0^I}{R}$ 
    from the representation of $K$, \\
    \> 03. \> let $n := |I|$ and $k := 1$, and let $\sigma : I \rightarrow [1]$ be the
    constant $1$ map, \\
    \> 04. \> start a run of $\CC$ on input $(\gamma,k,R)$ 
    where $\gamma := \min\{(\epsilon/(2^n R^{n-1} n k))^k,\epsilon\}$, \\
    \> 05. \> given an oracle query $(y,\delta)$, replace it by
    $(b,s) := \Psi'(K; \sigma, y, \delta)$, \\
    \> 06. \> if $\sigma$ respects $s$, then \\
    \> 07. \> \;\;\;\; \= compute $(s)^\sigma_{\mathrm{n}}$ and take the pair $(b,(s)^\sigma_{\mathrm{n}})$ as an output to the query $(y,\delta)$, \\
    \> 08. \> \> if the run of $\CC$ makes a new query $(y,\delta)$, go back to step 05, \\
    \> 09. \> \> if the run of $\CC$ makes no more queries, go to step 13, \\
    \> 10. \> else \\
    \> 11. \> \> compute $\sigma^s : I \rightarrow [k']$, the canonical 
    refinement of $\sigma$ that respects $s$, \\
    \> 12. \> \> reboot the run of $\CC$ with $\sigma := \sigma^s$ and $k := k'$ and go back to step 04, \\
    \> 13. \> let $(b,s)$ be the output of $\Psi'$ for the last oracle call $(y,\delta)$, \\
    \> 14. \> output $(b,[y]^{-\sigma})$. 
\end{tabbing}
\end{center}
\smallskip
As discussed above, $\CCC$ simulates the ellipsoid method over
folded versions~$[K]^\sigma$ of~$K$.
By Properties~3,~5 and~6 in Lemma~\ref{lem:props}
each such folded version is a circumscribed closed convex set.
A key aspect of~$\CC$ that makes this algorithm well defined is that
the only knowledge
about the targeted
set~$[K]^\sigma$
that it needs for steps~04,~05,~08 and~09 are its dimension~$k$,
its bounding radius~$R$, and correct answers to the earlier queries
to the not-so-weak separation oracle for~$[K]^\sigma$ itself.
In particular, the
algorithm does not need the folded~$[K]^\sigma$ to belong to the
class~$\mathscr{C}$.
Indeed, as long as all~$s$ vectors are respected
by a particular~$\sigma$,
Properties~4 and~7 in Lemma~\ref{lem:props}
guarantee that the
answers in step~07 stay
consistent with the
assumption that the circumscribed closed convex set given by the oracle
\emph{is}~$[K]^\sigma$. As soon as an~$s$ vector that is not respected
by~$\sigma$ is found, the map~$\sigma$ is refined, and the run
of~$\CC$ is rebooted with the new~$k$ and~$\gamma$ for the
new~$\sigma$ (and the same~$R$). 

No later than after~$|I|$
refinements of~$\sigma$, the simulation of~$\CC$ will be executed until
the end. This happens at the latest when~$\sigma$ becomes the totally
refined map: at that point~$\sigma$ is a bijection that respects
every~$s$.
Whenever a run of the simulation is completed, the algorithm reaches step~13
with a pair~$(b,s)$ and a~$\sigma$ that respects~$s$.  We use this to
show that $\CCC$ solves the weak feasibility problem for~$K$, and that
it can be implemented in~$\FPC$.

The claim that~$\CCC$ solves the weak feasibility problem for any~$K$
in~$\mathscr{C}$ is proved as follows. Let~$(b,s)$ be the output
of~$\Psi'$ for the last oracle call~$(y,\delta)$ of the execution
of~$\CC$. As noted above,~$\sigma : I \rightarrow [k]$ respects~$s$
and hence~$b \in \{0,1\}$ by Property~3 in the description of~$\Psi'$.
If~$b = 1$, then~$[y]^{-\sigma} \in \Sphere{K}{\delta}$ by Property~1
in the description of~$\Psi'$, and~$\Sphere{K}{\delta} \subseteq
\Sphere{K}{\epsilon}$ because~$\delta \leq \gamma \leq \epsilon$. This
shows that~$(b,[y]^{-\sigma})$ is a correct output for the weak
feasibility problem for~$\epsilon$ and~$K$ in case~$b = 1$. In case~$b
= 0$ we have~$[K]^\sigma \subseteq E(A,a)$ for a positive definite
matrix~$A$ and a vector~$a$, with~$\mathrm{vol}(E(A,a)) \leq \gamma
\leq (\epsilon / (2^{n} R^{n-1} n k))^k$, by point~4\ immediately
following the statement of Theorem~\ref{thm:centralcut}.  Since~$K
\subseteq \Sphere{0}{R}$, by Lemma~\ref{lem:volume} this means that
the volume of~$K$ is at most~$\epsilon$ and the answer~$b = 0$ is a
correct output.

For the implementation in~$\FPC$, we note that~$\CCC$ is a relational
WHILE algorithm that halts after at most~$|I|$ iterations all whose
steps can be computed by~$\FPC$-interpretations without
quotients. Step~01 is the description of the input. Step~02 follows
from the fact that~$K$ has a circumscribed representation: just take
the~$L_{\rationals}$-reduct of the representation of~$K$,
where~$L_{\rationals}$ is the copy of the vocabulary that is used for
representing the rational radius~$R$.  Step~03 follows from
point~1\ in Lemma~\ref{lem:things}. Step~04 follows from the
Immerman-Vardi Theorem on the fact that the representation of~$[k]$ is
an ordered structure and the computation of~$\CC$ in between oracle
calls runs in polynomial time.  Step~05 follows from the fact
that~$\Psi'$ is~$\FPC$-definable.  Step~06 is just a control
statement.  Step~07 follows from point~2\ in Lemma~\ref{lem:things}. 
Step~08 follows, again, from the Immerman-Vardi Theorem on the fact
that the representation of~$[k]$ is an ordered structure and the
computation of $\CC$ in between oracle calls runs in polynomial
time. Step~09 follows from the same reason as Step~08. Step~10 is a
control statement. Step~11 follows from point~5\ in
Lemma~\ref{lem:things}. Steps~12 and~13 are just control statements.
Step~14 follows from point~3\ in Lemma~\ref{lem:things}.

This completes the proof of Theorem~\ref{thm:definableellipsoid}, and
this section.

\section{Feasibility of SDPs}\label{sec:sdp}

In this section we use Theorem~\ref{thm:definableellipsoid} to show
that the exact feasibility of semidefinite programs is definable
in~$\Cinf{\omega}$.

\subsection{Semidefinite sets}

The \emph{semidefinite set}~$K_{A,b} \subseteq \mathbb{R}^I$ is defined
by the constraints
\begin{equation}
\langle A_i,X \rangle \leq b_i \; \text{ for } i \in M \;\text{ and }\; X \succeq 0,
\end{equation}
where~$A \in \mathbb{R}^{M \times (J \times J)}$ is an indexed set
of~$J \times J$ matrices,~$b \in \mathbb{R}^M$ is an indexed set of
reals,~$X$ is a~$J \times J$ symmetric matrix of formal
variables~$x_{ij} = x_{ji} = x_{\{i,j\}}$ for~$i,j \in J$, and~$I =
\{\{i,j \} : i,j \in J \}$ is the set of variable indices. A
\emph{circumscribed semidefinite set} is a pair~$(K_{A,b} \subseteq
\mathbb{R}^I, R)$, where~$K_{A,b} \subseteq \mathbb{R}^I$ is a
semidefinite set as defined above and~$R$ is a rational
satisfying~$K_{A,b} \subseteq S(0^I,R)$.

When~$A$ and~$b$ have rational coefficients, the semidefinite
set~$K_{A,b} \subseteq \mathbb{R}^I$ is represented by a four-sorted
structure, with one sort~$\bar{I}$ for the set~$I$ of indices of
variables, two sorts~$\bar{J}$ and~$\bar{M}$ for the index sets~$J$
and~$M$, and one sort~$\bar{B}$ for a domain~$\{0,\ldots,N-1\}$ of bit
positions that is large enough to encode all the numbers in
binary. The vocabulary~$L_{\SDP}$ includes the following relation
symbols:
\begin{enumerate} \itemsep=0pt
\item three unary symbols~$I$,~$J$ and~$M$, for~$\bar{I}$,~$\bar{J}$
  and~$\bar{M}$, respectively,
\item one ternary symbol~$P$ of type~$\bar{I} \times \bar{J} \times
  \bar{J}$,
\item one binary symbol~$\leq$ for the natural linear order
  on~$\bar{B}$,
\item three 4-ary symbols~$P_{A,s}, P_{A,n}, P_{A,d}$,
\item three binary symbols~$P_{b,s}, P_{b,n}, P_{b,d}$.
\end{enumerate}
The relation that interprets~$P$ encodes the two indices of each
variable.  The relations that interpret~$P_{A,s},P_{A,n},P_{A,d}$
encode the signs and the bits of the numerators and the denominators
of the entries of the matrices in~$\{A_i : i \in M\}$. The relations
that interpret~$P_{b,s},P_{b,n},P_{b,d}$ encode the signs and the bits
of the numerators and the denominators of the rationals in~$\{b_i : i
\in M\}$. The representation of the circumscribed semidefinite
set~$(K_{A,b} \in \mathbb{R}^I, R)$ is a structure over the
vocabulary~$L_{\SDP} \disjointunion L_{\rationals}$
whose~$L_{\SDP}$-reduct is the representation of~$K_{A,b} \in
\mathbb{R}^I$, and whose~$L_{\rationals}$-reduct is the representation
of~$R$.

The class of semidefinite sets together with the representation
defined above form a represented class of sets, which we denote
by~$\mathscr{C}_{\SDP}$. Similarly, the class of circumscribed
semidefinite sets form a represented class of circumscribed sets
denoted~$\mathscr{C}_{\SDP}^C$.

In~\cite{DawarW17} Dawar and Wang show the
FPC-definability of the 
weak optimization problem for~$\mathscr{C}_{\SDP}^C$
with the additional assumption of the input SDP being non-empty. 

\begin{theorem}[\cite{DawarW17}]\label{thm:weakoptimizationnonempty}
There exists an~$\FPC$-interpretation that takes
as input a non-empty circumscribed
semidefinite set~$(K_{A,b} \subseteq \reals^{I}, R)$, a vector $c \in \reals^{I}$
and a rational $\delta > 0$,
and outputs a vector~$x \in \reals^{I}$
such that $x \in \Sphere{K_{A,b}}{\delta}$
and $\innprod{c}{x} + \delta \geq \sup \{ \innprod{c}{y}
  : y \in K_{A,b} \}$.
\end{theorem}

\noindent In order to do so they prove
Theorem~\ref{thm:definableellipsoid} for the special case of
full-dimensional
semidefinite sets and note that weak optimization
reduces to weak feasibility by adding the cost vector as a constraint.
They also observe that a non-empty input set
can be made full-dimensional by considering
an~$\epsilon$-perturbation of the constraints, 
for an appropriately chosen~$\epsilon > 0$.
Finally, they propose an~$\FPC$-interpretation for the
not-so-weak separation oracle. 
We work out the details of a 
variant of their 
oracle
construction that takes care of a missing step in their 
proof. This fix was first published in the preliminary report of this 
paper~\cite{DBLP:journals/corr/abs-1802-02388} 
and the conference version of this paper~\cite{10.1145/3209108.3209186}.
The revised version of Wang's PhD thesis~\cite{WangPhD}
included the same fix.

As a consequence of Theorem~\ref{thm:definableellipsoid}
and the FPC-definability of the not-so-weak separation oracle 
we get the following:

\begin{theorem}\label{thm:weakfeasibility}
The weak feasibility problem for circumscribed semidefinite sets
is definable in~$\FPC$.
\end{theorem}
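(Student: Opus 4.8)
The plan is to deduce the statement from Theorem~\ref{thm:definableellipsoid}: since $\mathscr{C}_{\SDP}^C$ is a represented class of circumscribed closed convex sets, it suffices to build an $\FPC$-interpretation for the not-so-weak separation problem for circumscribed semidefinite sets. Such an interpretation receives a representation of a semidefinite set $K_{A,b} \subseteq \reals^I$ together with its bounding radius $R$, a symmetric matrix $Y \in \reals^{J \times J}$ presented as the query point $y \in \reals^I$, and a rational $\delta > 0$, and must output a bit $b$ and a matrix $s \in \reals^I$ with $\norminf{s} = 1$ that is either an approximate-membership certificate or a separating direction. I would organise the interpretation into two stages, picking thresholds $\delta', \delta''$ that are rational functions of $\delta$, $R$, $|J|$ and $|M|$ chosen at the end so that the output cases become exhaustive.

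First I would deal with the linear constraints. Let $S = \{ i \in M : \Iprod{A_i}{Y} > b_i + \delta' \}$ be the set of grossly violated inequalities. If $S \neq \emptyset$, output (after normalising so that $\norminf{\cdot}=1$) the aggregated normal $\sum_{i \in S} A_i$: for every $X \in K_{A,b}$ one has $\Iprod{\sum_{i \in S} A_i}{X} \leq \sum_{i \in S} b_i < \Iprod{\sum_{i \in S} A_i}{Y}$, which is a valid not-so-weak separation with $b=0$. Crucially, summing over the definable set $S$ keeps the construction invariant under the automorphisms of $\bar M$ and $\bar J$; the only degenerate case, $\sum_{i \in S} A_i = 0$, forces $\sum_{i\in S} b_i < 0$ and hence $K_{A,b} = \emptyset$, in which case condition~2 of the separation problem holds vacuously and any unit matrix may be returned. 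All of this is $\FPC$-definable by the availability of rational arithmetic, cardinalities, and sums over definable sets.

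Second, assuming all linear constraints hold up to $\delta'$, I would decide the semidefinite constraint. The characteristic polynomial $p_Y(\lambda) = \det(\lambda\,\identitymatrix - Y)$ has rational coefficients computable from the power traces $\mathrm{tr}(Y^k)$ via the Leverrier--Faddeev--Souriau recurrence, so it is $\FPC$-definable and manifestly order-invariant; since the entries of $Y$ are bounded in terms of $R$, its eigenvalues lie in a known bounded interval, and one can isolate them by rational bisection / Sturm sequences, which is $\FPC$-definable because $p_Y$ is bounded-degree ordered data. If the least eigenvalue is at least $-\delta''$, then $Y$ is close both to the PSD cone and to the affine constraints, and a perturbation estimate controlled by $R$ (along GLS lines) places $Y$ in $\Sphere{K_{A,b}}{\delta}$; output $b=1$ with $s$ a canonical unit matrix (e.g.\ a normalised $\identitymatrix$). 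Otherwise $Y$ has an eigenvalue below $-\delta''$, and one must produce a matrix $s \succeq 0$ with $\norminf{s}=1$ and $\Iprod{s}{Y}$ suitably negative, which then separates $Y$ from every $X \succeq 0$, hence from $K_{A,b}$.

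The hard part will be this last step: extracting such a witness in an order-invariant, $\FPC$-definable manner, since a genuine eigenvector of the negative eigenspace may have irrational entries, and the textbook route --- symmetric Gaussian elimination / $L D L^{\mathrm T}$ factorisation, which either certifies $Y \succeq 0$ or returns a rational $z$ with $z^{\mathrm T} Y z < 0$ --- requires fixing an order on $J$, destroying canonicity. My plan would be to apply the folding idea one level deeper, now on the index set $J$: attempt the factorisation under the hypothesis that $J$ carries the constant preorder, and whenever the partial computation forces a canonical proper refinement of that preorder (because the pivots split $J$ into blocks with distinct values), restart with the refined preorder; after at most $|J|$ refinements the preorder is a genuine order, the factorisation runs to completion, and $s \propto -z z^{\mathrm T}$ is read off. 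Tracking the accuracy loss through this nested procedure, and verifying that $\delta'$ and $\delta''$ can be fixed so that the three output cases are simultaneously exhaustive and correct, is the technical core; everything else is routine $\FPC$-definable arithmetic. With the not-so-weak separation problem shown $\FPC$-definable, Theorem~\ref{thm:definableellipsoid} yields $\FPC$-definability of weak feasibility for $\mathscr{C}_{\SDP}^C$.
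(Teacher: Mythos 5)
Your overall architecture matches the paper's: appeal to Theorem~\ref{thm:definableellipsoid} and supply an $\FPC$-definable not-so-weak separation oracle, handling the linear constraints by summing the violated normals over a definable set. But two steps do not hold up. First, your ``$b=1$'' case is unsound as justified: from ``every linear inequality holds up to $\delta'$'' and ``least eigenvalue $\geq -\delta''$'' you cannot conclude $Y \in \Sphere{K_{A,b}}{\delta}$, because approximate satisfaction of each constraint individually does not imply proximity to their intersection --- with normals of small norm, $K_{A,b}$ can be empty while $Y$ is $\delta'$-feasible for every inequality, and $\delta'$ depends only on $\delta, R, |J|, |M|$, not on $A,b$. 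The paper avoids this entirely: it tests the linear inequalities \emph{exactly} (rational comparison is $\FPC$-definable, so no slack $\delta'$ is needed), and it outputs $b=1$ only when the approximate least eigenvalue strictly exceeds its own precision $\delta/2n^2$, which forces the true least eigenvalue to be positive; then $Y$ lies in $K_{A,b}$ itself, not merely near it.

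Second, the step you yourself flag as the hard part --- canonically extracting a separating matrix when $Y$ is not positive semidefinite --- is exactly the crux of the proof, and your proposed $LDL^{\mathrm T}$-with-folding plan does not obviously work. In Section~3 the refinement of $\sigma$ is triggered by an oracle answer that fails to respect $\sigma$; in a factorisation of a folded matrix, every vector you compute respects $\sigma$ by construction, so nothing forces a refinement, and once the partition of $J$ stabilises (e.g.\ when $Y$ is invariant under a transitive automorphism group, as for matrices arising from strongly regular graphs) there is no canonical pivot and the elimination cannot be completed order-invariantly. The paper's device is different: for each $l \in J$ it forms the linear program $P(Y,\hat{\lambda},l)$ minimising $\normone{(Y-\hat{\lambda}\identitymatrix)y}$ subject to $\norminf{y}\leq 1$ and $y_l = 1$, solves each one canonically via the Anderson--Dawar--Holm $\FPC$ LP solver (Theorem~\ref{thm:LP}), and outputs the normalised sum $-\sum_{l\in T} v_l v_l^T/D$ over the indices $l$ whose LP has optimum at most $\delta/2$; a short calculation then shows this matrix separates $Y$ from the positive semidefinite cone, hence from $K_{A,b}$, up to $\delta$. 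Without this idea (or a genuinely worked-out canonical factorisation, which you have not provided) the proof is incomplete.
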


\subsection{Separation oracle} 

We show that the not-so-weak separation problem is~$\FPC$-definable
for the class~$\mathscr{C}_{\SDP}$ of all semidefinite sets. This
clearly implies the~$\FPC$-definability of the not-so-weak separation
problem for~$\mathscr{C}_{\SDP}^C$, which is what is needed for the
proof of Theorem~\ref{thm:weakfeasibility}. We begin with a few
definitions and lemmas.
In particular, 
since one of the steps of the separation procedure is a
reduction to a family of LPs,
we specify an encoding of an LP as a finite relational structure.

The \emph{polytope}~$K_{u,v} \subseteq \mathbb{R}^I$ is defined by a
system of linear inequalities:
\begin{equation}
\Iprod{u_i}{x} \leq v_i \;\;\text{ for } i \in M,
\end{equation}
where~$x$ is an~$I$-vector of variables,~$u \in \mathbb{R}^{M \times
  I}$ is an indexed set of~$I$-vectors, and~$v \in \mathbb{R}^M$ is an
indexed set of reals. If the entries of the vectors~$\{u_i : i \in
M\}$ and~$v$ are rational numbers, then the polytope~$K_{u,v}
\subseteq \mathbb{R}^I$ is represented by a three-sorted structure,
with two sorts~$\bar{I}$ and~$\bar{M}$ for the index sets~$I$ and~$M$,
and one sort~$\bar{B}$ for a domain~$\{0,\ldots,N-1\}$ of bit
positions that is large enough to encode all the numbers in
binary. The vocabulary~$L_{\LP}$ includes the following relation
symbols:
\begin{enumerate} \itemsep=0pt
\item two unary symbols~$I$ and~$M$, for~$\bar{I}$ and~$\bar{M}$,
  respectively,
\item one binary symbol~$\leq$ for the natural linear order
  on~$\bar{B}$,
\item three ternary symbols~$P_{u,s}, P_{u,n}, P_{u,d}$,
\item three binary symbols~$P_{v,s}, P_{v,n}, P_{v,d}$.
\end{enumerate}
The relations that interpret the symbols in point~3 encode the signs
and the bits of the numerators and the denominators of the entries of
the vectors in~$\{u_i : i \in M\}$.  The relations that interpret the
symbols in point~4 encode those of the rationals in~$\{v_i : i \in
M\}$.

Linear programs of the form:
\begin{equation}
(P) \;\;:\;\; \inf{_x} \;\Iprod{c}{x}\; \text{ s.t. } \Iprod{u_i}{x} \leq v_i \;\text{ for } i \in M,
\end{equation}
where~$x$,~$u$ and~$v$ are as specified above and~$c$ is
an~$I$-vector, are represented similarly as polytopes. The
vocabulary~$L_{\opt\LP}$ contains three additional binary
symbols~$P_{c,s},P_{c,n},P_{c,d}$ that encode the vector~$c$.

\begin{theorem}[\cite{Anderson:2015}]\label{thm:LP}
  There exists an~$\FPC$-interpretation that takes as input a linear
  program~$P : \inf{_x} \;\Iprod{c}{x}\; \text{ s.t. } \Iprod{u_i}{x}
  \leq v_i \text{ for } i \in M,$ and outputs an integer~$b \in
  \{-1,0,1\}$, a vector~$s \in \rationals^I$ and a rational~$r$, such
  that:
  \begin{enumerate} \itemsep=0pt
  \item $b = 1$ and $P$ is feasible but unbounded below, or
  \item $b = 0$ and $P$ has an optimal
    feasible solution of value~$r$, and~$s$ is one, or
  \item $b = -1$ and $P$ is infeasible.
  \end{enumerate}
\end{theorem}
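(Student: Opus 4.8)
The plan is to obtain the required interpretation by calling the definable weak-feasibility procedure of Theorem~\ref{thm:definableellipsoid} on a small number of rational polytopes extracted from~$P$, and then closing the gap between \emph{weak} and \emph{exact} feasibility with the classical bit-complexity bounds for linear programming. First I would reduce optimization to polytope (non)emptiness. Let $K_P := \{x : \Iprod{u_i}{x}\le b_i \text{ for } i\in M\}$ be the primal polytope and let $D$ be the LP dual of~$P$: a program over a vector~$y$ ranging in a polyhedral domain~$Y$, with an affine dual objective~$d(y)$. By LP strong duality, $P$ has a finite optimum iff $K_P\ne\emptyset$ and~$D$ is feasible, in which case the optimum is attained and equals the dual optimum; and weak duality gives $d(y)\le\Iprod{c}{x}$ for every primal-feasible~$x$ and every $y\in Y$. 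Hence the set $Q := \{(x,y) : \Iprod{u_i}{x}\le b_i \ (i\in M),\ y\in Y,\ \Iprod{c}{x}\le d(y)\}$ is again a rational polytope with $Q\ne\emptyset$ iff $P$ has a finite optimum, and any $(x^\star,y^\star)\in Q$ has $x^\star$ optimal of value $\Iprod{c}{x^\star}$. So the interpretation proceeds: test whether $K_P=\emptyset$ and if so output $b=-1$; otherwise test whether $Q=\emptyset$ and if so output $b=1$ (feasible and unbounded); otherwise extract a point $(x^\star,y^\star)\in Q$ and output $b=0$, $s:=x^\star$, $r:=\Iprod{c}{x^\star}$. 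Everything now reduces to two $\FPC$-definable sub-tasks for a \emph{rational polytope}: decide emptiness, and when non-empty produce one rational point of it.

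For both sub-tasks I would first circumscribe. By the basic bit-complexity theory of rational linear systems, a feasible system of bit-size~$N$ has a solution of bit-size $\mathrm{poly}(N)$, and every vertex of a pointed rational polytope of bit-size~$N$ has coordinates of bit-size $\mathrm{poly}(N)$; hence intersecting with the box $\norminf{x}\le R$ for a suitable $R=2^{\mathrm{poly}(N)}$ preserves emptiness and (for~$Q$) the attained optimum while producing a bounded set. The resulting class of circumscribed rational polytopes is a represented class of circumscribed closed convex sets in the sense of Section~\ref{sec:definableellipsoid}, and its not-so-weak separation problem is $\FPC$-definable: given~$y$ and~$\delta$, compute the largest constraint $L_2$-norm and a Hoffman-type bound (both rational), and if every constraint — including the box constraints — holds with enough slack to guarantee $\mathrm{dist}(y,K)\le\delta$, certify $y\in\Sphere{K}{\delta}$ with $b=1$; otherwise the set~$V$ of violated constraints is a non-empty definable set, and we return $b=0$ with~$s$ the $\norminf{\cdot}$-normalization of the \emph{average} $\frac1{|V|}\sum_{i\in V}u_i$ of the violated normals, which is a valid separating direction because $\sup\{\Iprod{s}{x}:x\in K\}\le\frac1{|V|}\sum_{i\in V}b_i$ while $\Iprod{s}{y}$ exceeds this by more than~$\delta$. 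Averaging is what keeps the choice of hyperplane canonical; the only degenerate case — the average vanishing — forces a bounded Farkas certificate showing $K=\emptyset$, in which case any fixed unit vector (e.g.\ the normalized all-ones vector) is trivially a valid answer. Theorem~\ref{thm:definableellipsoid} now yields an $\FPC$-interpretation for the weak feasibility problem of this class.

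To pass from weak to exact feasibility I would perturb, calibrating the perturbation to the vertex bound. Given a circumscribed polytope~$K$, set $K_\eta := \{x : \Iprod{u_i}{x}\le b_i+\eta,\ \norminf{x}\le R\}$ with $\eta := 2^{-\mathrm{poly}(N)}$. If $K\ne\emptyset$, a ball of radius $\eta/\max_i\normtwo{u_i}$ around any feasible point lies in~$K_\eta$, so $\mathrm{vol}(K_\eta)\ge v_0 := 2^{-\mathrm{poly}(N)}$; if $K=\emptyset$, a Farkas certificate of bit-size $\mathrm{poly}(N)$ witnesses $K_\eta=\emptyset$ too, once~$\eta$ is small enough. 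Running the definable weak-feasibility interpretation on $(K_\eta;R)$ with accuracy $\epsilon := v_0/2$ then decides emptiness of~$K$ correctly: it cannot honestly report $\mathrm{vol}(K_\eta)\le\epsilon$ when $K\ne\emptyset$, nor return a point of $\Sphere{K_\eta}{\epsilon}=\emptyset$ when $K=\emptyset$. This settles all the emptiness tests, hence the choice $b\in\{-1,0,1\}$.

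The remaining step — and I expect this to be the real obstacle — is the \emph{exact} extraction of a rational point of~$K$ when $K\ne\emptyset$, since the ellipsoid hands back only a point $z\in\Sphere{K_\eta}{\epsilon}$. My plan is to snap~$z$ onto the affine hull of a nearly-active face: fix a threshold~$\theta$ with $\eta,\epsilon\ll\theta\ll 2^{-\mathrm{poly}(N)}$ (the granularity separating distinct vertex coordinates), let~$T$ be the definable set of constraints that~$z$ satisfies to within~$\theta$, and let~$s$ be the Euclidean projection of~$z$ onto the rational affine subspace $\{x : \Iprod{u_i}{x}=b_i,\ i\in T\}$; solving this rational linear system and projecting onto it are $\FPC$-definable, and~$s$ has bit-size $\mathrm{poly}(N)$. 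What must then be checked by a routine calculation is that~$s$ satisfies all constraints of~$K$: those in~$T$ by construction, the rest because they had slack exceeding~$\theta$ at~$z$ while the projection — Lipschitz with modulus controlled by the bounded subdeterminants of the $T$-submatrix — moves~$z$ by only $O(\eta+\epsilon)\ll\theta$. An alternative that avoids the conditioning estimate is to recover a point by recursion on dimension: use the optimization primitive being built to find a constraint not tight throughout the current polytope, impose its equality, and recurse down to a single point, recovered as the common value of the coordinatewise supremum and infimum. Either way, assembling the pieces gives in each case an $\FPC$-interpretation outputting the triple $(b,s,r)$ as required; the only work left is the bookkeeping that fixes $R$, $\eta$, $\epsilon$ and~$\theta$ consistently.
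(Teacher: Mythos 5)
First, a point of calibration: the paper does not prove Theorem~\ref{thm:LP} at all --- it is imported verbatim from~\cite{Anderson:2015} --- so there is no in-paper proof to compare against. Judged on its own terms, your architecture is the right one and matches both the cited work and what the paper itself does for SDPs: the primal--dual polytope $Q$ reduces optimization to feasibility; the separation oracle for an explicit polytope is made canonical by summing/averaging the violated normals, with the degenerate zero-sum case certifying emptiness (exactly as in Steps~05--06 of the paper's SDP oracle); Theorem~\ref{thm:definableellipsoid} handles the folding; and the perturb-then-compare-volumes step (your $K_\eta$, $v_0$, $\epsilon=v_0/2$) is precisely the Lemma~\ref{lem:ball}/Proposition~\ref{prop:epsilonR} argument. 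These parts are correct, modulo the cosmetic point that for an explicit polytope you can test membership exactly, so no Hoffman-type slack bound is needed on the $b=1$ side of the oracle.

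The genuine gap is where you yourself suspect it: exact extraction of an optimal point. Your quantitative claim that the projection onto $H_T=\{x:\Iprod{u_i}{x}=b_i,\ i\in T\}$ ``moves $z$ by only $O(\eta+\epsilon)\ll\theta$'' is false. The set $T$ contains every constraint whose residual at $z$ is at most $\theta$, so those residuals can themselves be of order $\theta$ (not of order $\eta+\epsilon$), and $\mathrm{dist}(z,H_T)$ is bounded only by those residuals times the norm of a pseudoinverse of the $T$-submatrix, i.e.\ by $2^{\mathrm{poly}(N)}\cdot\theta$. Meanwhile a constraint outside $T$ is only guaranteed slack marginally above $\theta$, so the inequality you need, $\normtwo{u_i}\cdot\mathrm{dist}(z,H_T)<\text{slack}_i$, does not follow; there is no single threshold $\theta$ for which ``within $\theta$'' and ``slack $>\theta$'' are separated by the condition number. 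Repairing this requires an extra idea --- e.g.\ a gap-finding choice of $\theta$ along a geometric sequence so that no residual falls in the window $(\theta,2^{\mathrm{poly}(N)}\theta]$, or the simultaneous-Diophantine-approximation rounding of \cite{GroetschelLovaszSchrijverBook} (whose canonicity in $\FPC$ is itself not obvious), or the route via continued-fraction recovery of the exact optimal \emph{value} followed by a canonical descent through faces. Your fallback (``use the optimization primitive being built'' to recurse on dimension) is circular as stated and also leaves open which constraint to tighten canonically. So the theorem's feasibility/unboundedness classification and the exact value $r$ are within reach of your argument, but the output vector $s$ is not yet established.
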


\noindent We also need the following lemma from~\cite{DawarW17}
showing that the smallest eigenvalue of a given symmetric matrix can
be approximated in~$\FPC$:

\begin{lemma}[\cite{DawarW17}]~\label{lem:eigenvalue}
There exists an~$\FPC$-interpretation that takes as input a symmetric
matrix~$A \in \rationals^{I \times I}$ and a rational~$\delta > 0$ and
outputs a rational~$\hat{\lambda}$, such that~$\hat{\lambda}$ is the approximate
value of the smallest eigenvalue~$\lambda$ of~$A$ up to
precision~$\delta$, i.e.,~$|\hat{\lambda}-\lambda|\leq \delta$.
\end{lemma}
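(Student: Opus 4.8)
The plan is to route the whole computation through the characteristic polynomial $p_A(x) = \det(xI - A)$ of $A$, because once the list of coefficients of $p_A$ is available as a structure indexed by the \emph{ordered} set of degrees $\{0,\ldots,n\}$ (with $n = |I|$), everything that remains is a polynomial-time manipulation of ordered data, and hence $\FPC$-definable by the Immerman--Vardi theorem. The first task is therefore to show that an $\FPC$-interpretation can produce, from $A$, the tuple $(c_0,\ldots,c_n)$ of rational coefficients of $p_A$. For this I would use the known $\FPC$-definability of linear algebra over $\rationals$: the determinant of any rational matrix obtained from the input by an $\FPC$-interpretation is itself $\FPC$-definable (over fields of characteristic zero), so in particular each value $p_A(t) = \det(tI - A)$ for $t \in \{0,1,\ldots,n\}$ is $\FPC$-definable. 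Since $p_A$ has degree $n$, the tuple $(p_A(0),\ldots,p_A(n))$ determines $(c_0,\ldots,c_n)$ by Lagrange interpolation, which is a linear-algebraic computation over the linearly ordered index set $\{0,\ldots,n\}$ and hence $\FPC$-definable by Immerman--Vardi. (Alternatively one may cite directly that the coefficients of the characteristic polynomial of a rational matrix are $\FPC$-definable, as used in~\cite{DawarW17}.)

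Next I would bound the spectrum: by Gershgorin's theorem every eigenvalue of $A$ lies in $[-B,B]$ for $B := n \cdot \max\{ |A_{ij}| : i,j \in I \}$, and $B$ is $\FPC$-definable from $A$ as the maximum of a definable set of rationals. Since $A$ is symmetric, all roots of $p_A$ are real, and the smallest eigenvalue of $A$ is exactly the least real root of $p_A$, which lies in $[-B,B]$. Now I would approximate this least root to within $\delta$ by a computation that reads only the ordered data $(c_0,\ldots,c_n)$ together with $B$ and $\delta$: form the Sturm sequence of $p_A$ by repeated polynomial division over $\rationals$, and binary-search for the least $c \in [-B,B]$ at which the number of real roots of $p_A$ in $(-\infty,c]$ first becomes positive, halving the search interval until its length is at most $\delta$ and returning an endpoint. (One could equally well use a Cauchy/Mahler root-separation bound together with sign changes of $p_A$ on a sufficiently fine rational grid.) Every arithmetic operation here is on rationals of polynomially bounded bit-length, and the number of iterations is polynomial in the input size and $\log(1/\delta)$, so this whole stage runs in polynomial time over a linearly ordered structure; Immerman--Vardi makes it $\FPC$-definable. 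Composing the three $\FPC$-interpretations yields the required one, outputting a rational $\lambda$ within $\delta$ of the smallest eigenvalue of $A$.

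The main obstacle is the first step, namely the passage from the symmetric, \emph{unordered} matrix $A$ to its canonical spectral invariants $(c_0,\ldots,c_n)$: this is exactly where the counting power of $\FPC$ is genuinely used, and it rests on the (nontrivial, but by now standard) $\FPC$-definability of the determinant, equivalently of the characteristic polynomial, of rational matrices over unordered index sets. Everything downstream — the Gershgorin bound, the Sturm-sequence root isolation, and the binary search — operates on an ordered object and is routine via Immerman--Vardi, so no further difficulty arises there.
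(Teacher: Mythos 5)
Your argument is correct. Note that the paper itself gives no proof of this lemma --- it is imported directly from~\cite{DawarW17} --- and your route (the $\FPC$-definable characteristic polynomial obtained from definable determinants, a Gershgorin bound on the spectrum, and Sturm-sequence binary search carried out on the ordered coefficient data via Immerman--Vardi) is essentially the standard argument given in that source, with the only point deserving emphasis being the one you already flag: the nontrivial ingredient is the $\FPC$-definability of the characteristic polynomial over $\rationals$, and everything after that is routine polynomial-time computation on ordered data.
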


We are now ready to show the following:

\begin{proposition}
The not-so-weak separation problem for semidefinite sets
is definable in~$\FPC$.
\end{proposition}

\begin{proof}
If~$K_{A,b} \subseteq \reals^{I}$ is a non-empty semidefinite set
and~$Y \in \reals^{J \times J}$ is a symmetric matrix
outside~$K_{A,b}$, then either~$Y$ violates at least one of the linear
inequalities that describe~$K_{A,b}$, or fails to be positive
semidefinite. In the former case, we get a separating hyperplane by
taking the normal of the violated inequality, and a canonical one by
taking the sum of all of them, as in~\cite{Anderson:2015}.  In the
latter case, the smallest eigenvalue~$\lambda$ of~$Y$ is negative, and
if~$v$ is an eigenvector of this eigenvalue, then~$vv^T$ is a valid
separating hyperplane (after normalization). Such an eigenvector would
be found if we were able to find an optimal solution to the
optimization problem
\begin{equation}
\inf{_y}\; \normone{(Y-\lambda\identitymatrix)y}\ \text{
  s.t. }\ \norminf{y} = 1.
\end{equation} 
Unfortunately, this optimization problem cannot be easily phrased into
an LP because the constraint~$\norminf{y} = 1$ cannot be expressed by
linear inequalities. Here is where we differ from~\cite{DawarW17}:
first we relax the constraint~$\norminf{y} = 1$ to~$\norminf{y} \leq
1$, but then we add the condition that some component~$y_l$ is~$1$,
and we do this for each~$l \in J$ separately. Thus, for each~$l \in
J$, let~$P(Y,\lambda,l)$ be the following optimization problem:
\begin{equation}
\inf{_y}\; \normone{(Y-\lambda\identitymatrix)y}\ \text{ s.t. }\ \norminf{y} \leq 1,
\ y_l = 1.
\end{equation}
This we can formulate as an LP.  The problem~$P(Y,\lambda,l)$ may be
feasible for some~$l \in J$ and infeasible for some other~$l \in J$,
but at least one is guaranteed to be feasible. We take a solution for
each feasible one and add them together to produce a canonical
separating hyperplane. All this would be an accurate description of
what our separation oracle does if we could compute~$\lambda$ exactly, but
unfortunately only an approximation~$\hat{\lambda}$ is available.
Still, if the approximation is good enough, using~$\hat{\lambda}$ in
place of~$\lambda$ in the~$P(Y,\lambda,l)$'s will do the job.  We
provide the details.

Let~$\Psi$ be the interpretation that takes as input a symmetric
matrix~$Y \in \mathbb{Q}^{J \times J}$, a rational~$\delta > 0$, and a
representation of~$K_{A,b} \subseteq \reals^{I}$
in~$\mathscr{C}_{\SDP}$, where~$A \in \mathbb{Q}^{M \times (J \times
  J)}$ and~$b \in \mathbb{Q}^M$, and does the following:
  \smallskip
  \begin{center}
  \begin{tabbing} 
  \;\; \= 01. \;\; \= given $Y$, $\delta$, and $K_{A,b} \subseteq \reals^{I}$ as specified, \\
  \> 02. \> compute $L := \{i \in M \colon \Iprod{A_i}{Y} > b_i \}$, \\
  \> 03. \> if $|L| \neq 0$, then \\
  \> 04. \> \;\;\;\; \= compute $D := \norminf{\sum_{i \in L} A_i}$, \\
 \> 05. \> \> if $D \neq 0$, compute $S := \sum_{i \in L} A_i/D,$ and output $(0,S)$, \\
  \> 06. \> \> if $D=0$, output $(0,\identitymatrix)$, \\
   \> 07. \> else \\
   \> 08. \> \> compute $n := |J|$, \\
   \> 09. \> \> compute $\hat{\lambda}$, the smallest eigenvalue of $Y$ up to precision~${\delta/2n^2}$, \\
  \> 10. \> \> if $\hat{\lambda} > {\delta / {2n^2}}$, output $(1,\identitymatrix)$, \\
   \> 11. \> \> else \\
  \> 12. \> \> \;\;\;\; \= compute $T := \{ l \in J : P(Y,\hat{\lambda},l) \text{ is feasible with optimum } \leq \delta / 2n \}$, \\
  \> 13. \> \> \> compute $v := \{ v_l \in \rationals^{J} : l \in T \text{ and } v_l \text{ is optimal for } P(Y,\hat{\lambda},l) \}$, \\
  \> 14.  \> \> \>   compute $D := \norminf{\sum_{l \in T} v_lv_l^T}$ and $S := - \sum_{l \in T} v_lv_l^T/D$, \\
   \> 15.  \> \> \>  output $(0,S).$
       \end{tabbing}
  \end{center}
  \smallskip

Let us show that~$\Psi$ is~$\FPC$-definable and satisfies the required
conditions.

Step~01 is the description of the input. Steps~07 and~11 are control
steps. FPC-definability of Steps~02, 03, 04, 05, 06, 08, 10, 14 and~15
follows from the ability of FPC to perform the basic arithmetic of
rational numbers, compare rational numbers, and compute cardinalities
of definable sets. Step~09 follows from Lemma~\ref{lem:eigenvalue}.
Below we first argue that the output of~$\Psi$ is always correct and
finally that the Steps~12 and~13 are~$\FPC$-definable.

Suppose that~$L = \{i \in M \colon \Iprod{A_i}{Y} > b_i \} \neq
\emptyset$ and let us prove that the output in Steps~05 and~06 is
correct. If~$\sum_{i \in L} A_i$ is the zero matrix, then we have that
\begin{equation}
\sum_{i \in L} b_i < \sum_{i \in
  L} \Iprod{A_i}{Y} = \Iprod{\sum_{i \in L} A_i}{Y} = 0.
\end{equation} 
Therefore, the feasibility region~$K_{A,b}$ is empty. Indeed, every~$X
\in K_{A,b}$ satisfies
\begin{equation}
0 > \sum_{i \in L} b_i \geq \sum_{i \in L}
\Iprod{A_i}{X} = \Iprod{\sum_{i \in L} A_i}{X} = 0,
\end{equation} 
which is a contradiction. Hence, for any matrix
whose~$L_{\infty}$-norm is~$1$, in particular for the identity
matrix~$\identitymatrix$, the output~$(0,\identitymatrix)$ is correct.

If~$\sum_{i \in L} A_i$ is not the zero matrix, let~$D =
\norminf{\sum_{i \in L} A_i}$ and~$S = \sum_{i \in L} A_i/D$.  Then
for every~$X \in K_{A,b}$ we have that
\begin{equation}
\begin{aligned}
\Iprod{S}{X} & = \Iprod{{ \sum_{i \in L} { A_i \over D } }}{X} = { \frac{1}{D} } \sum_{i \in L} \Iprod{A_i}{X} 
\leq  { \frac{1}{D} } \sum_{i \in L} b_i <  { 1 \over D } \sum_{i \in L} \Iprod{A_i}{Y} = \Iprod{S}{Y}. 
\end{aligned}
\end{equation}
Moreover, the matrix~$S$ has~$L_{\infty}$-norm~$1$. So the output is
correct.

Suppose that~$L = \{i \in M \colon \Iprod{A_i}{Y} > b_i \} =
\emptyset$,~$n = |J|$ and~$\hat{\lambda} > {\delta / {2n^2}}$, and let
us argue that the output in Step~10 is correct. Observe that, for
every~$i \in M$, the matrix~$Y$ satisfies~$\Iprod{A_i}{Y} \leq b_i$,
and its smallest eigenvalue~$\lambda$ is positive, which means that
the matrix~$Y$ is positive semidefinite. Hence,~$Y$ is in the
feasibility region~$K_{A.b}$ and the output is correct.

Finally, let us assume that~$\hat{\lambda} \leq {\delta / {2n^2}}$. In
this case, for every~$l \in J$, the~$\FPC$ interpretation needs to
compute the optimal value and an optimal solution of the optimization
problem~$P(Y,\hat{\lambda},l)$. To show that this is possible, we
define an essentially equivalent linear program~$P'(l)$ and use
Theorem~\ref{thm:LP} to conclude.

To perform Steps~12 and~13 the~$\FPC$ interpretation takes, for
each~$l \in J$, the linear program~$P'(l)$ with variables~$\{x_i : i
\in J\} \cup \{y_i : i \in J\}$, defined by:
\medskip
\begin{center}
\begin{tabular}{llll}
 $\inf{_{x,y}}$ & \; & $\textstyle{\sum_{i \in J} x_i}$ & \\
 s.t. & & $-x_i \leq  (Yy - \hat{\lambda}y)_i   \leq x_i$, & for every $i \in J$ \\
 & & $-1 \leq y_i \leq 1$, & for every  $i \in J$ \\
 & & $y_l = 1$, & 
\end{tabular}
\end{center}
\medskip
where~$y$ is the vector~$\{y_i : i \in J\}$.  In the following,
since~$Y$ and~$\hat{\lambda}$ are fixed, let us write~$P(l)$ instead
of~$P(Y,\hat{\lambda},l)$.

\begin{claim}
The program~$P(l)$ is feasible if and only if the program~$P'(l)$ is
feasible and the optimal values of~$P(l)$ and~$P'(l)$ are the same.
Moreover, if a vector~$\{x_i : i \in J\} \cup\{y_i : i \in J\}$ is an
optimal solution to~$P'(l)$, then the vector~$\{y_i : i \in J\}$ is an
optimal solution to~$P(l)$.
\end{claim}

\begin{proof}
Suppose that the feasibility region of~$P(l)$ is non-empty. For every
vector~$y = \{y_i : i \in J\}$ in the feasibility region of~$P(l)$,
the vector~$\{x_i : i \in J\} \cup\{y_i : i \in J\}$, where~$x_i =
|(Yy - \hat{\lambda}y)_i|$, belongs to the feasibility region
of~$P'(l)$ and its value~$\sum_{i \in J} x_i = \normone{(Y -
  \hat{\lambda} I)y}$ is the same as the value of~$\{y_i : i \in J\}$
for~$P(l)$. Therefore, the feasibility region of~$P'(l)$ is non-empty
and the optimal value~$opt'$ of~$P'(l)$ is smaller or equal to the
optimal value~$opt$ of~$P(l)$.

Suppose that the feasibility region of~$P'(l)$ is non-empty, and take
an optimal solution~$\{x_i : i \in J\} \cup\{y_i : i \in J\}$
for~$P'(l)$.
It holds that~$\norminf{y} = 1$ and~$y_l = 1$, so the vector~$y$ is in
the feasibility region of~$P(l)$. Therefore, the feasibility region
of~$P(l)$ is non-empty, and~$opt \leq \normone{(Y - \hat{\lambda}
  I)y}~$. Moreover, for every~$i \in J$, we have that~$|(Yy -
\hat{\lambda}y)_i | \leq x_i$ so~$ \normone{(Y - \hat{\lambda} I)y}
\leq \sum_{i \in J} x_i = opt'$. On the other hand we know that~$opt'
\leq opt$. To summarize
\begin{equation}
opt \leq \normone{(Y - \hat{\lambda} I)y} \leq \sum_{i \in J} x_i  = opt' \leq opt.
\end{equation} 
Hence, the vector~$y$ is an optimal solution for~$P(l)$ and the
optimal values are the same.
\end{proof}

To perform Steps~12 and~13 the~$\FPC$ interpretation computes, for
every~$l \in J$, an optimal solution and the optimal value of the
optimization problem~$P(l)$, by computing an optimal solution and the
optimal value of the linear program~$P'(l)$ via Theorem~\ref{thm:LP},
and projecting the output to the variables~$ \{y_i : i \in J\}$.

We now show that the set~$T$, defined in Step~12, is non-empty, and
that~$\norminf{\sum_{l \in T}v_l v_l^T} \neq 0$. It follows that the
output matrix~$S$ in Step~14 is well defined.

\begin{claim}
$T \neq \emptyset$.
\end{claim}

\begin{proof}
Let~$v$ be an eigenvector of~$Y$ with the smallest
eigenvalue~$\lambda$, and let~$\norminf{v} = 1$. We have the following
\begin{equation}
\begin{aligned}
 \normone{(Y - \hat{\lambda}I)v } & = \normone{(Y - \lambda I)v - (\hat{\lambda} - \lambda)Iv  } \leq \\ & \leq \normone{ (Y - \lambda I)v } + \normone{ (\hat{\lambda} - \lambda)Iv  } = \\  
& =  \normone{ (\hat{\lambda} - \lambda)Iv  }   \leq {\delta \over {2n^2}} n \norminf{ v } = {\delta \over 2n}.
\end{aligned}
\end{equation}
If there exists~$l \in J$ such that~$v_l = 1$, then~$v \in P(l)$
and~$T \neq \emptyset$. Otherwise, there exists~$l \in J$ such
that~$v_l = -1$. Then~$-v \in P(l)$ and we are done as well.
\end{proof}

\begin{claim}\label{claim:inequality}
$1 \leq \norminf{\sum_{l \in T}v_l v_l^T} \leq |T|$.
\end{claim}

\begin{proof}
Observe that for every~$l \in J$ all the main diagonal entries of the
matrix~$v_l v_l^T$ are squares and since~$\norminf{v_l} = 1$, at least
one of those entries is equal~$1$. Therefore,
\begin{equation}
\norminf{\sum_{l \in T}v_l v_l^T} \geq 1,
\end{equation}
and on the other hand,
\begin{equation}
\norminf{\sum_{l \in T}v_l v_l^T} \leq \sum_{l \in T} \norminf{v_l v_l^T} = |T|.
\end{equation}
\end{proof}

Finally, let us show that the output $(0,S)$ in Step~15 is correct.

\begin{claim}\label{claim:optimal}
For every $l \in T$, let~$v_l$ be the optimal solution of~$P(l)$. Then for every $X \in K_{A,b}$, 
\begin{equation}
\Iprod{-v_lv_l^T}{Y} + {\delta \over n} \geq \Iprod{-v_lv_l^T}{X}.
\label{eqn:firstineqq}
\end{equation} 
\end{claim} 

\begin{proof} 
Take $X \in K_{A,b}$. Since the matrix~$X$ is positive semidefinite, $\Iprod{-v_lv_l^T}{X} = - v_l^T X v_l \leq 0$. We will show that $\Iprod{-v_lv_l^T}{Y} + {\delta / n} \geq 0$. It holds that
\begin{equation}
\begin{aligned}
 \Iprod{-v_lv_l^T}{Y}  & = -v_l^T Y v_l = -v_l^T (\hat{\lambda}I + (Y - \hat{\lambda}I )) v_l  = \\ 
& = - \hat{\lambda} v_l^T v_l - v_l^T (Y - \hat{\lambda}I ) v_l  \geq
- \hat{\lambda} v_l^T v_l - |v_l^T (Y - \hat{\lambda}I ) v_l| \geq \\ 
& \geq  - \hat{\lambda} v_l^T v_l - \norminf{v_l} \normone{(Y - \hat{\lambda}I ) v_l} \geq - \hat{\lambda} v_l^T v_l - {\delta \over 2n} .
\end{aligned}
\end{equation}
It follows that  
\begin{equation}
\Iprod{-v_lv_l^T}{Y} + {\delta \over n} \geq - \hat{\lambda} v_l^T v_l + {\delta \over 2n}.
\end{equation} 

Now if~$\hat{\lambda} \leq 0$, then~$- \hat{\lambda} v_l^T v_l +
{\delta / 2n} = - \hat{\lambda} \normtwo{v_l}^2 + {\delta / 2n} \geq
{\delta / 2n} > 0$.  Otherwise~$0 < \hat{\lambda} \leq {\delta /
  2n^2}$, and
\begin{equation}
\hat{\lambda} v_l^T v_l  \leq {\delta \over 2n^2} \normtwo{v_l}^2 \leq {\delta \over 2n^2} (\sqrt{n}\norminf{v_l})^2 = {\delta \over 2n^2} n  = {\delta \over 2n}.
\end{equation}
Hence, $ - \hat{\lambda} v_l^T v_l + {\delta / 2n} \geq - {\delta / 2n} + {\delta / 2n} = 0$.
\end{proof}

We finish the proof by showing that for every $X \in K_{A,b}$, 
\begin{equation}
\Iprod{S}{Y} + {\delta} \geq \Iprod{S}{X}.
\end{equation} 
Let~$X$ be any matrix in~$K_{A,b}$. From now on, let~$D =
\norminf{\sum_{l \in T} v_lv_l^T}$. Recall from
Claim~\ref{claim:inequality} that~$1 \leq D \leq |T|$.  It holds that
\begin{equation}
\begin{aligned}
\Iprod{S}{Y} & = \Iprod{{- \sum_{l \in T} \frac{v_lv_l^T}{D} }}{Y} = {1 \over D} \sum_{l \in T} \Iprod{- v_lv_l^T}{Y} \geq {1 \over D} \sum_{l \in T} (\Iprod{- v_lv_l^T}{X} - {\delta \over n}) = \\
& =  \Iprod{{- \sum_{l \in T} \frac{v_lv_l^T}{D} }}{X} - {|T| \over D} {\delta \over n}
 = \Iprod{S}{X} - {|T| \over n}{\delta \over D} \geq  \Iprod{S}{X} - \delta,
\end{aligned}
\end{equation}
where the first inequality follows from~\eqref{eqn:firstineqq} in
Claim~\ref{claim:optimal} and the last inequality follows from the
fact that~$|T| \leq n$ and~$D \geq 1$.
\end{proof}

\subsection{Exact feasibility}

We use Theorem~\ref{thm:weakfeasibility} to prove the main result of
this section:

\begin{theorem}\label{thm:feasibility}
The exact feasibility problem for semidefinite sets
is definable in~$\Cinf{\omega}$.
\end{theorem}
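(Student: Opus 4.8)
The plan is to reduce exact feasibility of semidefinite sets to weak feasibility (Theorem~\ref{thm:weakfeasibility}) applied to a suitably \emph{circumscribed} and \emph{perturbed} version of the given semidefinite set, at a sequence of ever-finer scales, and to observe that the resulting infinite disjunction/conjunction over scales is expressible in $\Cinf{\omega}$. Concretely, given a representation of $K_{A,b}\subseteq\reals^I$, I would first argue that there is a bound $R=R(A,b)$, computable from the bit-size of $A$ and $b$ by an $\FPC$-interpretation, such that $K_{A,b}$ is non-empty if and only if $K_{A,b}\cap S(0^I,R)$ is non-empty; this uses the standard fact (via, e.g., the theory of the reals or Cramer-type bounds) that a non-empty spectrahedron defined by rationals of bit-length $\beta$ contains a point of norm at most $2^{\mathrm{poly}(\beta)}$. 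Intersecting with the explicit ball $S(0^I,R)$ (which only adds the constraint $\Iprod{X}{X}\le R^2$, or is handled by clipping) turns $K_{A,b}$ into a member of the represented class $\mathscr{C}_{\SDP}^C$ of circumscribed semidefinite sets, and crucially this transformation is an $\FPC$-interpretation.

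The heart of the argument is then a dichotomy driven by a parameter $\epsilon>0$ that we will let range over $\{2^{-m}:m\in\naturals\}$. For each such $\epsilon$, run the $\FPC$-interpretation of Theorem~\ref{thm:weakfeasibility} on the circumscribed set $(K_{A,b}\cap S(0^I,R);I,R)$ with tolerance $\epsilon$. If the bounded set $K_{A,b}\cap S(0^I,R)$ is empty, then it is empty and has zero volume, so for \emph{every} $\epsilon$ the weak feasibility oracle is entitled to return $b=0$; hence if it ever returns $b=1$, the set is non-empty. Conversely, if the set is non-empty, I want to show that for all sufficiently small $\epsilon$ the oracle is \emph{forced} to return $b=1$, i.e.\ that $\mathrm{vol}(K_{A,b}\cap S(0^I,R))>\epsilon$ eventually fails to be an escape hatch --- but this is false in general, since a non-empty spectrahedron can have zero volume (it may be a lower-dimensional face). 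So the real content is a \emph{perturbation}: replace $K_{A,b}$ by the relaxed set $K_{A,b}^{(\epsilon')}$ defined by $\Iprod{A_i}{X}\le b_i+\epsilon'$ and $X\succeq -\epsilon'\identitymatrix$ (equivalently $X+\epsilon'\identitymatrix\succeq 0$), which is again a semidefinite set, is $\FPC$-definable from the input and from a rational $\epsilon'$, and is full-dimensional inside $S(0^I,R')$ whenever it is non-empty. Then one proves two facts: (i) $K_{A,b}$ non-empty implies $K_{A,b}^{(\epsilon')}$ has volume bounded below by some explicit $\nu(\epsilon',R,\beta)>0$, so running weak feasibility on $(K_{A,b}^{(\epsilon')}\cap S(0^{I},R');I,R')$ with $\epsilon<\nu$ must return $b=1$; and (ii) if $K_{A,b}$ is empty, then for $\epsilon'$ small enough $K_{A,b}^{(\epsilon')}$ is still empty --- this is the Helly/closedness-type gap statement for spectrahedra, using that an infeasible system of the relevant bit-size has a Positivstellensatz-style certificate of bounded size, so a small enough rational slack cannot make it feasible.

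Putting the pieces together: the set of YES-instances is
\begin{equation}
\bigvee_{m\in\naturals}\ \Bigl(\text{weak-feas on }(K_{A,b}^{(2^{-m})}\cap S(0^I,R_m);I,R_m)\text{ with }\epsilon=2^{-m}\text{ returns }b=1\Bigr),
\end{equation}
and the set of NO-instances is the complementary infinite conjunction, because by (i)--(ii) exactly one of the two happens for all large $m$. Each disjunct is obtained by composing the fixed $\FPC$-interpretation of Theorem~\ref{thm:weakfeasibility} (hence, over finite structures, a $\Cinf{k}$-interpretation for a fixed $k$ independent of $m$) with the $\FPC$-definable construction of $K_{A,b}^{(2^{-m})}\cap S(0^I,R_m)$ and of the rationals $2^{-m}$ and $R_m$; so each disjunct is a $\Cinf{k'}$-sentence for a \emph{fixed} $k'$, and an infinite disjunction of $\Cinf{k'}$-sentences is a $\Cinf{k'}$-sentence. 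This puts exact feasibility in $\Cinf{\omega}$ (indeed in $\Cinf{k'}$ for a fixed $k'$), with the uniform variable bound being exactly the source of the constant $c$ in the isomorphism application.

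The step I expect to be the main obstacle is the quantitative gap (ii): showing that there is a rational $\epsilon'(A,b)>0$, with $\log(1/\epsilon')$ polynomially bounded in the bit-size of $A,b$, such that infeasibility of $K_{A,b}$ implies infeasibility of the $\epsilon'$-relaxed spectrahedron, together with the matching volume lower bound (i). This requires a careful effective version of the separation/Positivstellensatz bounds for spectrahedra --- essentially a single-exponential bound on the "condition number" of an infeasible rational SDP --- and one must make sure the chosen $R_m$, $\epsilon_m$, and the relaxation parameter are all definable from the input by genuine $\FPC$-interpretations (so that, after composition, the whole disjunct lives in a bounded-variable fragment). Everything else --- the reduction to a bounded set, the $\FPC$-definability of the perturbed family, and the passage from $\FPC$ to a fixed $\Cinf{k}$ and then to an infinite disjunction --- is routine given the machinery already developed in the earlier sections.
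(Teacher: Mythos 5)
Two of the quantitative claims your argument leans on are unavailable, and one is outright false. First, a non-empty spectrahedron with rational data of bit-length $\beta$ need \emph{not} contain a point of norm $2^{\mathrm{poly}(\beta)}$: block-diagonally encoding the constraints $x_1 \ge 2$ and $x_{i+1} \ge x_i^2$ (each as a $2\times 2$ PSD block) yields a non-empty semidefinite set with $O(1)$-bit entries all of whose points have norm at least $2^{2^{\Omega(n)}}$, so your $R(A,b)$ cannot even be written down by an $\FPC$-interpretation, whose output size is polynomial in the input. Second, your gap statement (ii) is false without a ball restriction, because of weak infeasibility: the empty set $\{X \succeq 0,\ X_{11} \le 0,\ -X_{12} \le -1\}$ has a non-empty $\epsilon'$-relaxation for every $\epsilon' > 0$ (take $X_{11}=\epsilon'$, $X_{12}=1$, $X_{22}=1/\epsilon'$). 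After intersecting with a ball the statement becomes true by compactness, but only \emph{qualitatively}; the effective single-exponential bound on $\epsilon'$ that you yourself flag as the main obstacle is exactly the kind of condition-number estimate whose absence is why exact SDP feasibility is open, and the point of the theorem is to avoid needing it. Relatedly, your displayed formula $\bigvee_m(\cdots)$ has the wrong quantifier shape: on an empty instance the oracle may legitimately answer $b=1$ for small $m$ (i.e., a coarse relaxation), so a bare infinite disjunction yields false positives; and using $2^{-m}$ both as relaxation parameter and as weak-feasibility tolerance is inconsistent with the volume lower bound, which is itself of order $\epsilon^k$.

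The paper's proof replaces both effective bounds by unbounded logical quantification. Lemma~\ref{lem:limit}, via the Cantor Intersection Theorem applied to the nested compact $\epsilon$-relaxations of the $R$-restriction, shows that $K_{A,b}$ is non-empty if and only if there \emph{exists} a rational $R>0$ such that for \emph{all} rationals $\epsilon>0$ the $\epsilon$-relaxation of the $R$-restriction is non-empty --- with no bound on $R$ or $\epsilon$ in terms of the input. The single-scale test is Proposition~\ref{prop:epsilonR}, which runs weak feasibility with tolerance $\delta = \epsilon^k/(k!(2km)^k)$ supplied by the volume bound of Lemma~\ref{lem:ball}. The ingredient you are missing entirely is Lemma~\ref{lem:plugging}: because $R$ and $\epsilon$ are represented over sorts disjoint from those of $K_{A,b}$, the quantifiers $\exists R\,\forall \epsilon$ can be eliminated by Booleanization into an infinite disjunction of infinite conjunctions \emph{without increasing the number of variables}; this is what allows arbitrarily large rationals $R$ and arbitrarily small rationals $\epsilon$ to enter the $\Cinf{l}$-formula even though no fixed $\FPC$-interpretation on the input could produce them. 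To repair your write-up, drop the computed $R(A,b)$ and the Positivstellensatz certificate for (ii), and adopt the $\exists R\,\forall\epsilon$ pattern together with the Cantor argument and the disjoint-sorts plugging lemma.
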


We begin the proof by relating the problem of exact feasibility to the
subject of Theorem~\ref{thm:weakfeasibility}, i.e., the weak
feasibility problem for circumscribed semidefinite sets.

For any~$R > 0$ and any semidefinite set~$K_{A,b}$,
the~$R$-\emph{restriction} of~$K_{A,b}$
is the set of all those points in~$K_{A,b}$
whose~$L_{\infty}$-norm is bounded by~$R$, i.e., it is the
semidefinite set given by:
\smallskip
\begin{center}
\begin{tabular}{lllll}
$\langle A_i,X \rangle$ & $\leq$ & $b_i$ &  & for  $i \in M$, \\
$X_{\{i,j\}}$ & $\leq$ &  $R$ &  & for  $i,j \in J$, \\
$-X_{\{i,j\}}$ & $\leq$ & $R$ &  & for $i,j \in J$, \\ 
$X \succeq 0$. & & 
\end{tabular}
\end{center}
\smallskip

For any~$\epsilon > 0$ and any semidefinite set~$K_{A,b}$,
the $\epsilon$-\emph{relaxation} of~$K_{A,b}$
is the semidefinite set given by:
\begin{center}
\begin{tabular}{lllll}
$\langle A_i,X \rangle$ & $\leq$ & $b_i+\epsilon$ & & for $i \in M$ \\ 
$X \succeq 0$.
\end{tabular}
\end{center}

Since an~$R$-restriction of a semidefinite set is
a semidefinite set itself, it makes sense to talk about 
its $\epsilon$-relaxation.
The question of emptiness for~$\epsilon$-relaxations
of~$R$-restrictions of semidefinite sets is closely linked to the
exact feasibility problem under consideration.
Recall the Cantor Intersection Theorem: If~$K_1 \supseteq K_2
\supseteq \cdots$ is a decreasing nested sequence of non-empty compact
subsets of~$\reals^n$, then the intersection~$\bigcap_{i \geq 1} K_i$
is non-empty. We use it for the following lemma.

\begin{lemma}\label{lem:limit}
A semidefinite set~$K_{A,b}$ is non-empty if and only if there exists
a positive rational~$R$ such that for every positive
rational~$\epsilon$ it holds that the~$\epsilon$-relaxation of
the~$R$-restriction of~$K_{A,b}$ is non-empty.
\end{lemma}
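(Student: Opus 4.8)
\textbf{Proof plan for Lemma~\ref{lem:limit}.}
The plan is to prove the two directions separately, with the forward direction being essentially trivial and the backward direction being where the Cantor Intersection Theorem does the work. For the forward direction, suppose $K_{A,b}$ is non-empty and fix some point $X_0 \in K_{A,b}$. Choose a positive rational $R$ that is at least $\norminf{X_0}$; then $X_0$ lies in the $R$-restriction of $K_{A,b}$, hence in its $\epsilon$-relaxation for every $\epsilon > 0$, so each such set is non-empty. That takes care of one implication.

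For the backward direction, assume that there is a positive rational $R$ such that for every positive rational $\epsilon$ the $\epsilon$-relaxation of the $R$-restriction of $K_{A,b}$ is non-empty; call this set $K_\epsilon$. First I would observe that each $K_\epsilon$ is a closed and bounded subset of $\reals^I$: it is cut out by finitely many non-strict linear inequalities together with the positive-semidefiniteness constraint (a closed condition), and the box constraints $-R \leq X_{\{i,j\}} \leq R$ force boundedness. Hence each $K_\epsilon$ is compact. Next I would note the nesting: if $\epsilon' \leq \epsilon$ then every inequality $\langle A_i, X\rangle \leq b_i + \epsilon'$ implies $\langle A_i, X\rangle \leq b_i + \epsilon$, so $K_{\epsilon'} \subseteq K_{\epsilon}$, while the box constraints and the semidefiniteness constraint are the same in both. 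Restricting to a decreasing sequence such as $\epsilon_m = 1/m$ for $m \geq 1$ gives a decreasing nested sequence $K_{\epsilon_1} \supseteq K_{\epsilon_2} \supseteq \cdots$ of non-empty compact sets, so by the Cantor Intersection Theorem the intersection $\bigcap_{m \geq 1} K_{\epsilon_m}$ is non-empty.

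Finally I would argue that any point $X^\star$ in this intersection actually lies in $K_{A,b}$. Such an $X^\star$ is positive semidefinite (that constraint is present in every $K_{\epsilon_m}$), and for each $i \in M$ it satisfies $\langle A_i, X^\star\rangle \leq b_i + 1/m$ for every $m \geq 1$; letting $m \to \infty$ gives $\langle A_i, X^\star\rangle \leq b_i$. Thus $X^\star$ satisfies all the defining inequalities of $K_{A,b}$, so $K_{A,b}$ is non-empty, completing the proof. The only point requiring a little care is making sure the sets in play are genuinely compact (closedness of the PSD cone and boundedness from the box constraints) and that the constraints behave monotonically in $\epsilon$; once that is in place, the Cantor Intersection Theorem and a limiting argument on the finitely many linear inequalities finish everything. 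I do not anticipate a serious obstacle here; the lemma is a clean compactness argument, and the substance of the section lies in converting it, together with Theorem~\ref{thm:weakfeasibility}, into the $\Cinf{\omega}$-definability statement of Theorem~\ref{thm:feasibility}.
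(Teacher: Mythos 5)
Your proposal is correct and follows essentially the same route as the paper's proof: the forward direction via choosing $R \geq \norminf{X_0}$, and the backward direction via compactness of the $1/m$-relaxations and the Cantor Intersection Theorem. The only difference is that you spell out the limiting argument showing a point of the intersection satisfies the exact inequalities, which the paper leaves as an observation.
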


\begin{proof}
Assume that~$K_{A,b}$ is non-empty and let~$x$ be a point in it.
Let~$R$ be a rational bigger than~$\norminf{x}$. Then~$x$ is also in
the~$R$-restriction of~$K_{A,b}$, and therefore in
the~$\epsilon$-relaxation of the~$R$-restriction of~$K_{A,b}$ for
every positive rational~$\epsilon$.

Assume now that~$R$ is a positive rational such that
the~$\epsilon$-relaxation of the~$R$-restriction of~$K_{A,b}$ is
non-empty for every positive rational~$\epsilon$. For each positive
integer~$m$, let~$K_m$ be the~$1/m$-relaxation of the~$R$-restriction
of~$K_{A,b}$.  Each~$K_m$ is closed and bounded, hence
compact. Moreover~$K_1 \supseteq K_2 \supseteq \cdots$, i.e., the
sets~$K_m$ form a decreasing nested sequence of non-empty subsets
of~$\reals^I$. It therefore follows from the Cantor Intersection
Theorem that~$\bigcap_{m \geq 1} K_m$ is non-empty. The claim follows
from the observation that~$\bigcap_{m \geq 1} K_m$ is indeed
the~$R$-restriction of~$K_{A,b}$.
\end{proof}

It follows from Theorem~\ref{thm:weakfeasibility} that the emptiness
problem for~$\epsilon$-relaxations of~$R$-restrictions of semidefinite
sets is definable in~$\FPC$ in the following sense.

\begin{proposition}\label{prop:epsilonR}
There exists a formula~$\psi$ of~$\FPC$ such that if~$\Astruct$ is a
structure over~$L_{\SDP} \disjointunion L_{\mathbb{Q}} \disjointunion
L_{\mathbb{Q}}$, representing a semidefinite set~$K_{A,b} \subseteq
\mathbb{R}^I$ and two positive rational numbers~$R$ and~$\epsilon$,
then:
\begin{enumerate} \itemsep=0pt
\item if~$\Astruct \models \psi$, then the~$\epsilon$-relaxation of
  the~$R$-restriction of~$K_{A,b}$ is non-empty,
\item if~$\Astruct \not \models \psi$, then the~$R$-restriction
  of~$K_{A,b}$ is empty.
\end{enumerate}
\end{proposition}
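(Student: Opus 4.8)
The plan is to build $\psi$ by composing the $\FPC$-formula that witnesses weak feasibility for circumscribed semidefinite sets (Theorem~\ref{thm:weakfeasibility}) with an $\FPC$-interpretation that, on input a representation of $(K_{A,b}, R, \epsilon)$, constructs a representation of the $\epsilon$-relaxation of the $R$-restriction of $K_{A,b}$, circumscribed by an appropriate radius. First I would observe that the $R$-restriction of $K_{A,b}$ is a semidefinite set whose matrix-inequality list consists of the original $\langle A_i, X\rangle \le b_i$ together with the $2|J|^2$ inequalities $\pm X_{\{i,j\}} \le R$; passing to the $\epsilon$-relaxation only shifts each right-hand side by $\epsilon$. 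All of this data — the new index set $\bar M' = \bar M \disjointunion (\bar J \times \bar J) \disjointunion (\bar J \times \bar J)$, the new coefficient tensors, and the new bounds — is first-order (indeed quantifier-free) definable from the $L_{\SDP}$-structure together with the two $L_\rationals$-copies encoding $R$ and $\epsilon$, using only rational arithmetic, which $\FPC$ certainly supports. The key point is that this $\epsilon$-relaxed $R$-restriction is \emph{explicitly bounded}: every feasible $X$ has $\norminf{X} \le R$, hence $\normtwo{X} \le R\,|J|$, so we may adjoin a circumscribing radius, say $R' := R\,|J| + 1$, which is again $\FPC$-definable from the inputs. Thus the interpretation outputs a structure over $L_{\SDP} \disjointunion L_\rationals$ representing a \emph{circumscribed} semidefinite set in $\mathscr{C}_{\SDP}^C$.

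Next I would invoke Theorem~\ref{thm:weakfeasibility}: there is an $\FPC$-interpretation solving the weak feasibility problem for $\mathscr{C}_{\SDP}^C$. Feeding it the circumscribed set just constructed, together with the precision parameter $\epsilon$ itself (also readily available from the input), it returns a bit $b$ and a vector $x$ such that either $b = 1$ and $x \in \Sphere{K'}{\epsilon}$, where $K'$ denotes the $\epsilon$-relaxed $R$-restriction, or $b = 0$ and $\mathrm{vol}(K') \le \epsilon$. I would then take $\psi$ to be the $\FPC$-sentence asserting that the returned bit is $1$; formally, $\psi$ is the composition of the constructing interpretation with the weak-feasibility interpretation, projected onto the output bit. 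Since $\FPC$-interpretations compose to $\FPC$-interpretations and the composite is a Boolean query, $\psi$ is an $\FPC$-formula over $L_{\SDP} \disjointunion L_\rationals \disjointunion L_\rationals$ as required.

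It remains to verify the two semantic conditions. For (1): if $\Astruct \models \psi$ then $b = 1$, so by correctness of weak feasibility there is $x$ with $x \in \Sphere{K'}{\epsilon}$; in particular $\Sphere{K'}{\epsilon}$ is non-empty, hence $K'$ is non-empty — but $K'$ \emph{is} the $\epsilon$-relaxation of the $R$-restriction of $K_{A,b}$, which is therefore non-empty, as claimed. For (2): if $\Astruct \not\models \psi$ then $b = 0$, so $\mathrm{vol}(K') \le \epsilon$; I would then argue that the $R$-restriction of $K_{A,b}$ — call it $K$ — must be empty. Here is the one place requiring a small argument: $K \subseteq K'$ and $K'$ has tiny volume, but that alone does not force $K$ empty. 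Instead I would use that the $\epsilon$-relaxation $K'$ contains the full $\epsilon$-ball around every point of $K$ in the directions of the inequality normals; more precisely, if $X \in K$ then any $X'$ with $\norminf{X' - X}$ small enough (and still PSD-feasible after the $\epsilon$-slack absorbs the perturbation of the linear constraints) lies in $K'$, so $K$ being non-empty would force $K'$ to contain a set of volume bounded below by a positive quantity depending only on $R$, $|J|$, $|M|$, and $\|A\|_\infty$ — contradicting $\mathrm{vol}(K') \le \epsilon$ \emph{provided $\epsilon$ is chosen small relative to that bound}. I expect this interplay to be the main obstacle: the bound in Theorem~\ref{thm:weakfeasibility} is stated for a fixed $\epsilon$, so to make the argument go through for \emph{every} rational $\epsilon$ in the eventual application (Lemma~\ref{lem:limit}), the cleanest fix is to feed the weak-feasibility routine a \emph{smaller} precision $\epsilon' := \min\{\epsilon, \epsilon_0\}$ where $\epsilon_0$ is an $\FPC$-definable lower bound on the volume that any non-empty spectrahedron of this shape must have (obtainable from John's theorem or an explicit ball-inscription argument, since a non-empty $R$-restricted spectrahedron, being a bounded convex body, either is lower-dimensional — in which case one perturbs into the relaxation — or contains a ball of radius polynomially small in the input data). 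With that adjustment the contrapositive is clean: $b = 0$ forces $\mathrm{vol}(K') \le \epsilon' \le \epsilon_0$, which is incompatible with $K$ non-empty, so $K = \emptyset$.
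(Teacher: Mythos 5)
Your proposal is correct and follows essentially the same route as the paper's proof: construct the circumscribed $\epsilon$-relaxation of the $R$-restriction by an $\FPC$-interpretation, run the weak-feasibility interpretation of Theorem~\ref{thm:weakfeasibility} on it, and---as you rightly realize partway through---call it not with precision $\epsilon$ but with a smaller definable precision that lower-bounds the volume of the $\epsilon$-relaxation of any non-empty set of this shape. The paper takes $\delta = \epsilon^k/(k!(2km)^k)$, where $k=|I|$ and $m = \max\{\normtwo{A_i} : i \in M\}\cup\{1\}$, and proves the required volume bound as Lemma~\ref{lem:ball} by exactly the mechanism you sketch: take $Y$ in the set, shift to $Y+\epsilon_1\identitymatrix$ with $\epsilon_1=\epsilon/(2km)$ so as to move strictly inside the PSD cone, and check that the $\epsilon_1$-ball around the shifted point still satisfies the $\epsilon$-relaxed linear constraints. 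That quantitative lemma is the one ingredient you leave unproved; your appeal to John's theorem is a detour (it would not by itself yield a bound computable from the input data), but your ``perturb into the relaxation and inscribe a ball'' alternative is precisely the paper's argument. Two minor slips: the circumscribing radius must account for the relaxation of the box constraints (points of the relaxed set satisfy $\norminf{X}\le R+\epsilon$, not $\le R$, so the paper uses $R'=\lceil\sqrt{k(R+\epsilon)^2}\,\rceil$), and the volume lower bound must be asserted for the $\epsilon$-relaxation of a non-empty set rather than for the set itself, which, as you correctly note, can be lower-dimensional.
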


\begin{proof}
Let~$\Phi$ be an~$\FPC$-interpretation that witnesses that the weak
feasibility problem for the class of circumscribed semidefinite sets
is FPC-definable.  The formula~$\psi$ takes as input the
representation of a semidefinite set~$K_{A,b} \subseteq \mathbb{R}^I$,
a rational~$\epsilon > 0$ and a rational~$R > 0$, and does the
following:
\smallskip
  \begin{center}
  \begin{tabbing} 
  \;\; \= 01. \;\; \= given $K_{A,b} \subseteq \mathbb{R}^I$, $\epsilon$ and $R$ as specified, \\
  \> 02. \> compute $k : = |I|$, \\
  \> 03. \> compute $R' := \lceil \sqrt{k(R+{\epsilon})^2} \rceil$, \\
      \> 04. \> compute a representation of $K$, the ${\epsilon}$-relaxation of the $R$-restriction of $K_{A,b}$, \\
      \> 05. \> compute $m := \max\; \{ \normtwo{A_i} : i \in M \} \cup \{1\}$, \\
       \> 06. \> compute $\delta = \epsilon^k/(k!(2km)^{k})$, \\
  \> 07. \> compute $(b,x) := \Phi((K,R'),\delta)$, \\
    \> 08. \> if $b = 1$ output $\top$, \\
  \> 09. \> if $b = 0$ output $\bot$.
  \end{tabbing}
  \end{center}
  \smallskip
  This procedure is clearly FPC-definable. In order to prove
  correctness we will need the following lemma.

\begin{lemma}\label{lem:ball}
Let~$A \in \mathbb{R}^{M \times (J \times J)}$,~$b \in
\mathbb{R}^{M}$,~$I = \{ {\{i,j \} : i,j \in J} \}$,~$k = |I|$, and~$m
= \max\;\{\normtwo{A_i} : i \in M \} \cup \{1\}$.  For any~$\epsilon >
0$, if the semidefinite set~$K_{A,b} \in \mathbb{R}^I$ is non-empty,
then its~$\epsilon$-relaxation has volume greater than
$\delta = \epsilon^k/(k! (2km)^k)$.
\end{lemma}

\begin{proof}
Take~$\epsilon_1 = \epsilon/2km$. Let~$Y$ be an element
of~$K_{A,b}$. We will show that the~$\epsilon_1$-ball around~$Y +
\epsilon_1 \identitymatrix$ is included in the~$\epsilon$-relaxation
of~$K_{A,b}$. It will follow that the volume of
the~$\epsilon$-relaxation of~$K_{A,b}$ is at least~$\epsilon_1^kV_k$,
where~$V_k$ is the volume of a~$1$-ball around~$Y$ in
the~$k$-dimensional real vector space. Since~$V_k > {1/k!}$ this
finishes the proof.

Suppose that~$T \in S(Y + \epsilon_1 \identitymatrix,
\epsilon_1)$. This means that~$T = Y + \epsilon_1 \identitymatrix +
Z$, where~$\normtwo{Z} \leq \epsilon_1$. We start by showing that~$T$
is positive semidefinite. Let~$v$ be a vector whose~$L_{2}$-norm
is~$1$.  It holds that
\begin{equation}
\begin{aligned}
v^TTv & = v^T(Y + \epsilon_1 \identitymatrix + Z)v = v^TYv + \epsilon_1v^T\identitymatrix v + v^TZv \geq \\
& \geq 0 + \epsilon_1 \normtwo{v}^2 + \Iprod{vv^T}{Z} \geq \epsilon_1 - | \Iprod{vv^T}{Z} | \geq \\
& \geq \epsilon_1 - \normtwo{vv^T} \normtwo{Z} = \epsilon_1 - \normtwo{v}^2 \normtwo{Z} \geq \epsilon_1 - \epsilon_1 = 0.
\end{aligned}
\end{equation}
Moreover, for every~$i \in M$, we have
\begin{equation}
\begin{aligned}
 \langle A_i,  T\rangle - b_i & =  \Iprod{A_i}{Y} + \Iprod{A_i}{\epsilon_1 \identitymatrix} + \Iprod{A_i}{Z} - b_i  \leq \\
& \leq  \Iprod{A_i}{\epsilon_1 \identitymatrix} + \Iprod{A_i}{Z}  \leq |\Iprod{A_i}{\epsilon_1 \identitymatrix} | + |\Iprod{A_i}{Z} | \leq \\
& \leq  \epsilon_1 \normtwo{A_i} \normtwo{\identitymatrix} +  \normtwo{A_i}  \normtwo{Z}  \leq \\
& \leq  \epsilon_1 \normtwo{A_i} \sqrt{k}+   \normtwo{A_i} \epsilon_1  = \\
& =  {\epsilon \normtwo{A_i}  \sqrt{k} \over 2km }  +   { \normtwo{A_i}  \epsilon \over 2km } \leq \\
 & \leq {\epsilon \over 2 \sqrt{k}} + {\epsilon \over 2k} \leq \epsilon,
\end{aligned}
\end{equation}
where the first inequality follows from the fact that~$Y$ is an
element of~$K_{A,b}$ hence, for every~$i \in M$, it satisfies~$\langle
A_i,Y \rangle \leq b_i$, the third inequality follows from the
Cauchy-Schwartz inequality~$| \Iprod{x}{y} | \leq \normtwo{x}
\normtwo{y}$, the second to last inequality follows from the fact
that~$m = \max\; \{ \normtwo{A_i} : i \in M \} \cup \{1\}$, and the
last inequality follows from the fact that~$k = |I| \geq 1$.
\end{proof}

  We are now ready to conclude the proof. Observe that
  the~$L_{\infty}$-norm of any point that belongs to
  the~$\epsilon$-relaxation of the~$R$-restriction of a semidefinite
  set is bounded by~$R+\epsilon$, therefore the pair~$(K,R')$ computed
  in Steps~03 and~04 is a representation of a circumscribed
  semidefinite set. Let~$(b,x)$ be the pair computed in Step~07.
 
If~$b=1$, then there exists a point in~$S(K,\delta)$, which in
particular means that~$K$ is non-empty, so the output in Step~08 is
correct. If~$b=0$, then we know that the volume of~$K$ is at
most~$\delta$. The inequalities that define~$K$ have the
form~$\Iprod{A_i}{X} \leq b_i+\epsilon$ for~$i \in M$,
and~$X_{\{i,j\}} \leq R+\epsilon$ or~$-X_{\{i,j\}} \leq R+\epsilon$
for~$i,j \in J$. The maximum 2-norm of the normals of these
inequalities and~$1$ is~$m = \max\;\{\normtwo{A_i} : i \in M \} \cup
\{1\}$, so Lemma~\ref{lem:ball} applies. This means that~$K$ is empty,
and the output in Step~09 is correct.
\end{proof}

To finish the proof of Theorem~\ref{thm:feasibility} we show a
technical lemma that may sound a bit surprising at first:
it sounds as if it was stating that~$\Cinf{k}$-definability is closed
under second-order quantification over unbounded domains, which cannot
be true. However, on closer look, the lemma states this \emph{only} if
the vocabularies of the quantified and the body parts of the formula
are totally disjoint. In particular, this means that the domains of
the sorts in the quantified and body parts of the formula stay
unrelated \emph{except} through the counting mechanism of~$\Cinf{k}$.

Note for the record that if~$L$ and~$K$ are two many-sorted
vocabularies with disjoint sorts, then obviously the vocabulary~$L
\cup K$ does not contain any relation symbol whose type mixes the
sorts of~$L$ and~$K$.  If~$\mathscr{A}$ is a class of~$L\cup
K$-structures and~$\mathscr{B}$ is a class of~$K$-structures, we use
the notation~$\exists \mathscr{B}\cdot\mathscr{A}$ to denote the class
of all finite~$L$-structures~$\Astruct$ for which there exists a
structure~$\Bstruct \in \mathscr{B}$ such that~$\Astruct
\disjointunion \Bstruct \in \mathscr{A}$. In words, this is the set of
finite structures that can be disjointly extended and expanded by a
structure in~$\mathscr{B}$ to a structure in~$\mathscr{A}$. Similarly,
we use~$\forall \mathscr{B}\cdot\mathscr{A}$ to denote the class of
all finite~$L$-structures~$\Astruct$ such that for all
structures~$\Bstruct \in \mathscr{B}$ we have that~$\Astruct
\disjointunion \Bstruct \in \mathscr{A}$.  In words, this is the set
of finite structures all whose disjoint extensions and expansions by a
structure in~$\mathscr{B}$ are in~$\mathscr{A}$.

\begin{lemma}\label{lem:plugging}
Let~$L$ and~$K$ be many-sorted vocabularies with disjoint sorts,
let~$\mathscr{A}$ be a class of finite~$L \cup K$-structures, and
let~$\mathscr{B}$ be a class of
finite~$K$-structures. If~$\mathscr{A}$ is~$\Cinf{k}$-definable, then
the classes of~$L$-structures~$\exists \mathscr{B} \cdot \mathscr{A}$
and~$\forall \mathscr{B} \cdot \mathscr{A}$ are
also~$\Cinf{k}$-definable.
\end{lemma}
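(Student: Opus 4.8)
The plan is to reduce the general statement to the known Ehrenfeucht--Fra\"iss\'e/Spoiler--Duplicator characterization of $\Cinf{k}$-equivalence and exploit the disjointness of the vocabularies to "transport" a winning strategy on $\mathscr{A}$-instances to a winning strategy on the quantified projections. Recall that a class of finite $M$-structures is $\Cinf{k}$-definable if and only if it is closed under the equivalence relation $\equiv^{\Cinf{k}}$, i.e.\ under indistinguishability by the bijective $k$-pebble game. So it suffices to show that if $\Astruct \equiv^{\Cinf{k}} \Astruct'$ as $L$-structures, and $\Astruct \in \exists\mathscr{B}\cdot\mathscr{A}$, then $\Astruct' \in \exists\mathscr{B}\cdot\mathscr{A}$ (and symmetrically for $\forall$, which follows by taking complements since $\forall\mathscr{B}\cdot\mathscr{A} = \overline{\exists\mathscr{B}\cdot\overline{\mathscr{A}}}$ and $\Cinf{k}$ is closed under negation).

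First I would fix $\Bstruct \in \mathscr{B}$ witnessing $\Astruct\disjointunion\Bstruct \in \mathscr{A}$, and consider the \emph{same} $\Bstruct$ as a candidate witness for $\Astruct'$. The key observation is that since the sorts of $L$ and $K$ are disjoint, in the $k$-pebble bijective game played on the $L\cup K$-structures $\Astruct\disjointunion\Bstruct$ and $\Astruct'\disjointunion\Bstruct$, every pebble sits either entirely in the $L$-part or entirely in the $K$-part, and the atomic type of a configuration splits as the disjoint combination of its $L$-part atomic type and its $K$-part atomic type (there are no mixed relation symbols). Duplicator's strategy is then the obvious product: on pebbles placed in the $L$-sorts, use the winning strategy witnessing $\Astruct\equiv^{\Cinf{k}}\Astruct'$; on pebbles placed in the $K$-sorts, use the identity bijection between the two copies of $\Bstruct$. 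The bijection Duplicator must supply at each round is the union (over sorts) of the appropriate component bijection, which is well-defined precisely because the two structures share the identical $K$-part. One checks that this composite strategy preserves atomic types round by round, hence Duplicator wins, hence $\Astruct'\disjointunion\Bstruct \equiv^{\Cinf{k}} \Astruct\disjointunion\Bstruct$; since $\mathscr{A}$ is $\Cinf{k}$-definable and contains the latter, it contains the former, so $\Astruct' \in \exists\mathscr{B}\cdot\mathscr{A}$ with witness $\Bstruct$.

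Then I would handle the $\forall$-case by the complementation identity noted above, using that $\Cinf{k}$ is syntactically closed under negation (so $\overline{\mathscr{A}}$ is $\Cinf{k}$-definable over $L\cup K$ whenever $\mathscr{A}$ is) and that the $\exists$-construction has just been shown to preserve $\Cinf{k}$-definability. One subtlety to flag: the index set $I$ of the infinite disjunction/conjunction in a $\Cinf{k}$ formula may be genuinely infinite, but that causes no difficulty here because we argue semantically via the game rather than by manipulating formulas directly; the number of \emph{variables} is what is preserved, not the size of the formula, and the game characterization is exactly insensitive to formula size.

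The main obstacle I anticipate is purely a matter of bookkeeping rather than mathematical depth: one must be careful that the bijective $k$-pebble game as usually stated is for single-sorted structures, and set up (or cite) the many-sorted variant in which the bijection at each round is required to be sort-preserving. Once that is in place, the "product of strategies" argument goes through cleanly precisely because of the disjoint-sorts hypothesis --- which is the one place where the hypothesis is used, and also explains, as the surrounding text emphasizes, why the lemma does \emph{not} amount to closure under honest second-order quantification: the quantified $K$-part is glued on by a disjoint union, so it can never interact with the $L$-part through a relation, only through the global counting/cardinality comparisons that $\Cinf{k}$ already has access to.
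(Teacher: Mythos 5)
Your proof is correct, but it takes a genuinely different route from the paper's. The paper argues syntactically: for each fixed $\Bstruct\in\mathscr{B}$ it \emph{Booleanizes} the defining formula, replacing every counting quantifier over a $K$-sort variable by a finite disjunction over tuples of elements of $\Bstruct$ and every $K$-atom by its truth value, thereby producing an $L$-formula $\phi(\Bstruct)$ with the same $x$-variables; it then takes $\bigvee_{\Bstruct\in\mathscr{B}}\phi(\Bstruct)$ (resp.\ $\bigwedge$) to handle the unbounded quantification. You instead argue semantically: you invoke the characterization of $\Cinf{k}$-definability of classes of finite structures as closure under $\equiv^{\Cinf{k}}$, and show that $\Astruct\equiv^{\Cinf{k}}\Astruct'$ implies $\Astruct\disjointunion\Bstruct\equiv^{\Cinf{k}}\Astruct'\disjointunion\Bstruct$ by a product strategy in the bijective $k$-pebble game, which decomposes sort-wise precisely because there are no mixed atoms. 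Your route is more conceptual and makes the role of the disjoint-sorts hypothesis vivid, but it outsources the infinitary disjunction to the (many-sorted) closure-under-$\equiv^{\Cinf{k}}$ characterization, which you correctly flag as needing a many-sorted formulation of the game; the paper's route is more self-contained and explicitly constructs the defining formula, at the cost of an inductive translation. Your complementation identity $\forall\mathscr{B}\cdot\mathscr{A}=\overline{\exists\mathscr{B}\cdot\overline{\mathscr{A}}}$ is a valid shortcut for the universal case (the paper instead just swaps the outer disjunction for a conjunction), and your transport argument in fact handles both quantifiers directly anyway. Both proofs are sound.
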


\begin{proof}
The proof is a simple \emph{Booleanization} trick to replace the
finite quantifiers~$\exists^{\geq i}$ over the sorts in~$K$ by finite
propositional formulas, followed by replacing~$\exists \mathscr{B}$
and~$\forall \mathscr{B}$ by infinite disjunctions and conjunctions,
respectively, indexed by the structures in~$\mathscr{B}$. We provide
the details.  Let~$\phi$ be a formula of the many-sorted vocabulary~$L
\cup K$ with all variables of the~$L$-sorts among~$x_1,\ldots,x_k$,
and all variables of the~$K$-sorts among~$y_1,\ldots,y_k$.  Note that
since~$L$ and~$K$ have disjoint sorts, all the atomic subformulas
of~$\phi$ have all its variables among~$x_1,\ldots,x_k$ or all its
variables among~$y_1,\ldots,y_k$. In other words, there are no atomic
subformulas with mixed~$x$-$y$ variables.  For every
finite~$K$-structure~$\Bstruct$ with domain~$B$ and every~$b =
(b_1,\ldots,b_k) \in B^k$, let~$\phi(\Bstruct,b)$ be the
\emph{Booleanization} of~$\phi$ with respect to the atomic
interpretation of~$K$ given by~$\Bstruct$, the domain of
quantification~$B$ for the variables of the~$K$-sorts, and the
free-variable substitution~$x := b$.  Formally, using the
notation~$[E]$ for the truth value of the statement~$E$, the
formula~$\phi' = \phi(\Bstruct,b)$ is defined inductively:
\medskip
\begin{enumerate} \itemsep=0pt
\item if $\phi = R(x_{i_1},\ldots,x_{i_\ell})$ with $R \in L \cup \{=\}$, define $\phi' = \phi$,
\item if $\phi = R(y_{i_1},\ldots,y_{i_\ell})$ with $R \in K \cup \{=\}$, define $\phi' = [(b_{i_1},\ldots,b_{i_\ell}) \in R(\Bstruct)]$,
\item if $\phi = \neg \theta$, define $\phi' = \neg \theta(\Bstruct,b)$,
\item if $\phi = \bigwedge_i \theta_i$, define $\phi' = \bigwedge_i \theta_i(\Bstruct,b)$,
\item if $\phi = \exists^{\geq t} x_i (\theta)$, define $\phi' = \exists^{\geq t} x_i (\theta(\Bstruct,b))$,
\item if $\phi = \exists^{\geq t} y_i (\theta)$, define
\begin{equation}
\phi' = \bigvee_{c \in B^t} \Big({\bigwedge_{j,j' \in [t] \atop j \not= j'} [c_j \not= c_{j'}] \wedge \bigwedge_{j \in [t]} \theta(\Bstruct,b[i/c_j])}\Big).
\end{equation}
\end{enumerate}
Since there are no atomic subformulas with mixed~$x$-$y$ variables,
the definition covers all cases.  The construction
of~$\phi(\Bstruct,b)$ was designed so that for every finite~$(L \cup
K)$-structure~$\Cstruct$ with~$L$- and~$K$-reducts~$\Astruct$
and~$\Bstruct$ with domains~$A$ and~$B$, respectively, every~$a \in
A^k$ and every~$b \in B^k$, it holds that~$\Cstruct \models \phi[a,b]$
if and only if~$\Astruct \models \phi(\Bstruct,b)[a]$.  Now, if~$\phi$
is an~$(L \cup K)$-sentence, define~$\phi(\Bstruct) := \bigvee_{b \in
  B^k} \phi(\Bstruct,b)$ and~$\phi^\exists := \bigvee_{\Bstruct \in
  \mathscr{B}} \phi(\Bstruct)$.  It follows from the definitions
that~$\phi^\exists$ defines~$\exists \mathscr{B} \cdot
\mathscr{A}$. Similarly, defining~$\phi^\forall := \bigwedge_{\Bstruct
  \in \mathscr{B}} \phi(\Bstruct)$ works for~$\forall \mathscr{B}
\cdot \mathscr{A}$.
\end{proof}

We put everything together in the proof of Theorem~\ref{thm:feasibility}.

\begin{proof}[Proof of Theorem~\ref{thm:feasibility}]
Let~$\psi$ be the~$L_{\SDP} \disjointunion L_{\mathbb{Q}}
\disjointunion L_{\mathbb{Q}}$-formula of~$\FPC$ defined in
Proposition~\ref{prop:epsilonR}.
Let~$l$ be the number of variables in~$\psi$.  By the translation
from~$l$-variable~$\FPC$ to~$\Cinf{2l}$ (see
Section~\ref{sec:preliminaries}), there exists an~$L_{\SDP}
\disjointunion L_{\mathbb{Q}} \disjointunion
L_{\mathbb{Q}}$-formula~$\tau$ of~$\Cinf{2l}$ defining the same
class~$\mathscr{A}$ of finite structures.
The vocabulary of~$\mathscr{A}$ is a disjoint union of~$L_{\SDP}$ and
two copies of~$L_{\mathbb{Q}}$.  Hence, all three of those
vocabularies have disjoint sorts.  Let~$\mathscr{B}_R$ be the class of
finite structures which are representations of positive rational
numbers over the first copy of~$L_{\mathbb{Q}}$, and
let~$\mathscr{B}_\epsilon$ be the class of finite structures which are
representations of positive rational numbers over the second copy
of~$L_{\mathbb{Q}}$.
By Lemma~\ref{lem:plugging} the class~$\forall \mathscr{B}_\epsilon
\cdot \mathscr{A}$, and hence~$\exists \mathscr{B}_R \cdot \forall
\mathscr{B}_\epsilon \cdot \mathscr{A}$, is
also~$\Cinf{2l}$-definable. Let~$\phi$ be the~$L_{\SDP}$-formula
of~$\Cinf{2l}$ defining this last class. Lemma~\ref{lem:limit} implies
that~$\phi$ defines the exact feasibility problem for semidefinite
sets.
\end{proof}

\section{Sums-of-Squares Proofs and the Lasserre Hierarchy} \label{sec:sos}

In this section we develop the descriptive complexity of the problem
of deciding the existence of degree-bounded SOS proofs.  Along the way
we discuss the relationship between the Lasserre hierarchy of SDP
relaxations and SOS refutations, and how~$0/1$-valued variables ensure
that they satisfy strong duality. These results will be used in the
next section. The strong duality property 
will also imply that SOS refutations exist if and only if
\emph{approximate} SOS refutations exist, for a notion of approximate
SOS refutation that we introduce.  This will be used to complement the
descriptive complexity results for SOS proofs by getting a stronger
upper bound in the case of refutations.

\subsection{Descriptive Complexity of SOS Proofs}

We begin with a few definitions.
Let~$x_1,\ldots,x_n$ be a set of variables. In the following whenever
we talk about polynomials or monomials we mean polynomials and
monomials over the set of variables~$x_1, \ldots, x_n$ and real or
rational coefficients.
By $B_n$ we denote the set containing the following polynomials:
\begin{equation}\label{eq:axioms}
1, \ \ \ x_i, \ \ \ 1-x_i, \ \ \ x_i^2 - x_i, \ \ \ x_i - x_i^2, \ \ \ \text{ for every } i \in [n].
\end{equation}
We refer to the inequalities~$p \geq 0$ for~$p \in B_n$ as
\emph{Boolean axioms}. A polynomial~$s$ is a \emph{sum of squares of
polynomials} if it has the form~$s = \sum_{i \in [l]} r_i^2$, for some
polynomials~$r_1, \ldots, r_l$.  For a set
of polynomials~$Q$ and a
polynomial~$p$, a \emph{Sums-of-Squares (SOS) proof} of~$p \geq 0$
from~$Q$ is 
an indexed set of polynomials~$\{ s_q : q \in \bar{Q} \}$ that satisfy
an identity
\begin{equation}\label{eqn:sos}
\sum_{q \in \bar{Q}} q s_q = p,
\end{equation}
where, $\bar{Q}= Q \cup B_n$
and for every $q \in \bar{Q}$, the polynomial $s_q$ is a sum of squares
of polynomials.
 The degree of the proof is defined as~$\max \{
\deg(qs_q) : q \in \bar{Q}\}$, where, for a polynomial~$p$, the
notation~$\deg(p)$ denotes the degree of~$p$.

One should think about the set of
polynomials~$Q$ as representing a system of polynomial
inequalities~$\{ q \geq 0 : q \in Q \}$.  The
identity~(\ref{eqn:sos}) implies that any~$0/1$-solution to this
system satisfies also the inequality~$p \geq 0$.  Therefore, if~$p =
-1$, a proof certifies that the system~$\{ q \geq 0 : q \in Q\}$
has no~$0/1$-solutions. This is why we call it a \emph{refutation}
of~$Q$.  
The definition of SOS as a proof system is sometimes attributed to
Grigoriev and Vorobyov~\cite{GrigorievVorobyov2001}. We note that, in
the case of refutations, our definition is the special case their
Positivstellensatz \cite[Definition 2]{GrigorievVorobyov2001} in which
all non-trivial products of~$q_j$'s have~$0$ multipliers. Unlike
theirs, our proof system includes the Boolean axioms by default, thus
ensuring completeness even for proofs of polynomial inequalities
over~$0/1$-valued variables.

We consider the problem of deciding the existence of~$\SOS$ proofs and
refutations of a fixed degree~$2d$ for a set of polynomials given as
input. The first easy observation is that the proof-existence problem
can be reduced to the exact feasibility problem for semidefinite sets,
and the reduction can be done in~$\FPC$.  Then we ask whether the
exactness condition in the feasibility problem for semidefinite sets
can be relaxed, and we achieve this for refutations.  In other words:
\begin{enumerate} \itemsep=0pt
\item Proof-existence reduces in~$\FPC$ to exact feasibility for
  semidefinite sets.
\item Refutation-existence reduces in~$\FPC$ to weak feasibility for
  semidefinite sets.
\end{enumerate}
We note that, in both cases, the semidefinite sets in the outcome of
this reduction are not circumscribed.  Roughly, this is because their
elements are vectors encoding sequences of coefficients of sum of
squares polynomials that form a valid proof. By the results
in~\cite{DBLP:conf/innovations/ODonnell17}, the bit-complexity of
these coefficients may not be polynomially bounded, not even for
proofs over Boolean variables~\cite{DBLP:conf/icalp/RaghavendraW17},
nor for refutations over Boolean
variables~\cite{DBLP:conf/lics/Hakoniemi21}.

As stated, point 1.\ above is
almost a reformulation of the problem. In order to prove point 2.\ we
need to develop a notion of approximate refutation, and combine it
with a strong duality theorem that characterizes the existence of SOS
refutations in terms of
so-called
\emph{pseudoexpectations}~\cite{DBLP:conf/stoc/BarakBHKSZ12}. We note
that the strong duality theorem that we need relies on the assumption
that the Boolean axioms are 
included in the definition of SOS proof.

Finally, we combine these~$\FPC$ reductions with the results of the
previous section in order to get the following:

\begin{corollary}\label{cor:decidingproofs}
For every fixed positive integer~$d$, the problems of deciding the
existence of~SOS proofs of degree~$2d$, and~SOS refutations of
degree~$2d$, are~$\Cinf{\omega}$-definable.  Moreover, there exists a
constant~$c$, independent of~$d$, such that the defining formulas are
in~$\Cinf{cd}$.
\end{corollary}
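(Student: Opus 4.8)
The plan is to realise both decision problems as the composition of an $\FPC$-definable reduction to the exact feasibility problem for semidefinite sets with the $\Cinf{\omega}$-definability of that problem from Theorem~\ref{thm:feasibility}, and then to control how many variables this composition consumes so that the bound comes out as $\Cinf{cd}$.

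First I would make the reduction of point~1 above precise. Given (a finite-structure encoding of) a set of polynomials $Q$ and a target polynomial $q$, with the degree $2d$ fixed, a degree-$2d$ SOS proof $\sum_j p_j s_j = q$ exists if and only if there exist positive semidefinite matrices $M_p$ --- one for each Boolean axiom $p \in B_n$ and each $p \in Q$ with $\deg(p) \le 2d$, indexed by the monomials of degree at most $d - \lceil \deg(p)/2 \rceil$ --- such that $\sum_p p \cdot v_p^{T} M_p v_p = q$, where $v_p$ is the vector of those monomials. Matching, for each monomial $\gamma$ of degree at most $2d$, the coefficients of $\gamma$ on the two sides turns this identity into a system of linear equations in the entries of the $M_p$, whose coefficients are obtained by summing the (input) coefficients of $Q$ over the finitely many factorizations $\gamma = \mu\alpha\beta$. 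Placing the $M_p$ along the diagonal of a single matrix $X$ (zeroing the off-diagonal blocks by equations, and replacing each equation by two inequalities) presents the feasible set as a semidefinite set $K_{A,b}$ in the sense of Section~\ref{sec:sdp}, and the degree-$2d$ SOS proof exists if and only if $K_{A,b}$ is non-empty. The structural point is that every index set in play --- monomials of degree at most $d$, pairs of them, monomials of degree at most $2d$ --- is a naming of objects by tuples of length $O(d)$ over the input domain, and that enumerating the bounded-degree factorizations $\gamma = \mu\alpha\beta$ and carrying out the attendant rational arithmetic is routine for $\FPC$; hence the reduction is realized by an $\FPC$-interpretation whose arity and whose component formulas both use $O(d)$ variables. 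Since a refutation of $Q$ is exactly a degree-$2d$ proof of $-1 \ge 0$, the same interpretation with $q := -1$ reduces SOS refutation-existence of degree $2d$ to the same problem (the alternative route through weak feasibility, via the approximate refutations and strong duality developed in the remainder of this section and Theorem~\ref{thm:weakfeasibility}, leads to the same conclusion).

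Next I would track variables through the composition. By Theorem~\ref{thm:feasibility}, exact feasibility for semidefinite sets is defined by a fixed formula of $\Cinf{k_0}$ for an absolute constant $k_0$ --- inspection of its proof shows $k_0$ is the number of variables of the $\FPC$-formula of Proposition~\ref{prop:epsilonR}, which does not depend on $d$. Translating the $\FPC$-interpretation above into a $\Cinf{}$-interpretation (each component $\FPC$-formula with $O(d)$ variables becoming a $\Cinf{O(d)}$-formula, as recalled in Section~\ref{sec:preliminaries}) yields a $\Cinf{O(d)}$-interpretation of arity $O(d)$; composing it with the $\Cinf{k_0}$-formula gives, by the variable-multiplication rule for composition, a formula of $\Cinf{O(d)\cdot k_0} = \Cinf{cd}$ that defines SOS proof-existence of degree $2d$, and likewise for refutations. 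The constant $c$ is the product of $k_0$ with the absolute constant hidden in the $O(d)$, hence independent of $d$; in particular both problems lie in $\Cinf{\omega}$, which is the corollary.

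The step I expect to be the main obstacle is organizational rather than deep: arranging the semidefinite-set reduction as an $\FPC$-interpretation of arity linear in $d$. This requires an isomorphism-invariant way of naming monomials of bounded degree by bounded-length tuples (so the output structure's sorts live over $O(d)$-tuples of the input domain), checking that the coefficient-matching assignment $\gamma \mapsto (\text{linear form in the } M_p)$ is $\FPC$-definable, and being careful that the degree-$2d$ cap discards precisely the polynomials of $Q$ of degree above $2d$, so that only bounded-degree data ever enters the output. Once this is in place, the remainder is the composition calculus for interpretations together with Theorems~\ref{thm:feasibility} and~\ref{thm:weakfeasibility}.
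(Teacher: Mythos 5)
Your proposal is correct and follows essentially the same route as the paper: the paper encodes degree-$2d$ SOS proof existence as non-emptiness of a semidefinite set $K_d(Q,p)$ built from the matrices $A_{q,d_q,\alpha}$ (with the $Z_q$ as principal submatrices of one large matrix $X$), records in Fact~\ref{fact:reduction} that this is an $\FPC$-interpretation using $O(d)$ variables, and composes with the $\Cinf{\omega}$-sentence of Theorem~\ref{thm:feasibility}, handling refutations by setting $p:=-1$. The only cosmetic difference is that the paper does not zero the off-diagonal blocks of $X$ but instead extends any family $\{Z_q\}$ to a feasible point by setting the remaining entries to $0$; your variable-counting via the multiplication rule for composed interpretations matches the paper's.
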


As usual with descriptive complexity results like these, we need to
fix some encoding of the input as finite relational structures.  In
this case the inputs are indexed sets of polynomials, where each
polynomial is an indexed set of monomials and coefficients.  The exact
choice of encoding is not very essential, but we propose one for
concreteness.

Let~$I$ be an index set for variables and let~$\{x_i : i \in I\}$ be a
set of formal variables. A \emph{monomial} is a product of
variables. For~$\alpha = (\alpha_i : i \in I) \in \naturals^I$, we use
the notation~$x^\alpha$ to denote the monomial that has
\emph{degree}~$\alpha_i$ on variable~$x_i$. We write~$|\alpha|$ for
the degree~$\sum_{i \in I} \alpha_i$ of the monomial~$x^\alpha$.  A
polynomial is a finite linear combination of monomials, i.e., a formal
expression of the form~$\sum_\alpha c_\alpha x^\alpha$ in which all
but finitely many of the coefficients~$c_\alpha$ are zero. A
polynomial~$p$ with rational coefficients is represented by a
three-sorted structure, with a sort~$\bar{I}$ for the index set~$I$, a
second sort~$\bar{M}$ for the finite set of monomials that have
non-zero coefficients in~$p$, and a third sort~$\bar{B}$ for a
domain~$\{0,\ldots,N-1\}$ of bit positions, where~$N$ is large enough
to encode all the coefficients of~$p$ and all the degrees of its
monomials in binary. The vocabulary
of this structure has one unary relation symbol~$I$ for~$\bar{I}$, one
binary relation symbol~$\leq$ for the natural linear order
on~$\bar{B}$, three binary relations symbols~$P_s$,~$P_n$, and~$P_d$
of type~$\bar{M} \times \bar{B}$ that encode, for each monomial, the
sign, the bits of the numerator, and the bits of the denominator of
its coefficient, respectively, and a ternary relation symbol~$D$ of
type~$\bar{M} \times \bar{I} \times \bar{B}$ that encodes, for each
monomial and each variable, the bits of the degree of this variable in
the monomial.

Let~$J$ be an index set for polynomials and let~$\{ p_j : j \in J \}$
be a set of polynomials on the variables~$\{x_i : i \in I\}$. Such a
set is represented by a four-sorted structure, with a sort~$\bar{J}$
for the index set~$J$, and the three sorts~$\bar{I},\bar{M},\bar{B}$
of the previous paragraph. The vocabulary for this structure has one
unary relation symbol~$J$ for~$\bar{J}$, one binary relation
symbol~$\leq$ for the natural linear order on~$\bar{B}$, three ternary
relation symbols~$P_s$,~$P_n$, and~$P_d$ of
type~$\bar{J} \times \bar{M} \times \bar{B}$ that encode, for
each~$j \in J$, the coefficients of the monomials in~$p_j$, and a
four-ary relation symbol of
type~$\bar{J} \times \bar{M} \times \bar{I} \times \bar{B}$ that
encodes, for each~$j \in J$, the degrees of the variables in the
monomials in~$p_j$.

\subsection{The Lasserre hierarchy}\label{subsec:lasserre}

There is a sense in which sums-of-squares proofs can be seen as the
\emph{dual solutions} in a hierarchy of semidefinite programming
relaxations of an associated optimization problem. This correspondence
will be used explicitly in~Subsection~\ref{subsec:approx}.
Some of the
concepts we introduce now will also be useful in the next
Subsection~\ref{subsec:sos-as-sdps}.

We adopt the setting in~\cite{Josz2016}. For a set of
polynomials~$\{q_0, q_1, \ldots, q_k\}$, 
 we denote the following polynomial
optimization problem by~$\POP(q_0;\{q_1, \ldots, q_k\})$:
\begin{equation}
(\POP) \;\;:\;\; \inf{_x} \;q_0(x)\; \text{ s.t. } q_i(x) \geq 0 \;\text{ for }  i \in [k].
\end{equation}

Take a positive integer~$d$. Recall that we use the
notation~$x^\alpha$, where~$\alpha = (\alpha_i : i \in [n]) \in \naturals^n$,
to denote the monomial that has degree~$\alpha_i$ on
variable~$x_i$. We identify the monomial~$x^\alpha$ with its vector of
degrees~$\alpha$. By~$M_d$ we denote the matrix indexed by monomials
of degree at most~$d$ 
defined by~$(M_d)_{\alpha,\beta} = x^{\alpha+\beta}$. For every
monomial~$x^\alpha$, we introduce a variable~$y_\alpha$ and
by~$M_d(y)$ we denote the corresponding matrix of variables, defined
by~$(M_d(y))_{\alpha,\beta} = y_{\alpha+\beta}$. More generally, for
any polynomial~$q = \sum_\gamma c_\gamma x^\gamma$, the
matrix~$M_{q,d}$, indexed by monomials of degree at most~$d$, is
defined by~$M_{q,d} = qM_d$,
i.e.,~$(M_{q,d})_{\alpha,\beta} = qx^{\alpha+\beta}$. The
corresponding matrix of variables~$M_{q,d}(y)$ is defined
by~$(M_{q,d}(y))_{\alpha,\beta} = \sum_{\gamma} c_\gamma
y_{\alpha+\beta+\gamma}$. Observe that the entries of the
matrix~$M_{q,d}$ are polynomials of degree at
most~$2d+ \deg(q)$, while
the entries of the matrix~$M_{q,d}(y)$ are the corresponding linear
combinations of variables. Note also that~$M_{1,d} = M_d$
and~$M_{1,d}(y) = M_d(y)$.  For every variable~$y_{\alpha}$, consider
the coefficients of~$y_{\alpha}$ in the matrix~$M_{q,d}(y)$. Those
coefficients form a matrix which we denote
by~$A_{q,d,\alpha}$. Formally, for~$|\alpha| \leq 2d + \deg(q)$, the
matrices~$A_{q,d,\alpha}$ are defined as the real matrices
satisfying~$M_{q,d}(y)= \sum_\alpha y_\alpha A_{q,d,\alpha}$ or
equivalently~$M_{q,d} = \sum_\alpha x^\alpha A_{q,d,\alpha}$. Finally,
for any polynomial~$q$, by~$d_q$ we denote the biggest integer
satisfying~$2d_q + \deg(q) \leq 2d$.

Let~$Q$ be a set of polynomials and
let~$q_0 = \sum_\alpha x^\alpha$ be a polynomial. For any
positive integer~$d$, the \emph{level}-$d$
\emph{Lasserre SDP relaxation} of the
polynomial optimization
problem~$\POP(q_0;Q)$ is
the pair of semidefinite programs~$(P_d, D_d)$, where~$P_d$ is the
\emph{primal} semidefinite program:
\begin{equation}
\begin{array}{llll}
\inf{_y} & & \textstyle{\sum_\alpha a_\alpha y_\alpha} \\
\text{s.t.} & & y_\emptyset = 1 \\
& & M_{q,d_q}(y) \succeq 0 & \text{ for } q \in Q
\end{array}
\end{equation}
and~$D_d$ is the \emph{dual} semidefinite program:
\begin{equation}
\begin{array}{llll}
 \sup{_{z,Z}} & & z \\
 \text{s.t.} & & \textstyle{\sum_{q \in Q} \Iprod{A_{q,d_q,\emptyset}}{Z_q} = a_\emptyset -z} \\
 & & \textstyle{\sum_{q \in Q} \Iprod{A_{q,d_q,\alpha}}{Z_q} = a_\alpha} & \text{ for } 
1 \leq |\alpha| \leq 2d \\
 & & Z_q \succeq 0 \text{ for } q \in Q
\end{array}
\end{equation}

Weak SDP duality implies that the optimal value of~$P_d$ is always
greater or equal than the optimal value of~$D_d$. The main theorem
in~\cite{Josz2016} establishes a condition which guarantees strong
duality for primal and dual SDP problems in the Lasserre hierarchy.

\begin{theorem}[\cite{Josz2016}]\label{thm:strongduality}
If~$\POP(q_0;Q)$ is a polynomial optimization problem where one of the
inequalities describing the feasibility region is~$R^2 - \sum_{i \in
  [n]} x_i^2 \geq 0$, then for every positive integer~$d$, the optimal
values of~$P_d$ and~$D_d$ are equal.
\end{theorem}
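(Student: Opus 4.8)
The plan is to derive this strong‑duality statement from the standard principle of conic optimisation that a semidefinite program whose feasible region is nonempty and bounded has zero duality gap with the dual optimum attained; the hypothesis that the ball inequality $R^2-\sum_{i\in[n]}x_i^2\ge 0$ is among the constraints is exactly what makes the feasible region of $P_d$ bounded. To begin, I would note that $(P_d,D_d)$ is literally a primal/dual pair of semidefinite programs: the primal unknowns are the pseudomoments $y=(y_\alpha)_{|\alpha|\le 2d}$, subject to the affine constraint $y_\emptyset=1$ and the requirement that the block matrix with blocks $M_{q,d_q}(y)$ $(q\in Q)$ be positive semidefinite (each block is affine in $y$ because $M_{q,d_q}=\sum_\alpha x^\alpha A_{q,d_q,\alpha}$); the dual unknowns are the Gram matrices $(Z_q)_{q\in Q}\succeq 0$ and the free scalar $z$, and the dual equalities assert exactly the polynomial identity $q_0-z=\sum_{q\in Q} q\,\sigma_q$, where $\sigma_q$ is the sum of squares with Gram matrix $Z_q$. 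Weak duality, $\mathrm{val}(P_d)\ge\mathrm{val}(D_d)$, is already recorded in the text, so it suffices to rule out a strict gap when $P_d$ is feasible, plus a short separate treatment of infeasibility.

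The crux is to show the feasible region of $P_d$, seen in the variables $y$, is bounded. Write $g:=R^2-\sum_{i\in[n]}x_i^2$, so that $g\in Q$. From the moment‑matrix constraint $M_d(y)\succeq 0$ with $y_\emptyset=1$ one gets, entrywise, $|y_{\alpha+\beta}|\le (y_{2\alpha}y_{2\beta})^{1/2}$ together with $y_{2\alpha}\ge 0$, so it is enough to bound the diagonal entries $y_{2\alpha}$. By the definition of the localising matrix, the $(\alpha,\alpha)$ entry of $M_{g,d_g}(y)$ is $R^2y_{2\alpha}-\sum_{i\in[n]}y_{2\alpha+2e_i}$, which is nonnegative because $M_{g,d_g}(y)\succeq 0$; since each $y_{2\alpha+2e_i}\ge 0$ this forces $y_{2\alpha+2e_i}\le R^2y_{2\alpha}$. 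Inducting on $|\alpha|$ from $y_\emptyset=1$ gives $y_{2\alpha}\le\max(1,R)^{2d}$ for every $|\alpha|\le d$, hence $|y_\gamma|\le\max(1,R)^{2d}$ for every $|\gamma|\le 2d$ by splitting $\gamma=\alpha+\beta$ with $|\alpha|,|\beta|\le d$. Thus the feasible set of $P_d$ is closed and bounded.

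With this in hand I would invoke the conic‑duality theorem: a semidefinite program in standard form with nonempty bounded feasible region has its primal and dual optimal values equal and the dual value attained. (Boundedness forces the recession cone $\{X\succeq 0:\mathcal{A}X=0\}$ to be $\{0\}$, hence the image of the positive semidefinite cone under the data map is closed at the right‑hand side, which is exactly what guarantees zero gap and dual attainment; alternatively one argues directly by extracting a convergent subsequence of near‑optimal primal points from the compact feasible set and passing to the limit.) Applied to $P_d$ this yields $\mathrm{val}(P_d)=\mathrm{val}(D_d)$ whenever $P_d$ is feasible. If instead $P_d$ is infeasible, then $\mathrm{val}(P_d)=+\infty$ by convention; here one checks that $D_d$ is unbounded above via a semidefinite theorem of the alternative, whose only delicate case — weak infeasibility — is excluded precisely because $g\in Q$ keeps the relevant cones closed, so the convention $\sup D_d=+\infty$ restores equality.

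The step I expect to be the main obstacle is the appeal to the sharp conic‑duality fact: in the intended applications Slater's condition \emph{fails} for $P_d$ (the Boolean axioms $x_i^2-x_i$ and $x_i-x_i^2$ make every feasible moment matrix singular), so a textbook Slater argument does not apply, and one must instead use the less standard ``bounded primal feasible set $\Rightarrow$ zero gap and dual attainment'' statement and then dispose of the infeasible case carefully, so that ``the optimal values are equal'' is literally true under the usual $\pm\infty$ conventions. By contrast, the translation between the moment/SOS description and the matrix‑level conic form, and the induction bounding the pseudomoments, are routine.
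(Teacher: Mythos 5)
This theorem is imported by the paper from \cite{Josz2016} and is not proved there, so there is no internal proof to compare against; what you have done is reconstruct a proof of the cited result. Your reconstruction follows essentially the route of Josz and Henrion themselves: use the ball constraint to show that the feasible set of $P_d$ is compact, and then invoke the conic-duality fact that a nonempty bounded primal feasible region forces a zero gap (this is Trnovska's theorem, which is exactly the tool used in \cite{Josz2016} in place of Slater's condition, for precisely the reason you identify). The boundedness induction is correct: the diagonal entries of the localising matrix $M_{g,d_g}(y)$ give $\sum_i y_{2\alpha+2e_i}\le R^2 y_{2\alpha}$ for $|\alpha|\le d-1$, and the $2\times 2$ minors of $M_d(y)$ transfer the diagonal bound to all entries. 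One caveat: this argument uses the moment-matrix constraint $M_d(y)\succeq 0$ (both for $y_{2\alpha}\ge 0$ and for the off-diagonal bound), which under the paper's definition of $P_d$ is present only when $1\in Q$; that holds in the paper's application (where $\bar Q\supseteq B_n\ni 1$) and in the standard formulation of the Lasserre relaxation, but you should state it as a hypothesis.

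The genuine gap is in the infeasible case. Your recession-cone computation does show that infeasibility of $P_d$ is \emph{strong} infeasibility (the homogeneous system $h_\emptyset=0$, $M_{q,d_q}(h)\succeq 0$ forces $h=0$ by the same induction, so approximate feasible points stay bounded and would converge to a feasible point), and the SDP theorem of the alternative then yields an improving ray for $D_d$: a block-PSD $Z^*$ with $\sum_q\Iprod{A_{q,d_q,\alpha}}{Z^*_q}=0$ for $\alpha\ne\emptyset$ and $\sum_q\Iprod{A_{q,d_q,\emptyset}}{Z^*_q}<0$. But an improving ray gives $\sup D_d=+\infty$ only if $D_d$ has at least one feasible point; if $D_d$ were infeasible its value would be $-\infty$ and equality would fail. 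So you still owe an argument that $D_d$ is feasible under the ball hypothesis (equivalently, that $q_0-z$ lies in the degree-$2d$ truncated quadratic module for some $z$). In the paper's application this is immediate ($q_0=0$, take $z=0$ and all $Z_q=0$), but for the theorem as stated it needs a separate (Putinar-type, fixed-degree) argument. The rest of the proposal, including the correct observation that Slater fails because the Boolean axioms force every feasible moment matrix to be singular, is sound.
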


\noindent Strong duality for primal and dual problems implies, in
particular, that~$P_d$ is infeasible if and only if~$D_d$ is unbounded
above and, analogously,~$P_d$ is unbounded below if and only if~$D_d$
is infeasible.

The polynomial optimization problem~$\POP(q_0;Q)$ is called
\emph{encircled} if a polynomial~$R^2 - \sum_{i \in [n]} x_i^2$ can be
obtained as a non-negative linear combination of polynomials from~$Q$
of degree at most~$2$. The following lemma implies strong duality for
primal and dual SDP problems in the Lasserre hierarchy for encircled
polynomial optimization problems.

\begin{lemma}\label{lem:linearcombination}
Let~$Q$ be a set of polynomials and let~$p = \sum_{q \in Q} c_q q$ be
a non-negative linear combination of polynomials from~$Q$, such
that~$\deg(p) = \max \{ \deg(q) : c_q > 0 \}$.  For some
polynomial~$q_0$, let~$(P_d, D_d)$ and~$(P'_d, D'_d)$ be the level-$d$
Lasserre SDP relaxations of~$\POP(q_0;Q)$ and~$\POP(q_0;Q \cup
\{p\})$, respectively. The optimal values of~$P_d$ and~$P'_d$, as well
as the optimal values of~$D_d$ and~$D'_d$ are equal.
\end{lemma}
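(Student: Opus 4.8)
The plan is to show that adding the redundant constraint $p \geq 0$ to the system $Q$ changes neither the primal nor the dual Lasserre SDP at order~$d$, by exhibiting explicit value-preserving maps between feasible solutions of the two relaxations. The key point is that $p = \sum_{q \in Q} c_q q$ with nonnegative coefficients and with $\deg p = \max\{\deg q : c_q > 0\}$, so the new SDP block $M_{p, d_p}(y) \succeq 0$ is, up to indexing, a nonnegative combination of the old blocks $M_{q, d_q}(y)$, restricted to monomials of the appropriate degree.

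First I would handle the primal side. Any feasible $y$ for $P_d'$ is trivially feasible for $P_d$, since $P_d$ has a subset of the constraints of $P_d'$; this already gives $\opt(P_d) \leq \opt(P_d')$ (infimum over a larger set is no larger). For the reverse inequality, let $y$ be feasible for $P_d$; I must check $M_{p, d_p}(y) \succeq 0$. Here the condition $\deg p = \max\{\deg q : c_q > 0\}$ is exactly what is needed: for each $q$ with $c_q > 0$ we have $d_q \geq d_p$, so the monomial index set of $M_{q, d_q}(y)$ contains that of $M_{p, d_p}(y)$, and the principal submatrix of $M_{q,d_q}(y)$ on the smaller index set equals $M_{q, d_p}(y)$. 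A principal submatrix of a PSD matrix is PSD, so $M_{q, d_p}(y) \succeq 0$ for each such $q$. Since $M_{p, d}(y) = \sum_{q} c_q M_{q, d}(y)$ at the level of formal matrices of linear forms (this follows by linearity from $p = \sum_q c_q q$ and the defining identity $M_{q,d} = \sum_\alpha x^\alpha A_{q,d,\alpha}$), and the same identity restricts to the index set of degree-$\leq d_p$ monomials, we get $M_{p, d_p}(y) = \sum_{q : c_q > 0} c_q M_{q, d_p}(y) \succeq 0$ as a nonnegative combination of PSD matrices. Hence $y$ is feasible for $P_d'$ with the same objective value, giving $\opt(P_d') \leq \opt(P_d)$, and the primal values coincide.

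For the dual side, $D_d'$ has one extra PSD variable block $Z_p \succeq 0$ together with the extra terms $\Iprod{A_{p, d_p, \alpha}}{Z_p}$ in the linear constraints. One direction is immediate: from a feasible $(z, Z)$ for $D_d$, set $Z_p = 0$ to get a feasible point of $D_d'$ with the same value $z$, so $\opt(D_d') \geq \opt(D_d)$. For the other direction, given feasible $(z, (Z_q)_{q \in Q}, Z_p)$ for $D_d'$, I would absorb the block $Z_p$ into the existing blocks. Using $A_{p, d_p, \alpha} = \sum_{q : c_q > 0} c_q A_{q, d_q, \alpha}$ — which is the $\alpha$-coefficient form of the matrix identity above, valid because each such $q$ has $d_q \geq d_p$ — each occurrence of $\Iprod{A_{p, d_p, \alpha}}{Z_p}$ rewrites as $\sum_{q : c_q > 0} c_q \Iprod{A_{q, d_q, \alpha}}{Z_p}$, where we regard $Z_p$ as padded by zeros to the index set of monomials of degree $\leq d_q$. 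Defining $Z_q' := Z_q + c_q \bar{Z}_p$ (with $\bar Z_p$ the zero-padded version of $Z_p$, and interpreting $c_q \bar Z_p = 0$ when $c_q = 0$) yields $Z_q' \succeq 0$, and substituting into the constraint equations of $D_d'$ exactly reproduces the constraint equations of $D_d$ with the same value $z$. So $(z, (Z_q')_{q \in Q})$ is feasible for $D_d$, giving $\opt(D_d) \geq \opt(D_d')$. Combining, the dual values coincide.

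The main obstacle I anticipate is purely bookkeeping: carefully matching the monomial index sets of the various matrices $M_{q, d_q}$ as $q$ ranges over $Q \cup \{p\}$, since $d_q$ depends on $\deg q$ and the matrices are indexed by monomials of degree at most $d_q$. The degree hypothesis $\deg p = \max\{\deg q : c_q > 0\}$ is precisely the hinge that makes the padding/restriction go through in the right direction, so I would state the inclusion of index sets as a small explicit sublemma before the two arguments above. Everything else is linearity of $q \mapsto M_{q,d}$ and $q \mapsto A_{q,d,\alpha}$ together with the elementary facts that principal submatrices and nonnegative combinations of PSD matrices are PSD.
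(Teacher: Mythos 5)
Your proposal is correct and follows essentially the same route as the paper: the primal side uses that $M_{p,d_p}(y)$ is a nonnegative combination of principal submatrices of the $M_{q,d_q}(y)$ (so the extra constraint is redundant), and the dual side sets $Z_p=0$ in one direction and absorbs $Z_p$ via $\tilde Z_q := Z_q + c_q Z_p$ in the other, using $A_{p,d_p,\alpha} = \sum_{q} c_q A_{q,d_q,\alpha}$. Your explicit remark about zero-padding $Z_p$ to the larger index sets is a minor bookkeeping point the paper leaves implicit, but the argument is the same.
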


\begin{proof}
Let~$q_0 = \sum_\alpha a_\alpha x^\alpha$ and let~$d$ be some positive
integer.

The primal~$P'_d$ is the following semidefinite program:
\begin{equation}
\begin{array}{llll}
 \inf{_y} & & \textstyle{\sum_\alpha a_\alpha y_{\alpha}} \\
 \text{s.t.} & & y_{\emptyset} = 1 \\
 & & M_{q,d_q}(y) \succeq 0 & \text{ for } q \in Q \\
 & & M_{p,d_p}(y) \succeq 0
\end{array}
\end{equation}

Let~$P = \{ q \in Q : c_q > 0 \}$. Note that since~$\deg(p) = \max \{
\deg(q) : q \in P \}$, for every~$q \in P$, we have~$d_p \leq d_q$. For
each~$q \in P$, by~$M'_{q,d_q}(y)$ let us denote the principal
submatrix of~$M_{q,d_q}(y)$ obtained by removing the rows and columns
indexed by monomials of degree greater than~$d_p$.  Observe
that~$M_{p,d_p}(y) = \sum_{q \in P} c_q M'_{q,d_q}(y)$.  Since the
constraints~$\{ M_{q,d_q}(y) \succeq 0 : q \in P \}$ imply the
constraint~$M_{p,d_p}(y) = \sum_{q \in P} c_q M'_{q,d_q}(y) \succeq
0$, the feasibility regions, and therefore also the optimal values,
of~$P_d$ and~$P'_d$ are the same.

The dual $D'_d$ is the following semidefinite program:
\begin{equation}
\begin{array}{llll}
 \sup{_{z,Z}} & & z \\
 \text{s.t.} & & \textstyle{\sum_{q \in Q} \Iprod{A_{q,d_q,\emptyset}}{Z_q} + \Iprod{A_{p,d_p,\emptyset}}{Z_p}= a_\emptyset-z} \\
 & & \textstyle{\sum_{q \in Q} \Iprod{A_{q,d_q,\alpha}}{Z_q} + \Iprod{A_{p,d_p,\alpha}}{Z_p}= a_\alpha} & \text{ for } 1 \leq |\alpha| \leq 2d \\
 & & Z_q \succeq 0 & \text{ for } q \in Q \\
 & & Z_p \succeq 0
\end{array}
\end{equation}

Any solution to the program~$D_d$ can be extended to a solution to the
program~$D'_d$ with the same optimal value by taking~$Z_p$ to be the
zero matrix. On the other hand, any
solution~$(z, \{Z_q\}_{q \in Q}, Z_p)$ to the program~$D'_d$ gives
rise to a solution~$(\tilde z, \{\tilde Z_q\}_{q \in Q})$ to the
program~$D_d$ with the same optimal value by setting~$\tilde z := z$
and~$\tilde Z_q := Z_q + c_q Z_p$ for each~$q \in P$,
and~$\tilde Z_q := Z_q$ for each~$q \in Q \setminus P$. This follows
from the fact
that~$A_{p,d_p,\alpha} = \sum_{q \in P} c_q A_{q,d_q,\alpha}$.
\end{proof}

\subsection{SOS proofs as semidefinite sets} \label{subsec:sos-as-sdps}

Fix a set of polynomials~$Q$ and a further
polynomial~$p = \sum_\alpha a_\alpha x^\alpha$ such
that~$\deg(p) \leq 2d$.  Our goal now is to describe
degree-$2d$~$\SOS$ proofs of the polynomial inequality~$p \geq 0$
from~$Q$ as points in a semidefinite set~$K_d(Q,p)$ that we are about
to define.
  Recall that a
degree-$2d$~$\SOS$ proof of~$p \geq 0$ from~$Q$ is
an indexed set of polynomials~$\{ s_q : q \in \bar{Q} \}$ that satisfy
 an identity
$\sum_{q \in \bar{Q}} q s_q = p$ where
$\bar{Q} = Q \cup B_n$ and for every $q \in \bar{Q}$, the
polynomial $s_q$ is a sum of squares of polynomials and has degree at
most~$2d_q$.  A polynomial~$s$ of degree at most~$2t$ is a sum of
squares if and only if there exists a positive semidefinite matrix~$Z$
indexed by monomials of degree at most~$t$ such
that~$s = \Iprod{M_t}{Z}$. Therefore, there exists a
degree-$2d$~$\SOS$ proof of the polynomial inequality~$p \geq 0$
from~$Q$ if and only if, for every~$q \in \bar{Q}$, there exists a
positive semidefinite matrix~$Z_q$ indexed by monomials of degree at
most~$d_q$ such that
\begin{equation}\label{eq:proof}
\sum_{q \in \bar{Q}} q \Iprod{M_{d_q}}{Z_q} = \sum_\alpha a_\alpha x^\alpha.
\end{equation}

Let us have a closer look at the expression~$\sum_{q \in \bar{Q}} q \Iprod{M_{d_q}}{Z_q}$ on the left-hand side of the
above identity. It can be rewritten in terms of the 
matrices introduced at the beginning of Subsection~\ref{subsec:lasserre}:
\begin{equation}\label{eq:matrices}
\begin{aligned}
\sum_{q \in \bar{Q}} q \Iprod{M_{d_q}}{Z_q} & = \sum_{q \in \bar{Q}} \Iprod{M_{q,d_q}}{Z_q}   =  \sum_{q \in \bar{Q}} \Iprod{\sum_\alpha x^\alpha A_{q,d_q,\alpha}}{Z_q} = \\ & =  \sum_{\alpha} x^\alpha \sum_{q \in \bar{Q}} \Iprod{A_{q,d_q,\alpha}}{Z_q}.
\end{aligned}
\end{equation}
Hence, there exists a
degree-$2d$~$\SOS$ proof of~$p \geq 0$
from~$Q$ if, and only if, 
there exists a set of positive semidefinite
matrices~$\{Z_q : q \in \bar{Q} \}$ such that 
for every~$q \in \bar{Q}$ the matrix~$Z_q$ is
indexed by monomials of degree at
most~$d_q$ and for all~$|\alpha| \leq 2d$ it holds
$\sum_{q \in \bar{Q}} \Iprod{A_{q,d_q,\alpha}}{Z_q}= a_\alpha$,
which, in turn, can be expressed as non-emptiness of the
semidefinite set~$K_d(Q,p) \subseteq \mathbb{R}^{I_d}$ given by:
\begin{equation}
 \sum_{q \in \bar{Q}} \Iprod{A_{q,d_q,\alpha}}{Z_q}= a_\alpha \; \text{ for }  |\alpha| \leq 2d \;\text{ and }\; X \succeq 0,
\end{equation}
where~$J_d = \{(q,x^\alpha) : q \in \bar{Q}, |\alpha| \leq d_q \}$ is a set of indices,~$X$ is a~$J_d \times J_d$ symmetric matrix of formal variables,~$I_d = \{ \{ ( q,x^\alpha ), ( q',x^{\alpha'} ) \} : (q,x^\alpha), (q',x^{\alpha'}) \in J_d  \}$ 
is a set of variable indices, and for every~$q \in \bar{Q}$, the matrix~$Z_q$ is the principal submatrix of~$X$ corresponding to the rows and columns indexed by~$\{(q,x^\alpha) : |\alpha| \leq d_q \}$.

Indeed, from every feasible point~$X \in K_d(Q,p)$ we get a set of
positive semidefinite matrices~$\{Z_q : q \in \bar{Q} \}$ satisfying
the identity~(\ref{eq:proof}) by setting~${Z}_q$ be the principal
submatrix of~${X}$ corresponding to the rows and columns indexed
by~$\{(q,x^\alpha) : |\alpha| \leq d_q \}$. On the other hand, any set
of positive semidefinite matrices~$\{Z_q : q \in \bar{Q} \}$
satisfying the identity~(\ref{eq:proof}) can be extended to a point
in~$K_d(Q,p)$ by setting all remaining variables to~$0$.

The representation of the semidefinite set~$K_d(Q,p)$ can be easily
obtained from the representation of the set of polynomials~$Q$ and the
polynomial~$p$ by means of~$\FPC$-interpretations:

\begin{fact}\label{fact:reduction}
For every fixed positive integer~$d$, there is
an~$\FPC$-interpretation that takes a set of polynomials~$Q$ and a
polynomial~$p$ as input and outputs a representation of the
semidefinite set~$K_d(Q,p)$.  Moreover, there exists a constant~$c$,
independent of~$d$, such that the formulas in the~$\FPC$
interpretation have at most~$cd$ variables.
\end{fact}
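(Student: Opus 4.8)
The plan is to unwind the definition of $K_d(Q,p)$ and exhibit, sort by sort and relation by relation, the defining formulas of the claimed $\FPC$-interpretation, tracking the number of variables as we go. Since $d$ is a fixed constant, the only genuinely unbounded object in the construction is the set $I$ of variable indices; everything else is built on top of it with uniformly bounded ``width'', so the only thing that could go wrong is the variable count.

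First I would set up the auxiliary sort of \emph{monomials of degree at most $e$}, for each $e \leq 2d$: a monomial $x^\alpha$ with $|\alpha| \leq e$ is coded by a $d$-tuple over $\bar I \disjointunion \{\star\}$, with $\star$ a padding symbol, two tuples being identified when they represent the same multiset; for fixed $d$ this equality is a first-order formula in $O(d)$ variables (e.g.\ $\bigvee_{\pi \in S_d}\bigwedge_{i}[a_i = b_{\pi(i)}]$), so it is a legitimate $\epsilon$-formula of an interpretation with quotients. Multi-index arithmetic --- forming $\beta+\gamma$, testing whether $\alpha-\beta-\gamma$ is a legal multi-index, reading off $|\alpha|$ --- is then first-order definable with $O(d)$ variables, and hence so are $\deg q$ for a polynomial $q$ and the integer $d_q$ determined by $2d_q + \deg q \leq 2d$. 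Using these I would define $\bar Q = Q \disjointunion B_n$ (the set $B_n$ of Boolean axioms being trivially definable from the variable set, with the convention that we discard any $q\in Q$ with $\deg q > 2d$, whose block is empty), then the index sort $\bar J_d = \{(q, x^\alpha) : q \in \bar Q,\ |\alpha| \leq d_q\}$, the constraint sort $\bar M = \{x^\alpha : |\alpha| \leq 2d\}$, the variable-index sort $\bar I_d$ of unordered pairs from $\bar J_d$, and the relation $P$ of type $\bar I_d \times \bar J_d \times \bar J_d$ linking a variable index to its two $\bar J_d$-coordinates.

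Next I would supply the numerical data. The single $\bar J_d\times\bar J_d$ constraint matrix attached to $\alpha$ is block-diagonal, its $(q,q)$ block being $A_{q,d_q,\alpha}$; writing $q = \sum_\delta c^{(q)}_\delta x^\delta$, one checks directly from $M_{q,d_q} = \sum_\alpha x^\alpha A_{q,d_q,\alpha}$ that $(A_{q,d_q,\alpha})_{\beta,\gamma}$ equals $c^{(q)}_{\alpha-\beta-\gamma}$ when $\alpha - \beta - \gamma$ is a legal multi-index and $0$ otherwise. So, viewed as a constraint on the full matrix $X$, the coefficient of $X_{\{(q,\beta),(q',\gamma)\}}$ is nonzero only when $q=q'$ and is then obtained by looking up a coefficient of $q$ at a multi-index computed by the bounded arithmetic above --- an $\FPC$-formula over the sorts $\bar M, \bar I_d, \bar I_d, \bar B$ with $O(d)$ variables, encoded through $P_{A,s},P_{A,n},P_{A,d}$; similarly the right-hand side $a_\alpha$ is just the coefficient of $x^\alpha$ in $p$, encoded through $P_{b,s},P_{b,n},P_{b,d}$. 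The bit-position sort $\bar B$ can be taken to be that of the input (padded by a constant if convenient), since the coefficients produced are those of polynomials in $Q$, of $p$, or the constants $0,\pm 1$ coming from $B_n$. Correctness --- that the resulting $L_{\SDP}$-structure represents exactly $K_d(Q,p)$ --- is precisely the equivalence between identity~(\ref{eq:proof}) and non-emptiness of $K_d(Q,p)$ established just above the statement, together with the elementary passage between a full positive semidefinite $X$ and the family $\{Z_q : q \in \bar Q\}$ of its diagonal blocks.

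I expect no deep obstacle here; the work is entirely bookkeeping. The one point requiring genuine care is the variable count: every ingredient above --- the monomial quotient, multi-index arithmetic, the block-diagonal constraint matrices --- must be carried out with a number of variables growing only linearly in $d$ and, crucially, independent of the input, so that composing all the pieces still yields an $\FPC$-interpretation whose formulas use at most $cd$ variables for an absolute constant $c$. Keeping all encodings ``flat'' (monomials as $d$-tuples rather than nested objects, variable indices as pairs rather than longer tuples) is exactly what makes this go through.
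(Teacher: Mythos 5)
The paper states this Fact without proof, treating it as routine, and your construction is exactly the intended one: it builds the sorts $J_d$, $I_d$ and the coefficient matrices $A_{q,d_q,\alpha}$ set up just before the statement, identifies $(A_{q,d_q,\alpha})_{\beta,\gamma}$ correctly as the coefficient of $q$ at the multi-index $\alpha-\beta-\gamma$, and rightly isolates the linear-in-$d$ variable count of the flat tuple encodings as the only point needing care. The one quibble is that the constraint sort ranges over monomials of degree up to $2d$, so the padded tuples there must have length $2d$ rather than $d$; this of course leaves the $O(d)$ bound intact.
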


Therefore, as a consequence of Theorem~\ref{thm:feasibility} we obtain
Corollary~\ref{cor:decidingproofs}.

\begin{proof}[Proof of Corollary~\ref{cor:decidingproofs}]
Let us fix a positive integer~$d$ and let~$\Phi$ be the
FPC-interpretation from Fact~\ref{fact:reduction}. We compose~$\Phi$
with the~$\Cinf{\omega}$-sentence from Theorem~\ref{thm:feasibility}
that decides the exact feasibility of semidefinite sets.  The
resulting sentence~$\psi$ decides the existence of an~$\SOS$ proof of
degree~$2d$.  It is a sentence of~$\Cinf{k}$, where~$k = cd$, for an
integer~$c$ that is independent of~$d$. A~$\Cinf{\omega}$-sentence
deciding the existence of an~$\SOS$ refutation of degree~$2d$ is
obtained analogously by starting with an~$\FPC$-interpretation which
takes as input a set of polynomials~$Q$ and outputs the semidefinite
set~$K_d(Q,-1)$.
\end{proof}

\subsection{SOS refutations} \label{subsec:approx}

We will now relate the existence of~$\SOS$ refutations of a set of
polynomials~$Q$ to the primal and dual problems in the Lasserre
hierarchy for the polynomial optimization problem~$\POP(0;\bar{Q})$.
Then we will introduce the concept of~$\epsilon$-approximate SOS refutation
and use the primal-dual correspondence to show that,
for small enough~$\epsilon > 0$, the existence of SOS refutations is
equivalent to the existence of~$\epsilon$-approximate ones. It will
follow from this that the problem of deciding the existence of~$\SOS$
refutations of a fixed degree reduces, by means
of~$\FPC$-interpretations, to the weak feasibility problem for
semidefinite sets.

For any set of polynomials~$Q$, the polynomial optimization
problem~$\POP(0;\bar{Q})$,
characterizing the existence of~$0/1$-solutions
to the system of polynomial inequalities~$\{ q \geq 0 : q \in Q\}$, 
will be denoted by~$\Sol(Q)$:
\begin{equation}
(\Sol(Q)) \;\;:\;\; \inf{_x} \;0\; \text{ s.t. } q(x) \geq 0 \;\text{ for }  q \in \bar{Q}.
\end{equation}
Indeed, the optimization problem~$\Sol(Q)$ is feasible if and only if
the system of polynomial inequalities~$\{ q \geq 0 : q \in Q\}$ has
a~$0/1$-solution if and only if
the optimal value of~$\Sol(Q)$ is~$0$.
Otherwise, the optimal value of~$\Sol(Q)$
is~$+\infty$. 
Although we care only about the feasibility of~$\Sol(Q)$,
we define it as an optimization problem, since
we want to analyze its Lasserre
SDP relaxations.

For a positive integer~$d$, by~$(P_d(Q), D_d(Q))$ we denote the
level-$d$ Lasserre
SDP relaxation of the polynomial optimization
problem~$\Sol(Q)$, i.e.,~$P_d(Q)$ is the semidefinite program:
\begin{equation}
\begin{array}{llll}
 \inf{_y} & & 0 \\
\text{s.t.} & & y_{\emptyset} = 1 \\
& & M_{q,d_q}(y) \succeq 0 & \text{ for } q \in \bar{Q}
\end{array}
\end{equation}
and~$D_d(Q)$ is the semidefinite program:
\begin{equation}
\begin{array}{llll}
 \sup{_{z,Z}} & & z \\
\text{s.t.} & & \textstyle{\sum_{q \in \bar{Q}} \Iprod{A_{q,d_q,\emptyset}}{Z_q} = -z} \\
& & \textstyle{\sum_{q \in \bar{Q}} \Iprod{A_{q,d_q,\alpha}}{Z_q} = 0} & \text{ for } 1 \leq |\alpha| \leq 2d \\
& & Z_q \succeq 0 & \text{ for } q \in \bar{Q} 
\end{array}
\end{equation}

Observe that degree-$2d$~$\SOS$ refutations of~$Q$ correspond
precisely to the feasible solutions to~$D_d(Q)$ with value~$1$
(see~(\ref{eq:proof}) and~(\ref{eq:matrices})).
The following lemma summarizes the
relationship between degree-$2d$~$\SOS$ refutations of~$Q$ and
solutions to the program~$D_d(Q)$. The second equivalence follows from
the fact that by multiplying a solution to~$D_d(Q)$ with value~$v$ by
any~$c \geq 0$ we obtain another solution with value~$cv$.

\begin{lemma}\label{lem:refutation}
There exists an~$\SOS$ refutation of~$Q$ of degree~$2d$ if and only
if~$D_d(Q)$ has a solution with value~$1$ if and only if the optimal
value of~$D_d(Q)$ is~$+\infty$.
\end{lemma}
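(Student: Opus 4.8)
The plan is to prove the two stated equivalences separately. The first one is, in essence, a re-reading of identity~(\ref{eq:proof}) in the special case $p=-1$, combined with the Gram-matrix characterization of sum-of-squares polynomials recalled above; the second one follows from the elementary observation that the feasible region of $D_d(Q)$ is a convex cone.

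For the first equivalence I would start from the definition: a degree-$2d$ $\SOS$ refutation of $Q$ is an identity $\sum_{q \in \bar{Q}} q s_q = -1$ in which each $s_q$ is a sum of squares and $\deg(q s_q) \leq 2d$. Since a nonzero sum of squares has even degree, the bound $\deg(q s_q) \leq 2d$ is equivalent to $\deg s_q \leq 2 d_q$, because $2 d_q$ is by definition the largest even integer with $2 d_q + \deg q \leq 2d$. By the Gram-matrix characterization recalled above, such an $s_q$ exists if and only if $s_q = \Iprod{M_{d_q}}{Z_q}$ for some positive semidefinite matrix $Z_q$ indexed by monomials of degree at most $d_q$. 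Substituting this into the identity and using $q M_{d_q} = M_{q,d_q} = \sum_\alpha x^\alpha A_{q,d_q,\alpha}$, exactly as in the derivation of~(\ref{eq:proof}) with $a_\emptyset = -1$ and $a_\alpha = 0$ for $|\alpha| \geq 1$, and then comparing the coefficient of each monomial $x^\alpha$, the identity becomes $\sum_{q \in \bar{Q}} \Iprod{A_{q,d_q,\emptyset}}{Z_q} = -1$ together with $\sum_{q \in \bar{Q}} \Iprod{A_{q,d_q,\alpha}}{Z_q} = 0$ for $1 \leq |\alpha| \leq 2d$ (the entries of $M_{q,d_q}$ have degree at most $2d_q+\deg q\leq 2d$, so only monomials in this range occur). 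Comparing with the constraints of $D_d(Q)$, whose first constraint reads $\sum_{q \in \bar{Q}} \Iprod{A_{q,d_q,\emptyset}}{Z_q} = -z$, this says precisely that $(z,\{Z_q\}_{q \in \bar{Q}})$ with $z = 1$ is feasible for $D_d(Q)$. Reading the same chain of equalities backwards turns any feasible solution of $D_d(Q)$ of value $1$ into a degree-$2d$ $\SOS$ refutation of $Q$.

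For the second equivalence, I would first note that $D_d(Q)$ is always feasible: taking $Z_q = 0$ for all $q \in \bar{Q}$ forces $z = 0$ and satisfies all equality constraints, so the optimal value is at least $0$. Next I would observe that the feasible region is a convex cone: if $(z,\{Z_q\})$ is feasible and $\lambda \geq 0$, then $(\lambda z,\{\lambda Z_q\})$ is also feasible, since scaling preserves positive semidefiniteness and multiplies both sides of each constraint by $\lambda$ — this is the remark already flagged in the statement. Consequently, if $D_d(Q)$ has a solution of value $1$, scaling it by every $\lambda \geq 1$ gives solutions of value $\lambda$, so the supremum is $+\infty$; conversely, if the optimal value is $+\infty$, then some feasible solution has value $v > 0$, and scaling it by $1/v > 0$ yields a feasible solution of value $1$. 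This closes the cycle.

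The proof is essentially bookkeeping, so I do not anticipate a genuine obstacle. The one point that requires care is the degree matching in the first equivalence: checking that the degree restriction $\deg(q s_q) \leq 2d$ on the refutation corresponds exactly to matrices $Z_q$ indexed by monomials of degree at most $d_q$, which relies on sums of squares having even degree and on the precise definition of $d_q$. A second, minor point is to keep in mind that $D_d(Q)$ is feasible, so that a feasible solution of positive value is actually available whenever the optimal value is $+\infty$.
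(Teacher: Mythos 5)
Your proposal is correct and follows essentially the same route as the paper, which only sketches this lemma: the first equivalence is the coefficient-matching/Gram-matrix correspondence of identity~(\ref{eq:proof}) specialized to $p=-1$, and the second is the paper's observation that scaling a feasible solution of value $v$ by $p\geq 0$ yields one of value $pv$. Your write-up merely supplies more detail (the even-degree/$d_q$ bookkeeping and the implicit grouping of the summands of~(\ref{eqn:sos}) by the polynomial $p_j\in\bar{Q}$ they multiply), all of which is accurate.
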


For a system of polynomials~$Q$, a \emph{pseudoexpectation for~$Q$ of
  degree~$2d$} is a linear mapping~$F$ from the set of polynomials of
degree at most~$2d$ over the set of variables~$x_1, \ldots, x_n$ to
the reals such that~$F(1) = 1$, and for every~$q \in \bar{Q}$ and
every sum of squares polynomial~$s$ of degree at most~$2d_q$, we
have~$F(qs) \geq 0$.

A linear mapping from the set of polynomials of degree at most~$2d$ to
the reals is uniquely defined by its restriction to
monomials. Therefore, there is a natural one-to-one correspondence
between linear functions from the set of polynomials of degree at
most~$2d$ to the reals and assignments to the set of variables~$\{
y_{\alpha} : |\alpha| \leq 2d \}$ of the program~$P_d(Q)$, given
by~$G(y_\alpha) = F(x^\alpha)$. We recall 
the known fact that an assignment~$G$ to the
variables of~$P_d(Q)$ is a feasible solution if and only if~$F$ is a
pseudoexpectation of degree~$2d$.

\begin{lemma}\label{lem:pseudoexpectation}
There exists a degree-$2d$ pseudoexpectation for~$Q$ if and only if
the program~$P_d(Q)$ is feasible.
\end{lemma}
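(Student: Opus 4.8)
The plan is to verify directly that the bijective correspondence $F \leftrightarrow y$ given by $y_\alpha = F(x^\alpha)$, already recorded just before the statement, identifies the degree-$2d$ pseudoexpectations for~$Q$ with the feasible solutions of~$P_d(Q)$. So I would fix a linear functional~$F$ on the space of polynomials of degree at most~$2d$ and let~$y$ be the corresponding assignment to the variables $\{y_\alpha : |\alpha| \leq 2d\}$. The constraint $y_\emptyset = 1$ of~$P_d(Q)$ is literally $F(1)=1$, so the only thing left to check is that the semidefiniteness constraints $M_{q,d_q}(y) \succeq 0$, for $q \in \bar{Q}$, correspond exactly to the conditions $F(qs) \geq 0$ for all sums of squares~$s$ of degree at most~$2d_q$. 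Note that all the entries $y_\mu$ occurring in $M_{q,d_q}(y)$ have $|\mu| \leq 2d_q + \deg q \leq 2d$ by the choice of $d_q$, so they are genuinely among the variables of $P_d(Q)$.

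The key step is a computation. I would fix $q = \sum_\gamma c_\gamma x^\gamma$ in~$\bar{Q}$ and an arbitrary polynomial $p = \sum_\alpha v_\alpha x^\alpha$ of degree at most~$d_q$, with coefficient vector~$v$ indexed by the monomials of degree at most~$d_q$. Unwinding the definition $(M_{q,d_q}(y))_{\alpha,\beta} = \sum_\gamma c_\gamma y_{\alpha+\beta+\gamma}$, substituting $y_\mu = F(x^\mu)$, and using linearity of~$F$, one gets
\[
v^{T} M_{q,d_q}(y)\, v \;=\; \sum_{\alpha,\beta} v_\alpha v_\beta \sum_\gamma c_\gamma\, F(x^{\alpha+\beta+\gamma}) \;=\; F\!\Big(q\cdot\big(\textstyle\sum_\alpha v_\alpha x^\alpha\big)^{2}\Big) \;=\; F(q\,p^{2}),
\]
where the middle equality uses that $\sum_\gamma c_\gamma x^\gamma = q$ and $\sum_{\alpha,\beta} v_\alpha v_\beta x^{\alpha+\beta} = p^2$. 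Therefore $M_{q,d_q}(y) \succeq 0$ holds if and only if $F(q p^2) \geq 0$ for every polynomial~$p$ of degree at most~$d_q$.

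To finish, I would pass between \textit{``$F(q p^2) \geq 0$ for all~$p$ of degree at most~$d_q$''} and \textit{``$F(qs) \geq 0$ for all sums of squares~$s$ of degree at most~$2d_q$''}. One direction is immediate, since $p^2$ is itself such a sum of squares. For the converse, I would write $s = \sum_j p_j^2$ and invoke the elementary fact that $\deg s \leq 2d_q$ forces $\deg p_j \leq d_q$ for every~$j$: if $D = \max_j \deg p_j$ exceeded~$d_q$, the degree-$2D$ homogeneous part of~$s$ would equal $\sum_j h_j^2$ with $h_j$ the top homogeneous part of~$p_j$, and this would be a polynomial vanishing at every real point, hence identically zero, contradicting the choice of~$D$; then $F(qs) = \sum_j F(q p_j^2) \geq 0$ by linearity. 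Combining this equivalence with the displayed computation and with $F(1) = y_\emptyset$ yields both implications of the lemma. I expect the only slightly delicate point to be this last degree-bookkeeping argument for sum-of-squares decompositions; everything else is a mechanical unwinding of the definitions, which is why the statement was announced as easy to see.
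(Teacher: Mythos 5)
Your proof is correct and follows the same overall plan as the paper's: identify $F$ with the assignment $y_\alpha = F(x^\alpha)$ and check, constraint by constraint, that $M_{q,d_q}(y) \succeq 0$ is equivalent to $F(qs) \geq 0$ for all sums of squares $s$ of degree at most $2d_q$. The only divergence is in how positive semidefiniteness is tested. The paper tests against positive semidefinite matrices: it computes $\Iprod{M_{q,d_q}(y)}{Z} = F(q\,\Iprod{M_{d_q}}{Z})$ and then invokes two facts at once, the self-duality of the PSD cone and the Gram-matrix characterization stated earlier in the section (a polynomial of degree at most $2t$ is a sum of squares iff it equals $\Iprod{M_t}{Z}$ for some PSD $Z$ indexed by monomials of degree at most $t$). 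You instead test against vectors, computing $v^T M_{q,d_q}(y)\,v = F(qp^2)$, which is more elementary but leaves you the bookkeeping task of showing that every sum-of-squares polynomial of degree at most $2d_q$ admits a decomposition into squares of polynomials of degree at most $d_q$; your top-homogeneous-part cancellation argument for this is correct and is exactly the content that the paper's Gram-matrix fact silently absorbs. Net effect: same lemma, same structure, with your version making explicit a degree bound the paper treats as known.
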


\begin{proof}
Let~$F$ be a linear function from the set of polynomials of degree at
most~$2d$ to the reals and let~$G$ be the corresponding assignment to
the variables of~$P_d(Q)$.  The statement of the lemma follows by
showing that for every~$q \in \bar{Q}$, the matrix~$M_{q,d_q}(G(y))$
is positive semidefinite if and only if for every sum of squares
polynomial~$s$ of degree at most~$2d_q$, we have~$F(qs) \geq 0$.

Let us take some~$q \in \bar{Q}$. Observe that for every matrix~$Z$
indexed by monomials of degree at most~$d_q$, we have
\begin{equation}
\Iprod{M_{q,d_q}(G(y))}{Z} =\Iprod{F(M_{q,d_q})}{Z})= F(
\Iprod{qM_{d_q}}{Z}) = F(q \Iprod{M_{d_q}}{Z}).
\end{equation}
The
matrix~$M_{q,d_q}(G(y))$ is positive semidefinite if and only if for
every positive semidefinite matrix~$Z$ indexed by monomials of degree
at most~$d_q$, it holds that $\Iprod{M_{q,d_q}(G(y))}{Z} = F(q
\Iprod{M_{d_q}}{Z}) \geq 0$ if and only if~$F(qs) \geq 0$ for every
sum of squares polynomial~$s$ of degree at most~$2d_q$. The last
equivalence follows from the fact that a polynomial~$s$ of degree at
most~$2t$ is a sum of squares if and only if there exists a positive
semidefinite matrix~$Z$ indexed by monomials of degree at most~$t$
such that~$s = \Iprod{M_t}{Z}$.
\end{proof}

Note that by summing the
inequalities~$1-x_1 \geq 0, \ldots, 1-x_n \geq 0$, together with the
inequalities~$x_1 - x_1^2 \geq 0, \ldots, x_n - x_n^2 \geq 0$, we get
the inequality~$n - \sum_{i \in [n]} x^2 \geq 0$, which witnesses the
fact that the problem~$\Sol(Q)$ is encircled. By
Lemma~\ref{lem:linearcombination} and Theorem~\ref{thm:strongduality}
it follows that for the problem~$\Sol(Q)$ there is no duality gap
between primal and dual SDP problems in the Lasserre hierarchy.  In
particular, the optimal value of~$D_d(Q)$ is~$+\infty$ if and only
if~$P_d(Q)$ is infeasible.  Now, recall from
Lemma~\ref{lem:refutation} that the optimal value of~$D_d(Q)$
is~$+\infty$ if and only if there exists an SOS refutation of~$Q$ of
degree~$2d$, and from Lemma~\ref{lem:pseudoexpectation} that the
program~$P_d(Q)$ is infeasible if and only if there is no
pseudoexpectation for~$Q$ of degree~$2d$. Hence, we obtain the
following:

\begin{corollary}\label{cor:strongduality}
There exists an $\SOS$ refutation of~$Q$ of degree~$2d$ if and only if
there is no pseudoexpectation for~$Q$ of degree~$2d$.
\end{corollary}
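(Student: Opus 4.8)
The plan is to obtain the statement as a chain of equivalences that glues together the three lemmas just proved with strong SDP duality for the Lasserre relaxation of $\Sol(Q)$. Concretely, I would argue: there is a degree-$2d$ $\SOS$ refutation of $Q$ \;iff\; the optimal value of $D_d(Q)$ is $+\infty$ (this is exactly Lemma~\ref{lem:refutation}); \;iff\; the primal program $P_d(Q)$ is infeasible (this is the strong-duality step, discussed next); \;iff\; there is no degree-$2d$ pseudoexpectation for $Q$ (this is the contrapositive of Lemma~\ref{lem:pseudoexpectation}). Stringing the three equivalences together yields the corollary.

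The step that needs actual work is the middle one, ``$\mathrm{opt}(D_d(Q)) = +\infty$ if and only if $P_d(Q)$ is infeasible.'' One direction is just weak duality: since the objective of $P_d(Q)$ is the constant~$0$, if $P_d(Q)$ is feasible then $\mathrm{opt}(P_d(Q)) = 0$, and weak SDP duality forces $\mathrm{opt}(D_d(Q)) \leq 0 < +\infty$. For the converse I would invoke strong duality, which fails for general SDPs but holds here because $\Sol(Q) = \POP(0;\bar{Q})$ is \emph{encircled}: summing the Boolean axioms $1 - x_i \geq 0$ and $x_i - x_i^2 \geq 0$ over $i \in [n]$ produces $n - \sum_{i \in [n]} x_i^2 \geq 0$, a positive linear combination of polynomials of $\bar{Q}$ whose degree ($2$) equals the maximum degree of the polynomials used. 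Hence Lemma~\ref{lem:linearcombination} applies with $Q := \bar{Q}$, $q_0 := 0$, $p := n - \sum_i x_i^2$, letting me adjoin this polynomial to $\bar{Q}$ without changing the primal or dual optimal value; on the enlarged system the theorem of~\cite{Josz2016} gives $\mathrm{opt}(P_d(Q)) = \mathrm{opt}(D_d(Q))$. In particular, if $P_d(Q)$ is infeasible its optimal value is $+\infty$ by convention, so $\mathrm{opt}(D_d(Q)) = +\infty$ as well, which completes the middle equivalence and hence the corollary.

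I do not expect a serious obstacle, since the corollary is essentially an assembly of Lemmas~\ref{lem:refutation}, \ref{lem:pseudoexpectation} and~\ref{lem:linearcombination} with the cited strong-duality theorem; the one point deserving care in the write-up is verifying the degree side-condition of Lemma~\ref{lem:linearcombination}, namely that for the combination $p = \sum_i (1-x_i) + \sum_i (x_i - x_i^2) = n - \sum_i x_i^2$ one has $\deg p = \max\{\deg q : c_q > 0\}$ --- here the summands $1 - x_i$ have degree~$1$ and the summands $x_i - x_i^2$ have degree~$2$, so $\deg p = 2$ matches the maximum, as required. It is worth stressing at this point that this is precisely where allowing the Boolean axioms ``for free'' in the definition of $\SOS$ is used: without them $\Sol(Q)$ need not be encircled, the duality gap could be nonzero, and the clean equivalence between refutations and non-existence of pseudoexpectations would break down.
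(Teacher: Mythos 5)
Your proposal is correct and follows essentially the same route as the paper: the paper also observes that summing the Boolean axioms $1-x_i \geq 0$ and $x_i - x_i^2 \geq 0$ shows $\Sol(Q)$ is encircled, invokes Lemma~\ref{lem:linearcombination} together with the strong-duality theorem to conclude that $D_d(Q)$ has optimal value $+\infty$ exactly when $P_d(Q)$ is infeasible, and then chains this with Lemmas~\ref{lem:refutation} and~\ref{lem:pseudoexpectation}. Your explicit check of the degree side-condition in Lemma~\ref{lem:linearcombination} is a welcome detail the paper leaves implicit.
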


For any~$\epsilon > 0$, an~$\epsilon$-\emph{approximate}
degree-$2d$~$\SOS$ refutation of a set of polynomials~$Q$ is
an indexed set of polynomials~$\{ s_q : q \in \bar{Q} \}$ that satisfy
an identity
\begin{equation}
\sum_{q \in \bar{Q}}qs_q = \sum_{\alpha}a_\alpha x^\alpha,
\end{equation}
where for every~$q \in \bar{Q}$, the polynomial~$s_q$ is a sum of
squares, for each~$x^\alpha$ of degree at least~$1$ we
have~$|a_\alpha| \leq \epsilon$,
and~$|1 + a_{\emptyset}| \leq \epsilon$.  In the same way as the
degree-$2d$~$\SOS$ refutations correspond to the points in the
semidefinite set~$K_d(Q,-1)$, the~$\epsilon$-approximate
degree-$2d$~$\SOS$ refutations correspond to the points in
the~$\epsilon$-relaxation of~$K_d(Q,-1)$.

In what follows, suppose that~$Q$ has no degree-$2d$~$\SOS$
refutation. By Corollary~\ref{cor:strongduality}, there exists a
degree-$2d$ pseudoexpectation. This in turn, as we will show now,
precludes even the existence of~$\epsilon$-approximate refutations,
for small enough~$\epsilon$. The key is the following lemma, which
says that in the presence of Boolean axioms the absolute values of a
pseudoexpectation on the set of monomials are bounded by~$1$.

\begin{lemma}
If~$F$ is a degree-$2d$ pseudoexpectation for~$Q$, then~$0 \leq F(m)
\leq 1$ for every monomial~$m$ of degree at most~$d$, and~$-1 \leq
F(m) \leq 1$ for every monomial~$m$ of degree at most~$2d$.
\end{lemma}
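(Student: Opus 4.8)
The statement says a degree-$2d$ pseudoexpectation $F$ for $Q$ satisfies $0 \le F(m) \le 1$ for every monomial $m$ of degree at most $d$, and $-1 \le F(m) \le 1$ for every monomial of degree at most $2d$. The plan is to exploit the Boolean axioms $x_i^2 - x_i$ and $x_i - x_i^2$, which sit inside $\bar{Q}$, together with the nonnegativity condition $F(qs) \ge 0$ for sums of squares $s$, to pin down the values $F$ takes.

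The first key step is to handle squarefree monomials (products of distinct variables). For a monomial $m = \prod_{i \in S} x_i$ with $|S| \le d$, I would argue $F(m) \ge 0$ by writing $m$ (or $m$ times a suitable square) as a sum of squares modulo the ideal generated by the $x_i^2 - x_i$: since every variable is idempotent under the axioms, $m = m^2$ modulo these axioms, so $F(m) = F(m \cdot 1)$ where $m$, reduced, is a square $\big(\prod_{i \in S} x_i\big)^2$ up to using the axiom polynomials $q = x_i^2 - x_i$ with appropriate square multipliers. More carefully: $1$ is a sum of squares, and $F(1) = 1$; and $F\big((x_i^2 - x_i) s\big) \ge 0$ as well as $F\big((x_i - x_i^2) s\big) \ge 0$ together give $F\big((x_i^2 - x_i)s\big) = 0$ for every square $s$ of low enough degree, i.e.\ $F$ is ``idempotent'' on each variable: $F(x_i^2 t) = F(x_i t)$ whenever both sides have degree $\le 2d$. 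Iterating this reduction lets me replace any monomial by its squarefree part while keeping $F$-values equal, reducing everything to the squarefree case. For a squarefree $m$ of degree $\le d$, I then write $m = \big(\prod_{i \in S} x_i\big) \cdot \big(\prod_{i \in S} x_i\big)$ modulo the idempotency relations, so $F(m) = F(s)$ for the square $s = \big(\prod_{i\in S} x_i\big)^2$ of degree $2|S| \le 2d$, hence $F(m) \ge 0$ using $q = 1 \in \bar{Q}$; and similarly $F\big(\prod_{i \in S}(1 - x_i)\big) \ge 0$ by the same argument applied to the axioms $1 - x_i$, which after expansion and idempotent reduction gives $1 - F(m) \ge 0$ for squarefree $m$, i.e.\ $F(m) \le 1$. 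Combining with the idempotent-reduction step, $0 \le F(m) \le 1$ for arbitrary monomials of degree $\le d$.

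For monomials of degree between $d+1$ and $2d$, the square trick is unavailable (we can't write $m$ as a square of a degree-$\le d$ polynomial), so the bound degrades to $|F(m)| \le 1$. Here I would split $m = m_1 m_2$ as a product of two monomials each of degree $\le d$ and apply the positive-semidefiniteness of the moment matrix $M_d(G(y)) \succeq 0$ (equivalently $F(p^2) \ge 0$ for $\deg p \le d$): taking $p = m_1 \pm m_2$ gives $F(m_1^2) + F(m_2^2) \pm 2F(m_1 m_2) \ge 0$, and since $F(m_i^2) = F(m_i^{\mathrm{sqfree}}) \le 1$ by the first part (after idempotent reduction), we get $|F(m_1 m_2)| \le \tfrac{1}{2}(F(m_1^2) + F(m_2^2)) \le 1$. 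The lower bound $-1 \le F(m)$ then also follows, and the upper bound $F(m) \le 1$ comes out of the same Cauchy–Schwarz-type inequality.

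The main obstacle I anticipate is making the ``idempotent reduction'' rigorous: one must check that reducing a monomial to its squarefree part via repeated use of $F(x_i^2 t) = F(x_i t)$ never forces an intermediate polynomial to exceed degree $2d$. Since reducing $x_i^2$ to $x_i$ strictly decreases degree, and we only ever need the identity $F\big((x_i^2 - x_i)s\big) = 0$ for squares $s$ of degree $\le 2d_q$ where $q = x_i^2 - x_i$ has $\deg q = 2$ so $d_q = d - 1$ — one has to be slightly careful that the relevant multiplier stays a square of degree $\le 2(d-1)$, which works out because at each reduction step the ``context'' monomial has degree strictly less than the monomial being reduced. Handling this bookkeeping cleanly, rather than the geometry, is where the real work lies.
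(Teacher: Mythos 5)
Your plan follows the same three-stage architecture as the paper's proof --- multilinearization invariance of $F$, then $0\le F(m)\le 1$ in degree $\le d$ via $F(m)=F(m^2)$ and the $1-x_i$ axioms, then the $F((m_1\pm m_2)^2)\ge 0$ trick for degrees up to $2d$ --- and the third stage is correct as you state it. But each of the first two stages contains a step that fails as written.

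First, the idempotent reduction. From $F((x^2-x)s)\ge 0$ and $F((x-x^2)s)\ge 0$ you correctly conclude $F((x^2-x)s)=0$ for every sum of squares $s$ of degree at most $2(d-1)$, but the identity you actually need, $F(x^2t)=F(xt)$, has an arbitrary monomial $t$ as multiplier, and a monomial is not a sum of squares. Your closing paragraph asserts that ``the relevant multiplier stays a square,'' which is false: reducing $x_1^3x_2^2$ to $x_1^2x_2^2$ already requires $t=x_1x_2^2$. The missing idea (and the one nontrivial trick in this stage, used by the paper) is to factor $t=rs$ with $\deg r,\deg s\le d-1$ and write $t=p^2-q^2$ with $p=(r+s)/2$, $q=(r-s)/2$; then $F((x^2-x)t)=F((x^2-x)p^2)+F((x-x^2)q^2)\ge 0$, and the symmetric computation gives $\le 0$. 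Second, the upper bound in degree $\le d$: you claim that $F\bigl(\prod_{i\in S}(1-x_i)\bigr)\ge 0$, ``after expansion,'' yields $1-F(m)\ge 0$. It does not: for $|S|\ge 2$ the expansion is the alternating sum $\sum_{T\subseteq S}(-1)^{|T|}F\bigl(\prod_{i\in T}x_i\bigr)$ --- for $|S|=2$ it reads $1-F(x_1)-F(x_2)+F(x_1x_2)\ge 0$ --- which gives no upper bound on $F\bigl(\prod_{i\in S}x_i\bigr)$. The correct route is the telescoping induction the paper uses: for $m$ of degree $\le d-1$ one has $F(m)-F(xm)=F((1-x)m)=F((1-x)m^2)\ge 0$, so $F$ is monotone under dropping a variable and hence bounded above by $F(1)=1$.
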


\begin{proof}
  Consider a monomial~$m$ written as a product of powers of distinct
  variables.  The \emph{multilinearization}~$\bar{m}$ of~$m$ is the
  monomial obtained from~$m$ by replacing each~$x^c$ with~$c \geq 2$
  that appears in this product by~$x$. For instance, the
  multilinearization of~$x^2y^3z$ is the monomial~$xyz$.

First we show that if~$m$ is a monomial of degree at most~$2d$,
then~$F(\bar{m}) =
F(m)$. We do this by showing that~$F(x^2m) = F(xm)$ for every
variable~$x$ and every monomial~$m$ of degree at most~$2d-2$. Fix such
a monomial~$m$ and let~$r$ and~$s$ be monomials of degree at
most~$d-1$ such that~$m = rs$. Note that~$m = p^2 - q^2$ where~$p =
(r+s)/2$ and~$q = (r-s)/2$, and both~$p^2$ and~$q^2$ have degree at
most~$2d-2$. It holds that

\begin{equation}\label{eq:pseudoexpectation1}
\begin{aligned}
F((x^2-x)m) &= F((x^2-x)(p^2 - q^2)) \\
& = F((x^2-x)p^2) + F((x-x^2)q^2) \\
& \geq 0, 
\end{aligned}
\end{equation}
\begin{equation}\label{eq:pseudoexpectation2}
\begin{aligned}
F((x^2-x)m) &= F((x^2-x)(p^2 - q^2)) \\
& = -F((x^2-x)q^2) - F((x-x^2)p^2) \\
& \leq 0,
\end{aligned}
\end{equation}
where the last inequalities in~(\ref{eq:pseudoexpectation1})
and~(\ref{eq:pseudoexpectation2}) follow from the fact that the
polynomials~$x^2-x$ and~$x-x^2$ are Boolean axioms so they belong
to~$\bar{Q}$ and~$d_{x^2-x} = d_{x-x^2} = 2d-2$.  Hence, by the
definition of a pseudoexpectation all the
values~$F((x^2-x)p^2)$,~$F((x-x^2)q^2)$,~$F((x^2-x)q^2)$
and~$F((x-x^2)p^2)$ are non-negative.

This shows that~$F((x^2-x)m) = 0$ and hence~$F(x^2m) = F(xm)$.

Now we show that~$0 \leq F(m) \leq 1$ for every monomial~$m$ of degree
at most~$d$. By the previous paragraph we have~$F(m) = F(m^2)$,
and~$F(m^2) \geq 0$ because~$m^2$ is a square of degree at
most~$2d$. The other inequality will be shown by induction on the
degree. For the empty monomial~$1$ we have~$F(1) = 1$. Now let~$m$ be
a monomial of degree at most~$d-1$ such that~$F(m) \leq 1$ and let~$x$
be a variable. It holds
that~$F(m) - F(xm) = F((1-x)m) = F((1-x)m^2) \geq 0$, and
hence~$F(xm) \leq F(m) \leq 1$.

Finally, let~$m$ be a monomial of degree at most~$2d$ and let~$r$
and~$s$ be monomials of degree at most~$d$ such that~$m = rs$. We
have~$F(r^2) + 2F(rs) + F(s^2) = F((r+s)^2) \geq 0$.
Therefore,~$2F(rs) \geq -F(r^2)-F(s^2) \geq -2$, so~$F(m) \geq
-1$. Similarly~$F(r^2) - 2F(rs) + F(s^2) = F((r-s)^2) \geq
0$. Therefore,~$2F(rs) \leq F(r^2) + F(s^2) \leq 2$, so~$F(m) \leq 1$.
\end{proof}

Let
\begin{equation}
\epsilon_{n,d} = \frac{1}{3} {n+2d \choose 2d}^{-1}. 
\end{equation}
Note that~$1/(3\epsilon_{n,d})$ is the number of monomials of degree~$2d$
over the set of~$n$ variables.  We are now ready to show that the
existence of a degree-$2d$~$\SOS$ refutation of a system of polynomial
inequalities with~$n$ variables is equivalent to the existence of
an~$\epsilon_{n,d}$-approximate such refutation.

\begin{proposition}\label{prop:relaxation}
  There exists an~$\SOS$ refutation of~$Q$ of degree~$2d$ if and only
  if there exists an~$\epsilon_{n,d}$-approximate~$\SOS$ refutation
  of~$Q$ of degree~$2d$, where~$n$ is the number of variables in~$Q$.
\end{proposition}

\begin{proof}
If~$Q$ has an~$\SOS$ refutation of degree~$2d$, then clearly it has
an~$\epsilon_{n,d}$-approximate refutation of degree~$2d$.

Now assume that~$Q$ has no~$\SOS$ refutation of
degree~$2d$. Therefore, by Corollary~\ref{cor:strongduality} there
exists a pseudoexpectation of degree~$2d$. Let us denote it
by~$F$. Suppose that~$Q$ has an~$\epsilon_{n,d}$-approximate~$\SOS$
refutation of degree~$2d$, i.e., there exists a set of sum of squares
polynomials~$\{ s_q : q \in \bar{Q} \}$ such that
\begin{equation}
\sum_{q \in \bar{Q}}qs_q = \sum_{\alpha}a_\alpha x^\alpha,
\end{equation}
where for each~$x^\alpha$ of degree at least~$1$, we have~$|a_\alpha|
\leq \epsilon_{n,d}$, and~$|1 + a_{\emptyset}| \leq \epsilon_{n,d}$.

Now, observe that~$F\bigl(\sum_{q \in \bar{Q}} q s_q\bigr) = \sum_{q
  \in \bar{Q}} F(q s_q) \geq 0$, while
\begin{equation}
F(\sum_{\alpha}a_\alpha x^\alpha) = a_{\emptyset} + \sum_{\alpha \neq \emptyset}a_\alpha F(x^\alpha) \leq -1 + \epsilon_{n,d} + {n+2d \choose 2d} \epsilon_{n,d}  \leq - {1 \over 3}.
\end{equation}
This contradiction finishes the proof.
\end{proof}

An~$\epsilon$-relaxation of a convex set~$K$ is either empty, which
clearly implies the emptiness of the set~$K$ itself, or it has volume
greater than~$\delta$ 
(see Lemma~\ref{lem:ball}),
where~$\delta$ can be easily computed by means
of~$\FPC$-interpretations from the representation of~$K$
and~$\epsilon$. We therefore get the
following:

\begin{corollary}
For every positive integer~$d$, there is an~$\FPC$-definable reduction
from the problem of deciding the existence of~$\SOS$ refutations of
degree~$2d$, to the weak feasibility problem for semidefinite sets.
\end{corollary}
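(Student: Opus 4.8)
The plan is to chain three ingredients that are already in place: the $\FPC$-interpretation of Fact~\ref{fact:reduction} that builds the semidefinite set $K_d(Q,-1)$, Proposition~\ref{prop:relaxation} which says that a degree-$2d$ $\SOS$ refutation of $Q$ exists exactly when an $\epsilon_{n,d}$-approximate one does, and Lemma~\ref{lem:ball} which gives a volume lower bound for $\epsilon$-relaxations of non-empty semidefinite sets. Concretely, starting from a representation of a set of polynomials $Q$ with variable index set $I$ and $n := |I|$, I would first apply Fact~\ref{fact:reduction} with $p = -1$ to produce a representation of the semidefinite set $K_d(Q,-1) \subseteq \reals^{I_d}$, recalling (as established when $K_d(Q,p)$ was introduced) that $K_d(Q,-1)$ is non-empty if and only if $Q$ has an $\SOS$ refutation of degree~$2d$. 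I would then compute the rational $\epsilon_{n,d} = 1/(3\binom{n+2d-1}{2d})$ --- this is $\FPC$-definable because $d$ is fixed, so the binomial coefficient is a product of a bounded number of cardinalities divided by a constant --- and produce a representation of the semidefinite set $K$ that is the $\epsilon_{n,d}$-relaxation of $K_d(Q,-1)$, obtained from the representation of $K_d(Q,-1)$ simply by shifting each right-hand side by $\pm\epsilon_{n,d}$.

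Two facts make this correct. First, by the correspondence between $\epsilon$-approximate degree-$2d$ $\SOS$ refutations and the points of the $\epsilon$-relaxation of $K_d(Q,-1)$, together with Proposition~\ref{prop:relaxation}, the set $K$ is non-empty if and only if $K_d(Q,-1)$ is non-empty, i.e.\ if and only if $Q$ has a degree-$2d$ $\SOS$ refutation. Second, Lemma~\ref{lem:ball} applied to $K_d(Q,-1)$ with $\epsilon = \epsilon_{n,d}$ shows that whenever $K_d(Q,-1)$ is non-empty, the volume of $K$ exceeds $\delta := \epsilon_{n,d}^{k}/(k!\,(2km)^{k})$, where $k := |I_d|$ and $m := \max\{\normtwo{A_i}\} \cup \{1\}$ ranges over the constraint matrices $A_i$ of $K_d(Q,-1)$. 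Hence $K$ obeys the dichotomy: either $K$ is empty, or $\mathrm{vol}(K) > \delta$. The rational $\delta$ is itself $\FPC$-computable from the representations of $K$ and of $\epsilon_{n,d}$ (cf.\ Lemma~\ref{lem:ball} and the remark preceding the corollary), since the powers and the factorial involved are polynomially-sized rationals produced by iterated multiplication over a definable ordered set.

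The reduction then outputs the pair $(K,\delta)$ as an instance of the weak feasibility problem for semidefinite sets. If a solution $(b,v)$ to this instance has $b = 1$, then $v \in \Sphere{K}{\delta}$, which forces $K$ to be non-empty since the $\delta$-ball around the empty set is empty; so $Q$ has a degree-$2d$ $\SOS$ refutation. If $b = 0$, then $\mathrm{vol}(K) \leq \delta$, and the dichotomy above forces $K$ to be empty; so $Q$ has no degree-$2d$ $\SOS$ refutation. Thus the bit $b$ of any solution to the weak feasibility instance $(K,\delta)$ equals the answer to the refutation-existence problem on $Q$, and composing with the trivial projection $(b,v) \mapsto b$ exhibits $Q \mapsto (K,\delta)$ as a reduction in the sense of Section~\ref{sec:preliminaries}. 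Each stage --- the interpretation of Fact~\ref{fact:reduction}, the computation of $n$, $\epsilon_{n,d}$, the relaxation, $k$, $m$ and $\delta$, and the assembly of the output structure over $L_{\SDP} \disjointunion L_{\rationals}$ --- is an $\FPC$-interpretation, and $\FPC$-interpretations compose, so the whole reduction is $\FPC$-definable. Note that $K$ need not be circumscribed, which is fine, since we only reduce \emph{to} the weak feasibility problem rather than claiming to solve it.

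I do not anticipate a genuine obstacle here, as the corollary consolidates results already established; the two points that demand care are (i) checking that the accuracy parameter $\delta$ really separates the two branches of the weak feasibility output --- this is exactly why the \emph{strict} inequality $\mathrm{vol}(K) > \delta$ of Lemma~\ref{lem:ball} is needed, as it makes the $b = 0$ branch, which only asserts $\mathrm{vol}(K) \leq \delta$, impossible when $K \neq \emptyset$ --- and (ii) the bookkeeping that $\epsilon_{n,d}$, $k$, $m$ and $\delta$ all remain $\FPC$-definable even though $\delta$ carries exponents that grow with the input size.
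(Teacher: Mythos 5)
Your proposal is correct and follows exactly the paper's approach: the paper's (one-sentence) proof likewise outputs the $\epsilon_{n,d}$-relaxation of $K_d(Q,-1)$ together with the volume bound $\delta$ from Lemma~\ref{lem:ball}, relying on Proposition~\ref{prop:relaxation} for the equivalence with exact refutations. Your write-up simply spells out the dichotomy and the $\FPC$-definability bookkeeping in more detail than the paper does.
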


\begin{proof}
The reduction is an~$\FPC$-interpretation which takes a set of
polynomials~$Q$ with~$n$ variables as input and outputs
the~$\epsilon_{n,d}$-relaxation of~$K_d(Q,-1)$ and a rational~$\delta
> 0$, such that either the~$\epsilon_{n,d}$-relaxation of~$K_d(Q,-1)$
is empty, or it has volume greater than~$\delta$.
\end{proof}

\section{Graph Isomorphism} \label{sec:isomorphism}

We formulate the isomorphism problem for graphs~$G$ and~$H$ as a
system~$\ISO(G,H)$ of quadratic polynomial equations with~$0/1$-valued
variables. Let~$U$ and~$V$ denote the sets of vertices of~$G$ and~$H$,
respectively, assumed to be disjoint. The atomic type of a tuple of points
in a relational structure is the complete description of the equalities and the
relations that the points in the tuple satisfy. In the special case of
graphs, these are the equalities and
the edge and non-edge relationships between the
vertices in the tuple. For~$u_1,u_2 \in U$, we write~$\tp_G(u_1,u_2)$
for the atomic type of~$(u_1,u_2)$ in~$G$.  Similarly,
for~$v_1,v_2 \in V$, we write~$\tp_H(v_1,v_2)$ for the atomic type
of~$(v_1,v_2)$ in~$H$. The system of equations has one~$0/1$-valued
variable~$x_{u,v}$ for each pair of vertices~$u \in U$ and~$v \in V$;
the intended meaning of~$x_{u,v} = 1$ is that the vertex~$u$ is mapped to~$v$
by a fixed isomorphism. The set of equations of~$\ISO(G,H)$ is the
following:
\begin{equation*}
\begin{array}{lll}
\sum_{v \in V} x_{u,v} - 1 = 0 & & \text{ for } u \in U, \\
\sum_{u \in U} x_{u,v} - 1 = 0 & & \text{ for } v \in V, \\
x_{u_1,v_1}x_{u_2,v_2} = 0 & & \text{ for } u_1,u_2 \in U, v_1,v_2 \in V \text{ s.t. }
\tp_G(u_1,u_2) \not= \tp_H(v_1,v_2).
\end{array}
\end{equation*}

When necessary, we think of the equations~$q = 0$ from~$\ISO(G,H)$ as
pairs of inequalities~$q \geq 0$ and~$-q \geq 0$.  It is
straightforward to check that the relational structure that
represents~$\ISO(G,H)$
can be produced from~$G$ and~$H$ by an~$\FPC$-interpretation.  As a
structure, the pair of graphs~$(G,H)$ is given by two sorts~$\bar{U}$
and~$\bar{V}$ for~$U$ and~$V$, and two binary relations~$E$ and~$F$ of
types~$\bar{U}\times\bar{U}$ and~$\bar{V} \times\bar{V}$ for the sets
of edges of~$G$ and~$H$, respectively. For sets of polynomial
equations and inequalities we use the representation described in
Section~\ref{sec:sos}.
 
\begin{fact} \label{fact:defiso} There is an~$\FPC$-interpretation
  that takes a pair of graphs~$(G,H)$ as input and outputs the set of
  equations~$\ISO(G,H)$.
\end{fact}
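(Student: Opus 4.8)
The plan is to write out the required interpretation directly; since producing $\ISO(G,H)$ from $G$ and $H$ involves neither search nor iteration, a first-order interpretation with a constant number of variables suffices, and this is in particular an $\FPC$-interpretation. Write $U$ for the vertex sort of $G$ and $V$ for the vertex sort of $H$ in the input structure. For the sort $\bar I$ of variable-indices of the output I would take the product $U \times V$, the pair $(u,v)$ serving as the index of the variable $x_{uv}$. For the sort $\bar K$ indexing the polynomials I would take the disjoint union of three definable pieces: a copy of $U$ for the row-sum equations $\sum_{v \in V} x_{uv} - 1 = 0$, a copy of $V$ for the column-sum equations $\sum_{u \in U} x_{uv} - 1 = 0$, and the set $\{ (u_1,u_2,v_1,v_2) \in U^2 \times V^2 : \tp_G(u_1,u_2) \neq \tp_H(v_1,v_2) \}$ for the equations $x_{u_1v_1} x_{u_2v_2} = 0$. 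This last set is \emph{quantifier-free} definable: the atomic type of a pair of vertices is determined by whether the two vertices are equal and which of the finitely many atomic formulas over the graph vocabulary they satisfy, so $\tp_G(u_1,u_2) \neq \tp_H(v_1,v_2)$ is a Boolean combination of atomic formulas in $u_1,u_2,v_1,v_2$. (If one prefers to present $\ISO(G,H)$ as a set of inequalities, as when it is fed to the $\SOS$ system, one simply duplicates each piece of $\bar K$, one copy for $p \geq 0$ and one for $-p \geq 0$; this changes nothing.)

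It remains to define the monomial sort $\bar M$ together with the relations $D$, $P_s$, $P_n$, $P_d$ of the polynomial representation, now also carrying the index from $\bar K$. For $\bar M$ I would take the disjoint union of a singleton for the constant monomial $1$, a copy of $U \times V$ for the degree-one monomials $x_{uv}$, and a copy of the set of products $x_{u_1v_1}x_{u_2v_2}$ occurring in the third family; as such a product is unordered, the last piece is realised as a quotient, which the interpretation formalism accommodates through its sort-equality formula $\epsilon_{\bar M}$. The degree relation $D$ assigns the all-zero exponent tuple to $1$, exponent $1$ at index $(u,v)$ to $x_{uv}$, and exponent $1$ at each of the indices $(u_1,v_1)$ and $(u_2,v_2)$ to the product $x_{u_1v_1}x_{u_2v_2}$. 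The coefficient relations $P_s, P_n, P_d$ record coefficient $-1$ for the monomial $1$ inside any row- or column-sum polynomial, coefficient $1$ for $x_{uv}$ inside the row polynomial of $u$ and inside the column polynomial of $v$, coefficient $1$ for a quadratic monomial inside its own equation, and coefficient $0$ in all other cases. All the numbers that occur lie in $\{-1,0,1,2\}$ and are encoded in binary over two bit positions, so every defining formula is first-order — in fact quantifier-free once the order on the bit positions is available — and uses a bounded number of variables.

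Correctness is then immediate from the construction: the output structure is, up to isomorphism, the representation of the set of equations $\ISO(G,H)$ described above, and the interpretation respects isomorphisms of the input. I do not expect a genuine obstacle here; the only point that requires care is the bookkeeping of the many-sorted encoding, namely matching the three disjoint-union pieces of $\bar K$ and of $\bar M$ against the three families of equations and against the $D$, $P_s$, $P_n$, $P_d$ relations.
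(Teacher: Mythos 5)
Your construction is correct and is exactly the routine interpretation the paper has in mind — the paper gives no proof at all, simply asserting the fact is "straightforward to check." Your detailed bookkeeping of the sorts, the quotient for unordered quadratic monomials, and the quantifier-free definability of the atomic-type condition fills in precisely what the paper leaves implicit.
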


An SOS proof that~$G$ and~$H$ are not isomorphic is an SOS refutation
of~$\ISO(G,H)$. A Sherali-Adams ($\SA$) proof that~$G$ and~$H$ are not
isomorphic is an~$\SA$ proof of the inequality~$-1 \geq 0$
from~$\ISO(G,H)$, where an~$\SA$ proof is an identity of the
type~\eqref{eqn:sos} in which the polynomials~$s_q$ are not
sums-of-squares but sums of extended monomials, i.e., polynomials of
the
form~$\sum_{i \in I} c_i\prod_{j \in J_i} x_j \prod_{k \in
  K_i}(1-x_k)$ where each~$c_i$ is a positive real, and each~$J_i$
and~$K_i$ is a subset of indices of variables.
A Polynomial Calculus~($\PC$) proof that~$G$ and~$H$ are not
isomorphic is a~$\PC$ proof of the equation~$-1 = 0$ from the system
of polynomial equations~$\ISO(G,H)$, where by~$\PC$ we mean the
(deductive) proof system for deriving polynomial equations
over~$\reals[x_1,\ldots,x_n]$ by means of the following inference
rules:
from nothing derive the axiom polynomial equation~$x^2 - x = 0$, from
the equations~$p = 0$ and~$q = 0$ derive the equation~$p+q = 0$, and
from the equation~$p = 0$ derive the equations~$ap = 0$ and~$xp = 0$,
where~$p$ and~$q$ are polynomials,~$a$ is a real, and~$x$ is a
variable.  In monomial~$\PC$, as defined in~\cite{GroheBerkholz15},
the polynomial~$p$ in the last rule is required to be either a
monomial, or a product of a monomial with one of the polynomials
from the set of hypotheses (in our case~$\ISO(G,H)$), or a product
of a monomial and an axiom polynomial~$x^2-x$.

We rely on the following facts from~\cite{Atseriasdoi10}
and~\cite{GroheBerkholz15}:

\begin{theorem} \label{thm:halfAMandhalfBG}
Let~$G$ and~$H$ be graphs and let~$k$ be a positive integer.  The
following are equivalent:
\begin{enumerate} \itemsep=0pt
\item $G \equiv^{k} H$, i.e.,~$G$ and~$H$ cannot be distinguished
  by~$\Cinf{k}$-sentences,
\item there is no degree-$k$~$\SA$ proof that~$G$ and~$H$ are not
  isomorphic,
\item there is no degree-$k$ monomial~$\PC$ proof that~$G$ and~$H$ are
  not isomorphic.
\end{enumerate}
\end{theorem}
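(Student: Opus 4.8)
The plan is to prove the two non-obvious equivalences separately, namely $\neg(1)\Leftrightarrow\neg(2)$ and $\neg(1)\Leftrightarrow\neg(3)$; the equivalence $(2)\Leftrightarrow(3)$ then comes for free. Both rest on a single bridge: the level-$k$ Sherali--Adams relaxation of the isomorphism polytope of $G$ and $H$ is feasible exactly when $G\equiv^k H$, combined with LP duality to pass from ``LP infeasible'' to ``a degree-$k$ refutation exists''.

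For $\neg(1)\Leftrightarrow\neg(2)$, I would first write down the level-$k$ Sherali--Adams linear program of $\ISO(G,H)$: its variables are the lifted moments $y_S$ indexed by partial maps $S$ from $U$ to $V$ of size at most $k$, the constraints are the linearizations of the products of the defining polynomials of $\ISO(G,H)$ and the Boolean axioms with products of literals $x_{uv}$, $1-x_{uv}$, up to total degree $k$, together with the normalization $y_\emptyset=1$. The statement ``this LP is feasible iff $G\equiv^k H$'' is exactly the Atserias--Maneva correspondence~\cite{Atseriasdoi10}, proved by matching feasible fractional points with winning Duplicator strategies in the bijective $k$-pebble game and then invoking the classical equivalence between that game and indistinguishability in $\Cinf{k}$. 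Since this LP is finite-dimensional, Farkas' lemma turns its infeasibility into a nonnegative combination of the lifted constraints that sums to $-1$; unwinding the linearization, such a combination is precisely a degree-$k$ $\SA$ proof of $-1\ge 0$ from $\ISO(G,H)$ in the sense of the excerpt. Chaining these equivalences and taking contrapositives yields $(1)\Leftrightarrow(2)$.

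For $\neg(1)\Leftrightarrow\neg(3)$ one direction is an easy simulation: a degree-$k$ $\SA$ proof $\sum_j c_j M_j p_j = -1$, with each $M_j$ a product of literals and each $p_j$ a hypothesis or Boolean axiom, is rewritten as a degree-$k$ monomial $\PC$ derivation of $-1=0$ over $\reals$ by expanding each $M_j$ into monomials, deriving every (monomial)$\,\cdot p_j=0$ through repeated ``multiply by a variable'' steps applied to the hypotheses and the axioms $x^2-x=0$, and finishing with one linear-combination step; no intermediate polynomial exceeds degree $k$. Hence $\neg(2)\Rightarrow\neg(3)$, and with the previous paragraph $\neg(1)\Rightarrow\neg(3)$. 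The converse $\neg(3)\Rightarrow\neg(1)$ is the substantial input: if $G\equiv^k H$ then $\ISO(G,H)$ has no degree-$k$ monomial $\PC$ refutation over $\reals$. This is the Berkholz--Grohe degree lower bound~\cite{GroheBerkholz15}; its proof again converts a winning Duplicator strategy in the bijective $k$-pebble game into a linear functional on degree-$\le k$ polynomials that normalizes $1$ to $1$ and is nonnegative (indeed consistent with the derivation rules) on the polynomials labelling derivable equations, so it cannot send $-1$ to $0$. Contrapositives give $(1)\Leftrightarrow(3)$, completing the theorem.

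The main obstacle is the hard direction of each of the two equivalences: $\neg(1)\Rightarrow\neg(2)$ as packaged inside the Atserias--Maneva correspondence, and $\neg(3)\Rightarrow\neg(1)$, the Berkholz--Grohe monomial $\PC$ lower bound. Both boil down to turning a winning Duplicator strategy in the bijective $k$-pebble game into, respectively, a feasible Sherali--Adams fractional point and a family of pseudo-moments additionally closed under the monomial $\PC$ rules, and the delicate point is checking that the ``multiply by a variable'' and axiom rules never push the degree past what a $k$-pebble game controls --- which is exactly why it is \emph{monomial} $\PC$, rather than full $\PC$, that lines up with $\Cinf{k}$ at the same degree. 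As these are precisely the facts imported from~\cite{Atseriasdoi10} and~\cite{GroheBerkholz15}, for the present paper it is enough to cite them; the above records the route by which they are established.
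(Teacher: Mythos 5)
The paper does not prove this theorem at all: it is imported as a black box from \cite{Atseriasdoi10} and \cite{GroheBerkholz15}, so there is no in-paper argument to compare against. Your overall route --- the level-$k$ Sherali--Adams relaxation of $\ISO(G,H)$ is feasible iff $G\equiv^k H$ (via Duplicator strategies in the bijective $k$-pebble game), plus Farkas/LP duality to convert infeasibility into a degree-$k$ $\SA$ refutation, plus the Berkholz--Grohe characterization of monomial $\PC$ degree --- is indeed the route by which these facts are established in the cited works, and citing them is all the paper needs.

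There is, however, one genuine gap in your sketch: the ``easy simulation'' $\neg(2)\Rightarrow\neg(3)$ does not work as described. A degree-$k$ $\SA$ refutation is an identity $\sum_j p_j s_j=-1$ in which some of the $p_j$ may be the Boolean axioms $x_i$ and $1-x_i$. These are \emph{inequalities}; they have no counterpart among the $\PC$ hypotheses (which are the equations of $\ISO(G,H)$) or the $\PC$ axioms $x^2-x=0$. So while the terms with $p_j\in \ISO(G,H)\cup\{x_i^2-x_i,\,x_i-x_i^2\}$ can be derived as equations $m\cdot p_j=0$ in monomial $\PC$ and summed, the remaining terms $c_jM_jx_i$ and $c_jM_j(1-x_i)$ are positive combinations of products of literals that are \emph{not} in the ideal generated by the hypotheses and axioms; your rewriting therefore derives $P+1=0$ for some such nonnegative polynomial $P$, not $-1=0$. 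This is precisely why the paper's introduction says the $\SA$-to-monomial-$\PC$ implication is obtained ``by combining'' the two cited results (i.e., through the logic characterization), not by a syntactic simulation. The repair is cheap: \cite{GroheBerkholz15} gives \emph{both} directions of $\neg(3)\Leftrightarrow\neg(1)$ (existence of a degree-$k$ monomial $\PC$ refutation iff $G\not\equiv^k H$), so you should take the whole equivalence from there and drop the direct simulation; together with your first paragraph this closes the cycle. A minor further caution: the clean alignment of the degree/level parameter $k$ with the number of variables of $\Cinf{k}$ involves bookkeeping (off-by-one conventions differ across the cited papers), which is worth flagging if one actually reconstructs the proofs rather than citing them.
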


\noindent To be precise, the main result in \cite{Atseriasdoi10} is
stated for the formulation of the graph isomorphism problem as a
system of \emph{linear} equations with~$0/1$-valued variables.  For
that encoding, the correspondence between~$\equiv^k$-equivalence and
the non-existence of degree-$k$~$\SA$ proofs is not exact but only a
tight sandwich: if there is no degree-$k$~$\SA$ proof that~$G$ and~$H$
are not isomorphic then~$G \equiv^k H$, and if~$G \equiv^k H$ then
there is no degree-$(k-1)$~$\SA$ proof that~$G$ and~$H$ are not
isomorphic. However, it follows from the methods
in~\cite{Atseriasdoi10} and~\cite{GroheBerkholz15} that, for the
quadratic encoding used here, Theorem~\ref{thm:halfAMandhalfBG} holds
as stated.  For the collapse result we are about to prove, we use
Corollary~\ref{cor:decidingproofs} and the
implication~2. implies~1. from Theorem~\ref{thm:halfAMandhalfBG}.

\begin{theorem} \label{thm:collapse} There exists an integer~$c$ such
  that, for all pairs of graphs~$G$ and~$H$ and all positive
  integers~$d$, if there is a degree-$2d$~$\SOS$ proof that~$G$
  and~$H$ are not isomorphic, then there is a degree-$cd$~$\SA$ proof
  that~$G$ and~$H$ are not isomorphic.
\end{theorem}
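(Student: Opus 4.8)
The plan is to run the chain of results from the previous sections backwards, turning the $\Cinf{\omega}$-definability of $\SOS$-refutation-existence into a statement about $\Cinf{}$-inequivalence of graphs, and then invoking Theorem~\ref{thm:halfAMandhalfBG}. The first move is purely bookkeeping. By Fact~\ref{fact:defiso}, the map $(G,H) \mapsto \ISO(G,H)$ is computed by one fixed $\FPC$-interpretation of constant arity, whose input I take to be the two-sorted structure $\mathbb{M}(G,H)$ that carries the vertices of $G$ in one sort with the edge relation of $G$, the vertices of $H$ in a second sort with the edge relation of $H$, and \emph{no relation symbol of mixed type} (the atomic-type test $\tp_G(u_1,u_2)\neq\tp_H(v_1,v_2)$ needed by the interpretation uses only the two edge relations and equality). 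Composing this interpretation with the $\Cinf{\omega}$-sentence from Corollary~\ref{cor:decidingproofs} deciding the existence of a degree-$2d$ $\SOS$ refutation, and using the standard bound on how variable counts behave under composition and under the $\FPC$-to-$\Cinf{}$ translation, I get an integer constant $c$ (depending only on the constant in Corollary~\ref{cor:decidingproofs} and the arity of the interpretation of Fact~\ref{fact:defiso}) and, for each $d$, a $\Cinf{cd}$-sentence $\chi_d$ over the vocabulary of $\mathbb{M}(\cdot,\cdot)$ such that $\mathbb{M}(G,H)\models\chi_d$ if and only if there is a degree-$2d$ $\SOS$ proof that $G$ and $H$ are not isomorphic. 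I will show this $c$ works in the statement.

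The key step is to show that $G \equiv^{cd} H$ implies $\mathbb{M}(G,H) \equiv^{cd} \mathbb{M}(H,H)$. This is a routine product argument for the bijective $k$-pebble counting game: Duplicator plays on $\mathbb{M}(G,H)$ versus $\mathbb{M}(H,H)$ by using a winning Duplicator strategy from the game witnessing $G \equiv^{cd} H$ on the first sort and the identity bijection on the second sort; combining the two gives a bijection of the full (many-sorted) domains, and since no relation symbol mixes the sorts, every queried atomic type is determined sort-by-sort and hence preserved. Consequently $\chi_d \in \Cinf{cd}$ has the same truth value on $\mathbb{M}(G,H)$ and on $\mathbb{M}(H,H)$ whenever $G \equiv^{cd} H$.

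To conclude, suppose there is a degree-$2d$ $\SOS$ proof that $G$ and $H$ are not isomorphic, so $\mathbb{M}(G,H)\models\chi_d$. If $G \equiv^{cd} H$ held, then $\mathbb{M}(H,H)\models\chi_d$ as well, i.e.\ there would be a degree-$2d$ $\SOS$ refutation of $\ISO(H,H)$; but $\ISO(H,H)$ has the $0/1$-solution given by the identity permutation matrix, and evaluating any putative refutation $\sum_j p_j s_j = -1$ at that point gives a nonnegative left-hand side, a contradiction. Hence $G \not\equiv^{cd} H$, and by the equivalence of items~\emph{1} and~\emph{2} in Theorem~\ref{thm:halfAMandhalfBG} there is a degree-$cd$ $\SA$ proof that $G$ and $H$ are not isomorphic.

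The main obstacle I anticipate is not a single hard step but making all the constants line up: fixing the precise vocabulary of $\mathbb{M}(\cdot,\cdot)$ so that the $\ISO$-interpretation is genuinely an interpretation over it (in particular it must use no order or other auxiliary structure on the combined vertex set), checking that the arity and variable count of that interpretation are constants independent of $d$ so that $\chi_d$ stays in $\Cinf{O(d)}$, and writing out the product argument for the many-sorted bijective pebble game. None of these is deep, but each must be handled carefully for the constant $c$ to be well defined and independent of $d$.
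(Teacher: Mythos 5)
Your proposal is correct and takes essentially the same approach as the paper's proof: compose the $\FPC$-interpretation of Fact~\ref{fact:defiso} with the $\Cinf{\omega}$-sentence of Corollary~\ref{cor:decidingproofs} to get a $\Cinf{cd}$-sentence deciding degree-$2d$ $\SOS$-refutability, then argue by contraposition using $G\equiv^{cd}H \Rightarrow (G,H)\equiv^{cd}(H,H)$ (the paper compares with $(G,G)$, an immaterial difference) and the equivalence of items~1 and~2 in Theorem~\ref{thm:halfAMandhalfBG}. You merely make explicit two details the paper leaves implicit, namely the sort-by-sort product argument for the bijective pebble game and the observation that $\ISO(H,H)$ has a $0/1$-solution and hence no $\SOS$ refutation.
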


\begin{proof}
  Fix a positive integer~$d$. Let~$\Phi$ be the FPC-interpretation
  from Fact~\ref{fact:defiso} and compose it with
  the~$\Cinf{\omega}$-sentence from Corollary~\ref{cor:decidingproofs}
  that decides the existence of SOS refutations of degree~$2d$. The
  resulting sentence~$\phi$ is a sentence of~$\Cinf{k}$, where~$k =
  cd$ for an integer~$c$ that is independent of~$d$.  The
  sentence~$\phi$ was designed in such a way that for every pair of
  graphs~$G$ and~$H$ it holds that~$(G,H) \models \phi$ if and only if
  there is a degree-$2d$ SOS proof that~$G$ and~$H$ are not
  isomorphic. In particular, since there certainly is no degree-$2d$
  SOS proof that~$G$ is not isomorphic to an isomorphic copy of
  itself, we have~$(G',G) \models \neg\phi$, where~$G'$ is an isomorphic
  copy of~$G$ on a disjoint set of vertices.
   Now assume that there is no
  degree-$k$~$\SA$ proof that~$G$ and~$H$ are not isomorphic. We
  get~$G \equiv^{k} H$ by Theorem~\ref{thm:halfAMandhalfBG}, from
  which it follows that~$(G',G) \equiv^{k} (G,H)$ because~$G' \cong G$
  and hence~$G' \equiv^k G$, and~$G \equiv^k H$.  Since~$\phi$ is
  a~$\Cinf{k}$-sentence and~$(G',G) \models \neg\phi$ we get~$(G,H)
  \models \neg\phi$. Therefore, by design of~$\phi$, there is no
  degree-$2d$ SOS proof that~$G$ and~$H$ are not isomorphic.
\end{proof}

Next we use the result of Berkholz
\cite{DBLP:journals/eccc/Berkholz17}
showing that, for systems of polynomial-equations over~$0/1$-valued
variables,~$\SOS$ simulates~$\PC$.

\begin{theorem}[\cite{DBLP:journals/eccc/Berkholz17}]\label{thm:SOSsimulatesPC}
  Let~$Q$ be a system of polynomial equations with real coefficients
  over~$0/1$-valued variables and let~$d$ be a positive
  integer. If~$Q$ has a~$\PC$ refutation of degree~$d$, then~$Q$ has
  an~$\SOS$ refutation of degree~$2d+1$.
\end{theorem}

The discrepancy between the~$2d+1$ in the conclusion of
Theorem~\ref{thm:SOSsimulatesPC} and the~$2d$ in the conclusion of
Theorem~1.1 from~\cite{DBLP:journals/eccc/Berkholz17} is due to a
small difference between our definition of~$\SOS$ and the variant
of~$\SOS$ used in~\cite{DBLP:journals/eccc/Berkholz17}. We discuss
this next. We focus on the case of polynomial equations, which is the
subject of the above theorem.  We denote the variant by~$\SOS'$.

Given a system of polynomial
equations~$Q = \{ q_i = 0 : i \in [k]\}$ over Boolean variables
and a
polynomial~$q$, an $\SOS'$ proof of~$q \geq 0$
from~$Q$
is a sequence of
polynomials~$(g_1,\ldots,g_k, h_1, \ldots, h_n, s_0)$ that satisfy
an identity
\begin{equation} \label{eqn:sosprime}
\sum_{i \in [k]}  q_i g_i + \sum_{j \in [n]}  (x^2_j - x_j) h_j + s_0 = q,
\end{equation}
where the polynomial~$s_0$ is a sum of squares of polynomials.  To be
able to compare~$\SOS$ with~$\SOS'$ we view each equation~$q_i = 0$
in~$Q$ as two inequalities~$q_i \geq 0$ and~$-q_i \geq 0$.  Hence,
an~$\SOS$ proof of~$q \geq 0$ from~$Q$ is a
set $\{s_1,\ldots,s_m\}$ of sum of squares polynomials that satisfy
the identity~$\sum_{j \in [m]} p_j s_j = q$, where, for
every~$j \in [m]$, the polynomial~$p_j$ either is in the
set~$\{ q_1, -q_1, \ldots, q_k, -q_k \}$ or is one of the Boolean
axioms listed in~\eqref{eq:axioms}.

\begin{lemma}
Let~$Q$ be a system of polynomial equations
  over~$0/1$-valued variables. If~$q \geq 0$ has an $\SOS'$
  proof from~$Q$ of degree $2d$, then it has an $\SOS$ proof
  from~$Q$ of degree at most $2d+1$.
\end{lemma}

\begin{proof}
  For any polynomial~$p$ and any monomial~$m$, such
  that~$\deg(pm) \leq 2d$, we will show that the product~$pm$ can be
  written as~$pm = p s + (-p)s'$, where~$s$ and~$s'$ are sums of
  squares of polynomials and~$\deg(ps) = \deg(-ps') \leq 2d + 1$. This
  last fact implies that the left-hand side of any degree-$2d$~$\SOS'$
  proof as in~\eqref{eqn:sosprime} can be rewritten as follows
\begin{equation}
\begin{aligned}
\sum_{i \in [k]}  q_i s_i + \sum_{i \in [k]}  (-q_i) s'_i + \sum_{j \in [n]}  (x^2_j - x_j) z_j + \sum_{j \in [n]}  (x_j - x^2_j) z'_j + s_0,
\end{aligned}
\end{equation}
where, in the
sequence~$(s_1, \ldots, s_k, s'_1, \ldots, s'_k, z_1, \ldots, z_n,
z'_1, \ldots, z'_n,s_0)$, all the polynomials are sums of squares.
As a result of this rewriting we obtain
an~$\SOS$ proof of~$q \geq 0$
from~$\{q_1,-q_1,\ldots,q_k,-q_k\}$, and
hence an~$\SOS$ proof of~$q \geq 0$
from~$Q$ by the convention that we introduced to
be able to compare~$\SOS$ with~$\SOS'$. Note that the degree of the
proof increases by at most~$1$.

Take any polynomial~$p$ and any monomial~$m$, such
that~$\deg(pm) \leq 2d$. Let~$r$ and~$t$ be
monomials such that $m = rt$ and $|\deg(r)-\deg(t)| \leq 1$.
Then we have $m = s - s'$, where $s = ((r+t)/2)^2$ and 
$s' = ((r-t)/2)^2$. Moreover, $\deg(s) = \deg(s') \leq \deg(m) +1$.
We obtain, $pm = p s + (-p)s'$,
where~$s$ and~$s'$ are sums of squares of polynomials and
$\deg(ps) = \deg(-ps') \leq 2d + 1$, which finishes the proof.
\end{proof}

For graphs~$G$ and~$H$,
let~$\mathrm{sos}(G,H)$,~$\mathrm{sa}(G,H)$,~$\mathrm{monpc}(G,H)$
and~$\mathrm{pc}(G,H)$ denote the smallest degrees for
which~$\SOS$,~$\SA$, monomial~$\PC$ and~$\PC$ prove that~$G$ and~$H$
are not isomorphic, respectively, taken as~$\infty$ if the graphs are
isomorphic.  Combining
Theorems~\ref{thm:SOSsimulatesPC}, \ref{thm:halfAMandhalfBG}, \ref{thm:collapse},
we get a full cycle of implications.

\begin{corollary}
  There exists an integer constant~$c$ such that, for all pairs of
  graphs~$G$ and~$H$, the following inequalities hold:
\begin{equation}
\tfrac{1}{2} \cdot (\mathrm{sos}(G,H)-1) \leq 
\mathrm{pc}(G,H) \leq 
\mathrm{monpc}(G,H) \leq 
\mathrm{sa}(G,H) \leq 
\tfrac{c}{2} \cdot \mathrm{sos}(G,H).
\end{equation}
\end{corollary}

Let us now state the collapse for the \emph{primals}. 
Recall from Subsection~\ref{subsec:lasserre} that
the Lasserre SDP relaxation of a polynomial optimization problem
is defined to be
a primal-dual pair of semidefinite
programs.
However, it is the primal that is most often referred to as the
Lasserre relaxation. This is the terminology we will use now.
For a positive integer~$k$, let~$\mathrm{LA}_k(G,H)$
denote the 
level-$k$ Lasserre relaxation of~$\ISO(G,H)$, i.e., the
primal in the primal-dual
SDP-pair~$(P_k(\ISO(G,H) ),D_k(\ISO(G,H) ))$ as defined in
Subsection~\ref{subsec:approx}. 
By Lemma~\ref{lem:refutation} and
the strong duality implied by Lemma~\ref{lem:linearcombination} and
Theorem~\ref{thm:strongduality}, for every positive integer~$d$ it holds
that~$\mathrm{LA}_{2d}(G,H)$ is feasible if and only if there is no
degree-$2d$~$\SOS$ proof that~$G$ and~$H$ are not isomorphic.
Similarly, we write~$\mathrm{SA}_k(G,H)$ to denote the primal in the
primal-dual LP-pair corresponding to the level-$k$ Sherali-Adams
relaxation of~$\ISO(G,H)$ as defined
in~\cite[Section~4]{SheraliAdams1990} for generic systems of
polynomial constraints over~$0/1$-valued variables. We refer to it as
the level-$k$ Sherali-Adams relaxation of~$\ISO(G,H)$. In this case,
strong duality holds by the duality theorem for linear programming,
and the dual solutions are degree-$k$~$\SA$ refutations
of~$\ISO(G,H)$. It follows that, for every positive~$d$, the linear
program~$\mathrm{SA}_d(G,H)$ is feasible if and only if there is no
degree-$d$~$\SA$ proof that~$G$ and~$H$ are not
isomorphic. Theorem~\ref{thm:collapse} gives then the following.

\begin{corollary} \label{cor:indirect} There exists an integer~$c$
  such that, for all pairs of graphs~$G$ and~$H$ and all positive
  integers~$d$, if the level-$2d$ Lasserre relaxation of~$\ISO(G,H)$
  is infeasible, then the level-$cd$ Sherali-Adams relaxation
  of~$\ISO(G,H)$ is infeasible.
\end{corollary}

As mentioned in the introduction, our proof of
Corollary~\ref{cor:indirect} is very indirect as it goes through many
black boxes. It would be very instructive to find a concrete and
direct way of lifting feasible LP-solutions of~$\mathrm{SA}_{cd}(G,H)$
to feasible SDP-solutions
of~$\mathrm{LA}_{2d}(G,H)$. Corollary~\ref{cor:indirect} and the fact
that its indirect proof is nonetheless constructive imply that such a
direct way of lifting does, in principle, exist.

\bigskip \noindent\textbf{Acknowledgments.} We are grateful to
Christoph Berkholz, Anuj Dawar, and Wied Pakusa, for useful
discussions at an early stage of this work.  We are also grateful to
Aaron Potechin for pointing out that the ability of the Lasserre
hierarchy to capture spectral arguments was relevant for our result.
Special thanks go to Moritz M\"uller for carefully reading and
commenting on a preliminary version of this paper.
First author partially funded by the
      European Research Council (ERC) under the European Union's
      Horizon 2020 research and innovation programme, grant agreement
      ERC-2014-CoG 648276 (AUTAR) and MICCIN grant TIN2016-76573-C2-1P
      (TASSAT3) and AEI grant PID2019-109137GB-C22 (PROOFS).  The work
      of second author on this manuscript is a part of the project
      BOBR that has received funding from the European Research
      Council (ERC) under the European Union's Horizon 2020 research
      and innovation programme (grant agreement No.~948057).  Second
      author supported also by the French Agence Nationale de la
      Recherche, QUID project reference ANR-18-CE40-0031. Part of this
      work was done while the second author was visiting UPC funded by
      AUTAR.

\bibliographystyle{plain}
\bibliography{bibfileforthis}

\end{document}